\def\@seccntformat#1{\csname the#1\endcsname.\quad}
\newcommand{\srcsize}{\@setfontsize{\srcsize}{3pt}{3pt}}
\newcommand{\srcsizetwo}{\@setfontsize{\srcsizetwo}{2pt}{2pt}}
\newcommand{\gsii}{$\textup{[GS]}$}
\DeclareMathOperator{\ess}{ess}
\newcommand{\relations}{\operatorname{rel}}
\newcommand{\novel}{\mathfrak f}
\newcommand{\posint}{\mbb Z_{\mdoubleplus}}
\newcommand{\nnint}{{\mbb Z}_{\mplus}}
\newcommand{\posrat}{\mbb Q_{\mdoubleplus}}
\newcommand{\posreal}{\mbb R_{\mdoubleplus}}
\newcommand{\nnrat}{{\mbb Q}_{\mplus}}
\newcommand{\nnreal}{{\mbb R}_{\mplus}}
\newcommand\mdoubleplus{\text{\srcsize$+\mkern-2mu+$}}
\newcommand\mplus{\text{\srcsize$+$}}
\newcommand{\lightercolor}[3]{
    \colorlet{#3}{#1!#2!white}
}
\newcommand{\spann}{\operatorname{span}}
\newcommand{\reg}{\operatorname{reg}}
\newcommand{\nov}{\operatorname{nov}}
\newcommand{\test}{\operatorname{test}}
\newcommand{\Ext}{\operatorname{ext}}
\newcommand{\precb}{\mathbin{\prec}}
\newcommand{\preceqb}{\mathbin{\preceq}}
\newcommand{\countof}{\mathbin{\sharp}\hskip1pt}
\newcommand{\ext}{\mathrel{\mc R}}
\newcommand{\sext}{\mathrel{\mc P}}
\newcommand{\next}{\mathrel{\mc I}}
\newcommand{\supext}{{\ext}}
\newcommand{\extb}{\mathbin{\mc R}}
\newcommand{\sextb}{\mathbin{\mc P}}
\newcommand{\nextb}{\mathbin{\mc I}}
\newcommand{\hext}{\mathrel{\hat{\mathrel{\mathcal R}}}}
\newcommand{\hsext}{\mathrel{\hat{\mathrel{\mathcal P}}}}
\newcommand{\hnext}{\mathrel{\hat{\mathrel{\mathcal I}}}}
\newcommand{\hextb}{\mathbin{\hat{\mathbin{\mathcal R}}}}
\newcommand{\hsextb}{\mathbin{\hat{\mathbin{\mathcal P}}}}
\newcommand{\aext}{\mathrel{\acute{\mathrel{\mathcal R}}}}
\newcommand{\asext}{\mathrel{\acute{\mathrel{\mathcal P}}}}
\newcommand{\anext}{\mathrel{\acute{\mathrel{\mathcal I}}}}
\newcommand{\aextb}{\mathbin{\acute{\mathbin{\mathcal R}}}}
\newcommand{\asextb}{\mathbin{\acute{\mathbin{\mathcal P}}}}
\newcommand{\cext}{\mathrel{\check{\mathrel{\mathcal R}}}}
\newcommand{\cextb}{\mathbin{\check{\mathbin{\mathcal R}}}}
\newcommand{\total}{\textup{total}}
\newcommand{\mbbd}{{\mathds D}}
\newcommand{\mbbdp}{{\mathds D^{\novel}}}
\newcommand{\mbbdpp}{{\mathfrak D}}
\newcommand{\mbbc}{{\mathds C}}
\newcommand{\mbbcp}{{\mathds C^{\novel}}}
\newcommand{\mbbcpp}{{\mathfrak C}}
\newcommand{\mbbt}{{\mathds {T}}}
\newcommand{\mbbtp}{{\mathds{T} ^ \novel }}
\newcommand{\mbbtpp}{{\mathfrak{T}}}
\newcommand{\mbbi}{{\mathds L}}
\newcommand{\mbbip}{{\mathds{L}^{\novel}}}
\newcommand{\mbbipp}{{\mathfrak L}}
\newcommand{\mbbj}{\mathds J}
\newcommand{\mbbjp}{{\mathds {J}^{\novel}}}
\newcommand{\mbbjpp}{\mathfrak I}
\newcommand{\current}{{C^\star}}
\newcommand{\past}{{D^\star}}
 \newcommand{\lbc}{\left\{}
\newcommand{\rbc}{\right\}}
\newcommand{\lb}{\left\{}
\newcommand{\rb}{\right\}}
\renewcommand{\ij}{{(i, j)}}
\newcommand{\xy}{{(x, y)}}
\newcommand{\yx}{{(y, x)}}
\newcommand{\yz}{{(y,z)}}
\newcommand{\zy}{{(z,y)}}
\newcommand{\xz}{{(x,z)}}
\newcommand{\zx}{{(z,x)}}
\newcommand{\xw}{{(x,w)}}
\newcommand{\wx}{{(w,x)}}
\newcommand{\yw}{{(y,w)}}
\newcommand{\wy}{{(w,y)}}
\newcommand{\zw}{(z,w)}
\newcommand{\wz}{(w,z)}
\newcommand{\xpw}{(x^{\prime},w)}
\newcommand{\xpy}{(x^{\prime},y)}
\newcommand{\xpz}{(x^{\prime},z)}
\newcommand{\dd}{{(\cdot,\cdot)}}
\renewcommand{\v}{{\mathbf{ v }}}
\newcommand{\bmu}{\bm{\upmu}}
\newcommand{\fourpru}{\textit{4}-\textup{prudence}}
\newcommand{\threepru}{\textit{3}-\textup{prudence}}
\newcommand{\parthreediv}{\textup{partial-\textit{3}-diversity}}
\newcommand{\Parthreediv}{\textup{Partial-\textit{3}-diversity}}
\newcommand{\condtwodiv}{\textup{conditional-\textit{2}-diversity}}
\newcommand{\Condtwodiv}{\textup{Conditional-\textit{2}-diversity}}
\newcommand{\twodiv}{\textit{2}-\textup{diversity}}
\newcommand{\fourdiv}{\textit{4}-\textup{diversity}}
\newcommand{\threediv}{\textit{3}-\textup{diversity}}
\newcommand{\fourjac}{\textup{\textit{4}-Jac}}
\newcommand{\threejac}{\textup{\textit{3}-Jac}}
\newcommand{\half}{\frac{1}{2}}
\DeclareMathOperator{\rank}{rank}
\DeclareMathOperator{\image}{image}
\renewcommand\maketitle
\title{\MakeUppercase{Second-order Inductive Inference:\\ an axiomatic
    approach}\footnote{I thank Itzhak Gilboa for encouraging this project at an
    early stage. I am indebted to Sacha Bourgeois-Gironde for many conversations
    on the neuroscientific evidence for case-based reasoning and for introducing
    me to the related references that appear in the text.

    This version has time stamp \currenttime~(AEST), \today. The most recent
    version (as well as past versions) can be found at
    \url{https://arxiv.org/abs/1904.02934} .}}
\author{\large\textsc{By Patrick H. O'Callaghan}\footnote{Email address
    \href{mailto:p.ocallaghan@uq.edu.au}{\texttt{p.ocallaghan@uq.edu.au}} and
    ORCID iD \href{http://orcid.org/0000-0003-0606-5465}{
      \texttt{0000-0003-0606-5465}} }}
  \date{}
\begin{document}
  \pagenumbering{gobble}
  \maketitle

  \pagestyle{fancy}
\renewcommand{\abstractname}{\vspace{-\baselineskip}} \thispagestyle{plain}

\begin{abstract}
  Consider a predictor who ranks eventualities on the basis of past cases: for
  instance a search engine ranking webpages given past searches. Resampling past
  cases leads to different rankings and the extraction of deeper
  information. Yet a rich database, with sufficiently diverse rankings, is often
  beyond reach. Inexperience demands either ``on the fly'' learning-by-doing or
  \emph{prudence}: the arrival of a novel case does not force (i) a revision of
  current rankings, (ii) dogmatism towards new rankings, or (iii)
  intransitivity.

  For this higher-order framework of inductive inference, we derive a suitably
  unique numerical representation of these rankings via a matrix on
  eventualities $\times$ cases and describe a robust test of
  prudence. Applications include: the success/failure of startups; the veracity
  of fake news; and novel conditions for the existence of a yield curve that is
  robustly arbitrage-free.
\end{abstract}
\setlength{\epigraphwidth}{11.5cm}
\epigraph{From the past, the present acts prudently, lest it spoil future
  action.}{\emph{Titian:  Allegory of Prudence}
}
\section{Introduction}\label{sec-introduction}
Experience is the basis of prediction.  Yet outside of stylised settings, even
the most experienced forecasters do not claim  access to ``the full
model'' that is closed  with respect to the relevant states of the
world.
\begin{example}\label{eg-full-model}
  The canonical large-world setting is that of the global financial markets.  In
  2019, all models ommited details of COVID-19. Similarly, in 2007, a clear
  description of the sub-prime mortgage crisis was beyond reach. The central
  role of simulation and bootstrap methods in empirical finance points to a
  prevalence of inductive reasoning.\footnote{Consider
    \citet{Cowles-Forecasting}, \citet{White-Data_snooping},
    \citet{FF-Luck_vs_skill} and \citet{HL-Lucky_factors}.} That is, to
  forecasting on the basis of past cases as opposed to a full description of
  future states. At the same time, market makers need to set prices that
  are robust to changes that open the door to exploitation via arbitrage.
\end{example} 
\pagenumbering{arabic} \setcounter{page}{2} Incomplete models are a key
motivation for recent axiomatic updates to the standard Bayesian framework such
as ``reverse Bayesianism'' of \citet{KV_Reverse_Bayes}. This and other work
(\citet{KV-Awareness_of_U}, \citet{HR-Knowledge_of_U} and
\citet{GKMQT_Robust_experiments}) on unawareness and robustness in the
state-space sense provide part of our inspiration for the present upgrade to the
axiomatic foundations of inductive inference in \citet[henceforth
\gsii]{GS_Inductive_inference}. By taking rankings as primitive, we extend
\gsii\ to accommodate a forward-looking version of the second-order induction of
\citet{AG-Second-order_induction}. 

In the present framework, the basic building blocks of the model are
observations or (synonymously) past cases. A given past case may be empirical or
theoretical, and the predictor's model is naturally bounded in size and scope by
the predictor's experience. Our contribution is to extend the framework of
\gsii\ to model less experienced predictors that are prudent. That is to agents
that are able to proactively engage in second-order induction: by ``pulling
themselves up by the bootstraps'' and looking at how their model extends to
novel cases. Thus allowing them to survive their initial phase of inexperience.

In the remainder of this section, we informally introduce: the model of
\cref{sec-model}; the axioms and matrix representation of
\cref{sec-axioms-theorem}; and the applications to which we return in
\cref{sec-discussion}. Proofs of main results appear in appendices
\ref{sec-proof-main} to \ref{sec-proof-foureq}.

\paragraph{Synopsis of model and results with examples.} The predictor is endowed with a
qualitative plausibility ranking of eventualities given her current database
$D^{\star}$ of past cases. Moreover, the same is true for every finite
resampling of $D^{\star}$. We identify conditions on the resulting family of
(ordinal) rankings for the existence of a suitably unique
real-valued matrix $\mathbf v$ on eventualities$\times$cases that \emph{represents} the
information in these rankings. The form of this representation is linear on
databases (\ie\ additive over cases) and separable on eventualities, so that for
every database $D$, eventuality $y$ is
more likely than $x$ if, and only if,
\begin{linenomath*}
  \begin{equation}\label{eq-similarity}
  \sum_{c\,\in D} \mathbf  v(x,c) \leq \sum_{c\,\in D} \mathbf v(y,c).
\end{equation}
\end{linenomath*}
The similarity weight $\mathbf{v}(x,c)$ is the degree of support that case $c$
lends to $x$.
\begin{example}As a canonical example, consider predicting the slope coefficient
  $\beta_{i}$ from the regression of asset $i$'s returns on the market portfolio
  \citep[in the two-pass method of][]{FM-Two_pass}. Then values $\beta_{i}$ are
  eventualities and $D^{*}$ is the current sample of past returns. In this
  setting, $\mathbf v$ is an empirical log-likelihood function that we generate
  via a generalised notion of bootstrapping of cases in $D^{\star}$.
\end{example}
Key to the contribution of \gsii\ is an endogenous notion of case types: a
partition of cases according to the marginal information they contribute to a
given database. This marginal contribution is measured in terms of the impact on
the rankings of eventualities.\footnote{This notion of case type is therefore
  close to Quine's ``perceptual similarity'' (see \cref{sec-discussion}).}  Case
types are analogous to states in the sense that they form the model's
dimensions: the lower the dimension the less the experience.

\begin{example}[Search Engine Results Page, SERP]\label{eg-search_engine}
  Advertising aside, when users conduct a web search, the search engine compiles
  a ranking $\preceqb_{D^{\star}}$ of (web)pages $x, y, z, \dots$ on the basis
  of its database $D^{\star }$ of past cases: searches of past users plus
  feedback from subsequent clicks.  Resampling yields other databases $D$ and
  other rankings.  At one extreme, past cases may be so similar to one another
  that the same plausibility ranking arises regardless of how the data is
  resampled.  At the other, past cases may be sufficiently rich that resampling
  yields every feasible plausibility ranking of eventualities.
\end{example}

A rich set of past cases is at the heart of the diversity axiom of \gsii--which
we refer to as \fourdiv. This restricts the model to predictors whose current
data is sufficiently rich that a resampling exercise generates \emph{all
  $4!=24 $ strict (\ie\ total) rankings of every subset of four
  eventualities}. In this paper, we accommodate less inexperienced predictors by
only replacing \fourdiv\ with \condtwodiv. Given the other basic axioms of
\gsii, \condtwodiv\ turns out to be equivalent to requiring that, for every
three distinct eventualities $x$, $y$ and $z$, resampling generates at least $3$
of the $3! = 6$ possible distinct strict rankings of this triple. \Condtwodiv\
is minimal in the sense that, in its absence, we lose both existence and
uniqueness of the similarity representation (see \cref{eg-lexicographic}).  

To compensate for a lack of experience, we introduce a subtle and more flexible
notion of cases that allows us to capture the predictor's potential awareness of
her limited experience.  Formally, related notions in the literature go by the
name of unforeseen consequences in \citet{GQ-Surprises} and shadow propositions
in the setting dynamic awareness in \citet{HP-Dynamic_awareness}. Via a
content-free case $\novel $, the predictor can explore the impact on her model
of the arrival of a novel case type. In effect, this involves meta-analysis of
how her similarity function will evolve over time. Hence the reference to
second-order induction.

The prudent predictor  ensures her model is robust to the arrival of novel case
types. She ensures that, when a novel case arrives, she can accept the ranking
it generates without finding herself in the potentially costly position of
generating intransitive rankings when she combines past and novel cases.

\begin{example}[Second-order inductive inference]\label{eg-second-order}
  Consider a search-engine startup seeking to establish itself in the face of
  incumbents with the experience of Google. The start-up engages in second order
  induction when it is learning the similarity function itself. This may include
  learning the values $\mathbf{v}(x,c)$ of \cref{eq-similarity}, but it may
  also involve costly updates of the model structure ``on the fly'', \eg\
  redefining case types, rankings, \etc. The startup is prudent if, \emph{ex
    ante}, it structures its model to ensure that it is relatively costless to
  extend to novel case types$:$ once they arrive.
\end{example}

Prudence is only worthwhile when revisions of the predictor's model `on the fly',
once the novel case arrives, is costly.  
Consider ``zero-day attacks'' in the setting of cyber security.
\Citet{Hota_et_al-Cyber_security} highlight the essence of time when a novel
attack on a computer network arrives; and that such attacks are novel precisely
because cyber-security experts have already built in solutions to known
vulnerabilities. Also, tradeoffs between time, cost and learning are nowhere
more important than in finance. For a bond-market setting, we are able to
provide a formal equivalence between prudence and arbitrage pricing in
\cref{sec-discussion}.  In \cref{sec-discussion}, we also discuss: other
applications; empirical evidence linking intransitivity, memories and novelty;
and connections with the literature on second-order induction in more detail.

\section{Model}\label{sec-model}
Following \gsii, let the nonempty set $X$ denote the conceivable
\emph{eventualities} of the present prediction problem and let $ \relations(X)$
denote the set of binary relations or rankings on $X$.  For instance, for a
search engine, we identify webpage $x$ with the eventuality ``page $x$ is the
desired webpage''.  The predictor is equipped with her current memory
$\current$: the union of a finite set of past cases $\past$ and a
\emph{variable} or \emph{free case} $\novel$.  The cases in $\past$ collectively
represent the forecaster's relevant observations or experience.  Our first and
most fundamental modification of the primitives of \gsii~is the inclusion of
$\novel$ in the current memory
$\current$.  
\begin{remark*}On a computer, a natural implementation of this setup is the
  following.  Take every case $c \in \current $ to consist of a pair
  $p \times m:$ a pointer $p$ that references a memory location and the memory
  content $m$. Each $c\in \past$ is identical to a case in the setting of
  \gsii. But, for $c = \novel$, there is no meaningful memory content, so $m$ is
  ``empty'' or assigned an arbitrary null value.  From another perspective,
  cases in $ \past$ are constant (of arity zero) whereas $\novel$ is a variable
  (of positive arity).

  For every nonempty subsample $ D \subseteq \past $, the predictor has
  sufficient information to determine a well-defined ranking $\preceqb_ D $ in
  $ \relations(X)$.  In contrast, since $\novel = \acute{p} \times \acute{m}$
  has no meaningful memory content, $\preceqb_\novel$ is indeterminate and a
  free variable in $\relations(X)$. The prudent predictor gains a better
  understanding of her current model by assigning a ranking to $\acute{m}$ and
  exploring the extensions of \cref{def-extension}.
\end{remark*}

Like \gsii, we accommodate a forecaster that goes beyond her current memory and
includes hypothetical cases
$\mbbc$ that she may not have experienced, but which, through reasoning,
interpolation or resampling, she can clearly describe. These hypothetical cases
are formally constant, like members of $\past$. 
%
With \emph{case resampling and subsampling} from the literature on bootstrapping
in mind, let
\[\mbbd\defeq \lbc D\subseteq \mbbc: \countof D<\infty\rbc \] denote
the set of (finite) \emph{determinate or constant databases}.  (These are
referred to as \emph{memories} in \gsii.)  Like $\past$, each $D\in \mbbd $
contains no copies of $\novel$.

Let $[\novel] $ denote a set of copies of $\novel $. Finally, let
$\mbbcp \defeq \mbbc \cup [\novel]$ and let $\mbbcpp$ denote a member of
$\{\mbbc, \mbbcp\}$.  Let $\mbbdp $ denote the corresponding set of all finite
subsets of $ \mbbcp $ and take
  \begin{linenomath*}
\begin{equation*}
  \mbbdpp = \left\{
\begin{array}{ll}
 \mbbd & \text{if, and only if, $\mbbcpp= \mbbc$, and}\\
\mbbdp  &\text{otherwise.}
\end{array}\right.
\end{equation*}
\end{linenomath*}
The predictor is endowed with a well-defined plausibility ranking $\preceqb_D $
in $ \relations(X)$ for each $D$ in $\mbbd$.  Denote the symmetric part by
$\simeq _ D$ and asymmetric part by $ \precb _ D $.  In a minor departure from
\gsii, the primitive of our model is a point in $\relations(X)^{\mbbd}$
  \begin{linenomath*}
\[\preceqb_{\mbbd} \defeq \langle\preceqb_{D}:D\in \mbbd\rangle\,.\footnote{The
    present approach is equivalent to taking
    $\preceqb_{\mbbd} = \left\{D \times \preceqb_{D}: D \in
      \mbbd\right\}$. This way we maintain pairwise distinctness of
    $\preceqb_{C}=\preceqb_{D}$ such that $C \neq
    D$. I thank Maxwell B. Stinchcombe for bringing this point to my attention.}
\]
  \end{linenomath*}
For each $C$ in $ \mbbdp \bs \mbbd $, the fact that for some
$ c \in [ \novel ] $, $ c \in C $ means that $\preceqb_{C} $ is indeterminate,
free variable in $\relations(X)$. Although, in isolation each such
$ \preceqb_{C} $ is free, when the axioms we introduce hold, the potential
values of the variable
$ \preceqb _ \mbbdp \defeq \langle \preceqb _ C : C \in \mbbdp \rangle $ are
constrained by the current values of the constant $\preceqb_\mbbd$.





\paragraph{Case types.}
As in \gsii, two past cases $ c , d \in \mbbc $ are of the same \emph{case type}
if, and only if, the marginal information of $ c $ is everywhere equal to the
marginal information of $ d $. Formally, $ c \sim ^{ \star } d $ if, and only
if, for every $ D \in \mbbd $ such that $ c , d \notin D $,
$ \preceqb _ { D \cup \{ c \} } = \preceqb _ { D \cup \{ d \} }$.  By
observation 1 of \gsii, $ \sim ^{ \star }$ is an equivalence relation on
$ \mbbc $ and as its collection of equivalence classes generate a partition
$\mbbt$ of $\mbbc$.

We extend $ \sim ^{ \star } $ to $ \mbbcp $ by taking $ [ \novel ]$ to be an
equivalence class of its own, so that, for every $ c \in \mbbc $,
$ c \nsim ^{ \star } \novel $.  We let $\mbbtp$ denote the corresponding
partition of $\mbbcp$.

Like \gsii, we also extend $\sim^{\star}$ to $\mbbdp$ by treating databases that
contain the same number of each case type as equivalent. That is,
$C \sim ^{\star } D$ if, and only if, for every $t \in \mbbtp$, the numbers
$\countof (C \cap t)$ and $\countof (D \cap t) $ of that case type coincide. To
enable a translation of each database to counting vectors
$t \mapsto \countof (D \cap t) $, we impose a


\begin{assumption*}
For every  $ t \in \mbbtp$, there are infinitely many cases in $ t $.
\end{assumption*}


Our key definition is the following. 

\begin{definition}\label{def-extension}
  $\extb \defeq \langle \extb_{D}: D\in \mbbdpp \rangle $ 
  is an \emph{extension}, and in particular a $Y$-extension, of $ \preceq_{\mbbd}$ if, for some nonempty
  $ Y \subseteq X $, the following all hold$:$

\begin{enumerate}

\item\label{item-binary-rel} for every $ D\in \mbbdpp $, $\ext_{D} $ belongs
  to $ \relations (Y)$,

\item \label{item-preserving} for every $ D \in \mbbd$ and every $x,y\in Y$,
  $x \ext_{D} y $ if, and only if, $x \preceq_{D} y$,

\item \label{item-dimension} for every $D\in \mbbdpp$ and 
every $c,d \in \mbbcpp\bs D$,  if $ c \sim ^ \star d $ then
  $ \extb _ { D \cup \{c\} } = \extb _ { D \cup \{d\}}$.
\end{enumerate}
An extension $\ext_{\mbbdpp}$ is  \emph{proper} if $\mbbdpp = \mbbdp$ and
improper otherwise.
\end{definition}
By assigning rankings to the free case $\novel$, (potential) extensions of
$\preceq_{\mbbd}$ simulate the arrival of novel information.  Part
\ref{item-binary-rel} of the definition of an extension ensures that, for every
proper $Y$-extension $\ext$ and every $D\in \mbbdp$, $\ext_{D}$ is a
well-defined binary relation on $Y$. For every extension $\ext_{\mbbdpp}$ and
every $D \in \mbbdpp$, let $ \next_{D}$ and $ \sext_{D}$ respectively denote the
symmetric and asymmetric parts of $\ext_{D}$.

Part \ref{item-preserving} of the definition implies that, for every
$Y$-extension $\ext$ and every $D\in \mbbd$, $ \ext$ is simply the restriction
$\preceqb_{D}\cap Y^{2} $ of $\preceq_{D}$ to $Y$.  We therefore refer to part
\ref{item-preserving} of the definition of an extension as the preservation or
\emph{nonrevision} condition.

Part \ref{item-dimension} of the definition ensures that, for proper extensions
$\ext $, the partition $\mbbtp$ of case types generated by $\sim^{\star}$ is at
least as fine the partition generated by the equivalence relation generated by
$ \ext$. Two cases $ c,d \in \mbbcp $ are \emph{equivalent with respect to
  $ \ext $}, written $ c \sim^{\extb} d $, if, for every $ D \in \mbbd $ such
that $ c,d \notin D $, $ \extb _ { D \cup \{c\} } = \extb _ { D \cup\{d\} }$.
This notion allows us to partition the set of proper extensions as follows.


\begin{definition*}\label{def-novel}

  A proper extension $\ext $ is either \emph{regular} or \emph{novel}.  It is
  novel whenever $[\novel]$ is a distinct equivalence class of $\sim^{\supext}$,
  so that, for every $ c \in \mbbc $, $ c \nsim ^ \supext \novel $.

\end{definition*}



For novel extensions, $\novel $ mimmicks the potential arrival of new
information. Yet novel extensions need not feature qualitatively new rankings
(\ie\ rankings that do not feature in $\preceq_{\mbbd}$). In \cref{lem-insep},
we show that this is because a novel case type is characterised by the
quantitative notion of a similarity weight.

For every regular extension $\ext$, there exists $c \in \mbbc$ such that
$c \sim^{\extb} \novel$. Thus, there are as many regular extensions as there are
past case types (\ie\ $\countof \mbbt$).  Yet every regular $Y$-extension $\ext$
is \emph{equivalent} to the unique improper $Y$-extension
$ \langle \preceqb_{D} \cap Y^{2}: D \in \mbbd\rangle$ in the sense of 
\begin{observation}\label{obs-reg-eq}%
  For every regular $Y$-extension $\ext$ and improper $Y$-extension $\aext$, for
  every $ C \in \mbbdp $, there exists $ D \in \mbbd $ such that
  $C \sim^{\extb} D$ and $\extb_{C} = \aextb_{D}$.\footnote{This means that
    there is a canonical embedding of
    $\left\{C \times \extb_{C}: C \in \mbbdp\right\}$ in
    $\{D \times \aextb_{D}: D \in \mbbd\}$. The converse embedding follows from
    the nonrevision condition of \cref{def-extension}.} 
\end{observation}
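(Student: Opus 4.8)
The plan is to unpack what it means for $\ext$ to be a \emph{regular} $Y$-extension and then exhibit, for each $C \in \mbbdp$, a witnessing $D \in \mbbd$. By the definition just preceding the statement, $\ext$ regular means there is a past case $c_{0} \in \mbbc$ with $c_{0} \sim^{\extb} \novel$; fix such a $c_{0}$, and let $t_{0} = [c_{0}]_{\sim^{\star}} \in \mbbt$ be its $\sim^{\star}$-class. Given $C \in \mbbdp$, write $k = \countof(C \cap [\novel])$ for the number of copies of the free case in $C$. The natural candidate is to replace each of those $k$ copies of $\novel$ by $k$ fresh copies of $c_{0}$ drawn from $t_{0} \setminus C$ — these exist by the standing Assumption that every case type is infinite — obtaining a database $D \in \mbbd$ (it now contains no member of $[\novel]$). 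First I would check $C \sim^{\extb} D$: this should follow by a finite induction, swapping one copy of $\novel$ for one copy of $c_{0}$ at a time and invoking $c_{0} \sim^{\extb} \novel$ together with the definition of $\sim^{\extb}$ (which says exactly that $\extb_{E \cup \{\novel\}} = \extb_{E \cup \{c_{0}\}}$ for $E$ not containing either), with care taken that at each swap the ``context'' $E$ omits the two cases being exchanged.

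The second claim, $\extb_{C} = \aextb_{D}$, is where the two hypotheses of the definitions combine. Since $D \in \mbbd$, part~\ref{item-preserving} (nonrevision) of \cref{def-extension} gives $\extb_{D} = \preceqb_{D} \cap Y^{2} = \aextb_{D}$, the last equality being the definition of the improper $Y$-extension. So it suffices to show $\extb_{C} = \extb_{D}$. But $C$ and $D$ contain the same number of cases of every type in $\mbbtp$ \emph{except possibly} $[\novel]$ and $t_{0}$: the $k$ copies of $\novel$ in $C$ became $k$ extra copies of $c_{0} \in t_{0}$ in $D$. Here I would lean on the fact that $\extb$ is constant on $\sim^{\star}$-classes (part~\ref{item-dimension} of \cref{def-extension}) \emph{and} on $\sim^{\extb}$-classes, and that for a \emph{regular} extension $[\novel]$ is \emph{not} its own $\sim^{\extb}$-class — rather $\novel \sim^{\extb} c_{0}$ — so $\extb$ cannot distinguish a copy of $\novel$ from a copy of $c_{0}$. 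Formally, I would argue that $C \sim^{\extb} D$ (just proved) already forces $\extb_{C} = \extb_{D}$, because $\sim^{\extb}$-equivalence of databases is defined precisely so that databases agreeing in the count of every $\sim^{\star}$-type produce the same ranking; one needs the small lemma that $C \sim^{\extb} D$ implies $\extb_{C} = \extb_{D}$, which is the database-level analogue of the definition of $\sim^{\extb}$ on cases and should be derivable by the same one-case-at-a-time swapping argument applied inside $C$ and $D$ directly (peeling off cases until the two databases literally coincide).

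The main obstacle I anticipate is purely bookkeeping: managing the inductive swap so that, at every step, the current intermediate database still avoids the particular pair of cases being exchanged, which is what lets us invoke the definitions of $\sim^{\extb}$ and of part~\ref{item-dimension}. This is routine given the infinitude-of-types Assumption (there is always a fresh representative to swap in), but it must be set up cleanly — e.g.\ fix in advance an injection from $C \cap [\novel]$ into $t_{0} \setminus (C \cup D_{\text{partial}})$ large enough for all the swaps. The footnote's parenthetical about the converse embedding is immediate from part~\ref{item-preserving} and needs no separate argument. I would present the proof as: (1) fix $c_{0}$, $t_{0}$; (2) construct $D$ from $C$; (3) prove $C \sim^{\extb} D$ by induction on $k$; (4) prove $\extb_{C} = \extb_{D}$ via the database-swap lemma; (5) conclude $\extb_{D} = \aextb_{D}$ by nonrevision.
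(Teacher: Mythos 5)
Your proposal is correct and follows essentially the same route as the paper's own proof: use regularity to obtain a past case equivalent to $\novel$, use the richness assumption to draw fresh representatives outside $C$, swap out the copies of $\novel$ one at a time by induction, and finish with the nonrevision condition to identify $\extb_{D}$ with $\aextb_{D}$. The bookkeeping subtlety you flag (intermediate contexts still containing copies of $\novel$) is present in the paper's argument as well and is handled there in exactly the inductive, one-swap-at-a-time manner you describe.
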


\begin{proof}[Proof of \cref{obs-reg-eq}]\label{proof-reg-eq}
  Fix $Y\subseteq X$ nonempty and $\aext$ regular.  \Wlog, take $ C \in \mbbdp
  \bs \mbbd $, so that $ C $ contains at least one copy of $ \novel
  $.  For any $ c \in C \cap [ \novel ] $, the fact that $ \aext
  $ is regular implies that $ c \sim^{\aextb} c _ 1 $ for some $ c _ 1 \in \mbbc
  $.  The richness assumption ensures that we may choose $ c _ 1
  $ from the complement of $ C $.  Then, since neither $c$ nor
  $c_{1}$ belong to $ C _ 1 \defeq C \bs \{c\} $, $c \sim^{\aextb}
  c_{1}$ implies $ \aextb_{C} = \aextb _ { C _ 1 \cup \{c_{1}\} }$.  If $ c
  $ is the unique member of $ C \cap [ \novel
  ]$, then the proof is complete.  Otherwise, using the fact that $ C
  $ is finite, we may proceed by induction until we obtain a set $ C _ n
  $ such that $ C _ n \cap [\novel ] $ is empty and $ D \defeq C _ n \cup \{ c _
  1 , \dots , c _ n \} $ belongs to $ \mbbd
  $.  Part \ref{item-preserving} of \cref{def-extension} then implies $ \aextb _
  { D } = \preceqb _ { D } \cap Y^{2}$, so that, since
  $\ext$ is improper, $\aextb_{D}=\extb_{D}$.  Finally, since $C \sim ^{\aextb}
  D$, $ \aextb _ { C } = \aextb_{D} $, as required.
\end{proof}

\section{Axioms and main theorem}\label{sec-axioms-theorem}

\paragraph{The basic axioms\hskip-10pt}~of \gsii, which we rewrite in terms of
extensions, are the following. In each of these axioms, $\ext $ is an arbitrary
$Y$-extension of $\preceq_{\mbbd}$.

\begin{taggedblank}{$\textup{A}0$}[Transitivity axiom for $\ext$]\label{T}

  For every $D \in \mbbdpp $, $ \ext_{D} $ is transitive.

  \end{taggedblank}

\begin{taggedblank}{$\textup{A}1$}[Completeness axiom  for $\ext$]\label{K}

  For every $ D \in \mbbdpp $, $\ext_{D} $ is complete.

\end{taggedblank}

\begin{taggedblank}{$\textup{A}2$}[Combination axiom  for $\ext$]\label{C}

  For every disjoint $ C, D\in \mbbdpp $ and every $x,y\in Y$, if
  $ x \ext_{C} y $ and $ x \ext_{ D } y $, then $ x \ext_{ C \cup D} y
  $$;$ and  if $x \sext_{C} y$ and $x \ext_{D} y$, then $x\sext _{C \cup D } y$.

\end{taggedblank}

\begin{taggedblank}{$\textup{A}3$}[Archimedean axiom  for $\ext$]\label{A}

  For every disjoint $ C,D \in \mbbdpp $ and every $x,y\in Y $, if
  $x \sext_{D} y$, then there exists $ k \in \posint $ such that, for every
  pairwise disjoint collection $\lb D _ j : \text{ $ D _ j \sim ^ \supext D $
    and $ C \cap D _ j  =\emptyset$}\rb _ 1 ^ k$ in $\mbbdpp$,
  $ x \sext_{ C \cup D_{1} \cup \cdots \cup D_{k} } y.$

\end{taggedblank}




\paragraph{The diversity axioms\hskip-10pt}~that now follow require that $\mbbd$
is sufficiently rich to support $Y$-extensions $\ext$ with a variety of
\emph{total} orderings: \ie\ complete, transitive and antisymmetric
($x \ext_{D} y $ and $y \ext_{D} x$ implies $x = y$). 
Let $ \total ( \ext )$ denote the set $ \lb R : \text{for some
  $ C \in \mbbdpp $, $ R = \extb _{ C }$ is total}\rb$ of of total orders that
feature in $ \ext $. For $ k = 4 $, the following axiom is a restatement of the
diversity axiom of \gsii.
 
\begin{taggeddiv}{\hskip-5pt}[$k$-Diversity axiom]
  For every $ Y \subseteq X $ of cardinality $ n= 2, \dots, k $, every regular
  $ Y$-extension $\ext$ of $ \preceqb _{ \mbbd }$ is such that
  $\countof \total(\ext) = n!$\hskip.5pt.
\end{taggeddiv}
We say \emph{$k$-diversity holds on $Z$} if the axiom holds with $Z$ in the
place of $X$. By lowering the bar for the required number of total orders, the
following axiom substantively weakens \threediv\ and \emph{a fortiori} \fourdiv.

\begin{taggedblank}{$\textup{A}4$$'$}[Partial \threediv]\label{p3d}
  For every $ Y\subseteq X $ with cardinality $ n = 2 $ or $  3 $, every regular
  $X$-extension $ \ext $ of $ \preceq _{ \mbbd }$ is such that $ \countof
  \total ( \ext ) \geq n $.

\end{taggedblank}


\Cref{eg-zaslavski} of the appendix provides an example of $\preceq_{\mbbd}$
satisfying \ref{T}–\ref{p3d}, but not \threediv. \Parthreediv\ is clearly
stronger than \twodiv\ and, moreover, it plays the dual role of guaranteeing
uniqueness of the representation and allowing us to avoid restrictions on the
cardinality of $ X $. \Cref{eg-lexicographic} shows that \parthreediv\ is the
weakest axiom with these properties.  Moreover, the observation below shows
that, in our setting, \parthreediv\ is equivalent to

\begin{taggedblank}{\textup{A}$4$}[Conditional-\twodiv]\label{c2d}
  For every three distinct elements $ x , y , z \in X $, within one of the sets
  $ \{ D' : x \prec _{D ' } y \}$ and $ \{ D' : y \prec_{D ' } x \} $ there
  exists  both $C$ and $D$ such that $ z \prec _{ C } x $ and
  $ x \prec _{ D } z $. If $ \countof X = 2 $, then \twodiv\ holds on $X$.

\end{taggedblank}

\begin{theoremEnd}[link to proof]{observation}\label{obs-c2d}
  For $\preceq_{\mbbd}$ satisfying \ref{T}--\ref{A}, \condtwodiv\ and
  \parthreediv\ are equivalent.
\end{theoremEnd}
\begin{proofEnd}\label{proof-obs-c2d}
  This follows directly from \cref{prop-c2dQ} and the construction of
  $\preceq_{\mbbj}$.
\end{proofEnd}
\paragraph{The prudence axiom \hskip-10pt} that follows is our final requirement. It
is distinguished by the fact that it imposes structure on novel extensions. As
we will see in the proof of the main theorem (see \cref{lem-insep}), novel
extensions are characterised by a cardinal notion. In contrast, the notions of
testworthiness and perturbation that we now introduce are ordinal in nature.

\begin{definition*}\label{def-testworthy}

  A proper extension $ \ext $ of $\preceqb_{\mbbd}$ is \emph{testworthy} if it satisfies
  \textup{\ref{K}--\ref{A}} and, for some $D\in \mbbd$ such that $\ext_{D}$
  is total, $ \extb _{ \novel }$ is the inverse of $\extb_{D}$.\footnote{Recall
    that the inverse $ \ext _{ D } ^{ - 1 }$ of $ \ext _{ D }$ satisfies
    $ x \ext _{ D } ^{ - 1 } y $ if, and only if, $ y \ext _{ D } x$.}
\end{definition*}
We now introduce perturbations. The motivation is that, for any
given extension $\ext$, the predictor knows $\ext_{\mbbd}$ and freely chooses
$\ext_{\novel}$, but the rankings at other databases are somewhat arbitrary. For
although the rankings the predictor associates with members of
$\mbbdp\bs (\mbbd \cup \{\novel\}) $ are constrained, what
matters is not the actual rankings, but rather their potential for consistency
with the axioms.
\begin{definition*}
  Let $\ext$ and $\aext$ be extensions of $\preceqb_{\mbbd}$. $\aext$ is a
  \emph{perturbation} of $\ext$ if $\aextb_{\novel } =
  \extb_{\novel}$. Moreover, $\aext$ is a \emph{nondogmatic} perturbation if
  $\countof \total (\aextb) \geq \countof \total ( \extb)$.
\end{definition*}
We are interested in perturbations that are nondogmatic because may reveal
intransitivities that the predictor's inexperience conceals.
The prudent predictor may exploit them and revise her model \emph{before new cases arrive}.
\begin{prudence*}\label{P}

  For every $Y\subseteq X$ with cardinality $ 3$ or $ 4 $, every testworthy
  $ Y $-extension of $ \preceqb _{ \mbbd }$ that is novel has a
  nondogmatic perturbation that satisfies  \ref{T}–\ref{A}.
\end{prudence*}

Given that the extensions we consider are nonrevisionistic, it is natural to ask
whether \ref{T}--\ref{A} are superfluous in the presence of \fourpru. Our response
is twofold. Firstly, in practice, we would expect \ref{T}--\ref{A} to hold more
frequently than \fourpru\ which is more cognitively demanding. Secondly, as we
will see in the proof of the main theorem, when $ \mbbt $ is infinite, for some
$ Y \subseteq X $, the set of testworthy $ Y $-extensions that are novel may be
empty. For every such $Y$, the following observation confirms that \fourdiv\
holds on $Y$.
\begin{theoremEnd}[link to proof]{observation}[on testworthy
  extensions]\label{obs-testworthy}
  Let $\preceq_{\mbbd}$ satisfy \ref{T}–\ref{c2d}. For every $Y\subseteq X$ of
  cardinality $ 3$ or $ 4$, the set of testworthy $Y$-extensions is
  nonempty. If, for some $Y$, every testworthy $Y$-extension is regular, then
  \fourdiv\ holds on $Y$.
\end{theoremEnd}
\begin{proofEnd}\label{proof-obs-c2d}
  Respectively, these two statements follow via \cref{prop-central-testworthy}
  and \cref{lem-test-empty-fourdiv}.
\end{proofEnd}
It is also natural to  ask whether \fourpru\ is simply requiring that, on $Y$ such that
$\preceq_{\mbbj}$ fails to satisfy \fourdiv, there exists a testworthy
$Y$-extension that is novel and  satisfies \fourdiv. In the proof of the
theorem that now follows, we show that this is not the case.

\paragraph{The main theorem\hskip-5pt} that now follows involves real-valued
function $ \mathbf{v} $ on the product $X \times \mbbc $. We view $\mathbf{v}$
as a matrix and $ \mathbf{v} ( x , \cdot )$ as one of its rows.  The matrix
$ \mathbf{v} $ is a \emph{representation of $ \preceq _{ \mbbd }$} whenever it
satisfies
\label{sec-main}
 \begin{linenomath*} 
\begin{equation}\notag\label{eq-rep-main}
  \left\{
  \begin{array}{l}
    \text{for every $ x , y \in X$ and every $ D \in \mbbd $,}\\
    x \preceq _ D y \quad \text{if, and only if,} \quad \sum _ { \, c \,\in\, D} \mathbf{v} ( x
    , c ) \leq \sum _ {\, c \,\in\, D } \mathbf{v} ( y , c )  .
  \end{array}\right.
\end{equation}
\end{linenomath*}
The matrix $\mathbf{v}$ \emph{respects case equivalence} (with respect to
  $\preceq_{\mbbd}$) if, for every $c,d\in \mbbc$, $c \sim^{\star} d$ if,
and only if, the columns $\mathbf{v}(\cdot,c)$ and $ \mathbf{v}(\cdot,d)$ are equal.

\begin{theorem}[Part I, Existence]\label{thm-main}
  Let there be given $X$, $\mbbcp$, $\preceqb_ \mbbd$ and
  associated extensions, as above, such that the richness condition holds. Then
  \ref{ax-main} and \ref{wrap-main} are equivalent.

\begin{enumerate}[label=\textup{(\ref{thm-main}.\roman*)}]

\item\label{ax-main}

  \ref{T}--\ref{c2d} and \fourpru\ hold for $\preceq_{\mbbd}$.
  
\item\label{wrap-main} There exists a matrix
  $ \mathbf{v} : X \times \mbbc \rightarrow \R $ satisfying \ref{rep-main} and \ref{rows-main}$\,:$
  \begin{enumerate}[label=\textup{(\ref{thm-main}.\alph*)}]
  \item\label{rep-main} $ \mathbf{v} $ is a representation of
    $ \preceq _ { \mbbd }$ that respects case equivalence$\,;$

  \item\label{rows-main} no row of $\mathbf{v}$ is dominated by
    any other row, and for every three distinct elements $x,y, z \in X$ and
    $\lambda \in \R$, $\mathbf{v} (x, \cdot) \neq \lambda
    \mathbf{v}(y,\cdot) + (1-\lambda) \mathbf{v}(z,\cdot)$\,.\footnote{Observe that
$\mathbf{v}(x,\cdot)- \mathbf{v}(z,\cdot)$ and
$\mathbf{v}(y,\cdot)-\mathbf{v}(z,\cdot)$ are noncollinear if, and only if,
the affine independence condition of \ref{rows-main} holds.}

\end{enumerate}

\end{enumerate}                 %
\end{theorem}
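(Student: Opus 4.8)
The plan is to establish the two implications of the equivalence separately. Throughout I would use the richness assumption to identify each database $D\in\mbbd$ with its vector of case-type counts $t\mapsto\countof(D\cap t)$ in the nonnegative orthant of $\R^{\mbbt}$ and, once an extension $\extb$ has been fixed, to identify databases containing copies of $\novel$ with count vectors over $\mbbtp$, exactly as in observations~1--2 of \gsii. Under this identification $\preceqb_{\mbbd}$ and every extension become families of orderings indexed by a cone of nonnegative count vectors, the combination and Archimedean axioms act linearly on that cone, and the total orders that occur correspond to those chambers of a central hyperplane arrangement that meet the positive orthant.

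For the implication from \ref{wrap-main} to \ref{ax-main}, the linear-on-databases and separable-on-eventualities form of \ref{rep-main} makes \ref{T}, \ref{K}, \ref{C} and \ref{A} routine, and ``respects case equivalence'' is exactly the assertion that $\sim^{\star}$ is the relation ``equal columns''. \Condtwodiv\ follows from \ref{rows-main}: for distinct $x,y,z$, noncollinearity of the centered rows forces the open halfspace cut out by $\mathbf v(x,\cdot)-\mathbf v(y,\cdot)$ to contain count vectors on both sides of the hyperplane cut out by $\mathbf v(x,\cdot)-\mathbf v(z,\cdot)$, which is the content of \ref{c2d}, while nondomination of rows handles its two-element clause. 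The one delicate point in this direction is \fourpru: given a testworthy novel $Y$-extension $\extb$, I would produce the required nondogmatic perturbation by adjoining to $\mathbf v$ one extra column $\mathbf w$ whose induced order on $Y$ is the prescribed inverse of the relevant total $\extb_{D}$; the extension induced by the enlarged matrix satisfies \ref{T}--\ref{A}, fixes $\extb_{\novel}$, and --- if $\mathbf w$ is chosen so that the enlarged arrangement cuts out as many chambers meeting the positive orthant as possible --- realises every total order that $\extb$ did, hence is nondogmatic.

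For the implication from \ref{ax-main} to \ref{wrap-main} I would follow the additive-conjoint-measurement route of \gsii, with \condtwodiv\ and \fourpru\ doing the work that \fourdiv\ does there. The first step is to produce a real-valued matrix $\mathbf v\colon X\times\mbbc\to\R$, factoring through case types, that is a representation of $\preceqb_{\mbbd}$ and respects case equivalence. The basic axioms \ref{T}--\ref{A} together with \condtwodiv\ alone do not deliver this; securing that $\preceqb_{\mbbd}$ is representable in the required additive form is (part of) the role of \fourpru, and here \cref{lem-insep} is the key tool: a novel extension is characterised by a single similarity weight, so the hypothetical arrival of a novel case type amounts to adjoining one column, and the existence --- guaranteed by \fourpru\ --- of a transitive nondogmatic perturbation after such an arrival is precisely what rules out the hidden intransitivity that would otherwise obstruct linear representability. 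Once $\mathbf v$ is in hand, \condtwodiv\ gives \ref{rows-main}: if a triple $x,y,z$ had affinely dependent rows then, for every database, one of the three partial sums would lie weakly between the other two, so that one of $x,y,z$ is always a median of the triple; instantiating \ref{c2d} with that eventuality as the third argument then contradicts \condtwodiv, and a similar instantiation rules out a dominated row. Finally, all of these arguments are local to triples and $4$-element subsets, so no restriction on $\countof X$ is imposed, and the rank lemmas of \gsii\ glue the local data into one matrix.

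I expect the main obstacle to be isolating exactly what \fourpru\ contributes over and above \condtwodiv\ in the first step --- that is, passing from the purely ordinal content of \fourpru\ (the existence of \emph{some} transitive nondogmatic perturbation of every testworthy novel extension) to a real-valued representation of the required linear form. The delicacy is that a perturbation may freely reassign the rankings at databases in $\mbbdp\bs(\mbbd\cup\{\novel\})$, which are ``somewhat arbitrary'' yet still constrained by \ref{C} and \ref{A}, and one must convert the existence of a well-behaved reassignment into a genuine linear functional; this is where \cref{lem-insep} and the chamber geometry of count vectors carry the argument, by collapsing ``what the novel case does'' to the choice of a single column. A secondary subtlety, flagged just before the theorem, is that \fourpru\ is strictly weaker than ``some testworthy novel $Y$-extension satisfies \fourdiv'', so it recovers only the weak general position of \ref{rows-main} --- nondominated rows and affinely independent triples --- rather than the full affine independence of all rows that \fourdiv\ would give, which is precisely why the representation obtained here is weaker than the diversified one of \gsii.
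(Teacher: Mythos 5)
Your set-up (count vectors, the pairwise representations of \cref{lem-insep}, the novel case as one extra column, chamber geometry of arrangements) matches the paper's route, and your arguments in the easy direction for \ref{T}--\ref{c2d}, and for \ref{rows-main} from \condtwodiv, are essentially the paper's. The genuine gap is the step you yourself flag as ``the main obstacle'': you never say how \fourpru\ actually yields the additive representation, and ``prudence rules out the hidden intransitivity that would obstruct linear representability'' is not an argument. The paper's bridge is a precise intermediate condition your proposal never names: the Jacobi identity $v^{\xz}=v^{\xy}+v^{\yz}$ for the pairwise vectors. \Cref{lem-insep} delivers each $v^{\xy}$ only up to its own positive scalar, and the entire difficulty is choosing these scalars coherently across pairs; \cref{thm-foureq} proves that, given \ref{TQ}--\ref{c2dQ}, \fourpru\ is equivalent to \fourjac, by a case analysis on the rank of the pairwise representation, the construction of testworthy perturbations with \emph{central} arrangements (\cref{prop-central-testworthy}), and Zaslavsky chamber counts. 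A further induction (\cref{lem-induction}) is then needed to patch the local Jacobi representations into one matrix, because the gluing used in \gsii\ relies on the linear independence supplied by \fourdiv, which \condtwodiv\ does not provide; and the degenerate case in which no testworthy novel $Y$-extension exists (possible when $\mbbt$ is infinite, so that \fourpru\ is vacuous on $Y$) must be handled separately via \cref{lem-test-empty-fourdiv}, which shows \fourdiv\ then holds on $Y$. None of this machinery, nor any substitute for it, appears in your sketch.

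Your recipe for \fourpru\ in the converse direction is also wrong as stated: choosing the adjoined column so that ``the enlarged arrangement cuts out as many chambers meeting the positive orthant as possible'' does not produce a perturbation satisfying \ref{T}, because in general position the extra chambers correspond to intransitive CAR rankings --- this is exactly the phenomenon \cref{lem-Y3-r3} exploits, where a rank-$3$ arrangement on a triple has $8$ chambers, two of them intransitive, and only dogmatic perturbations are transitive. The paper instead chooses the column so that the enlarged arrangement is central, and then uses the Jacobi identity (automatic in this direction, since the row differences of $\mathbf{v}$ satisfy it) to show every chamber is a total order, which is what makes the perturbation nondogmatic \emph{and} consistent with \ref{T}--\ref{A}.
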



Our uniqueness result is identical to that of \gsii.  \setcounter{theorem}{0}
\begin{theorem}[Part II, Uniqueness]
  If \ref{ax-main} $[$or \ref{wrap-main}$]$ holds, then the matrix
  $ \mathbf{v} $ is unique in the following sense$\,:$ for every other matrix
  $ \mathbf{u} : X \times \mbbc \rightarrow \R $ that represents
  $\preceqb_{\mbbd}$, there is a scalar $ \lambda > 0 $ and a matrix
  $ \beta : X \times \mbbc \rightarrow \R$ with identical rows (\ie\ with
  constant columns) such that $ \mathbf{u} = \lambda \mathbf{v} + \beta$.
  
\end{theorem}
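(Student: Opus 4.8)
The plan is to show that the uniqueness argument of \gsii\ transfers essentially verbatim: extensions and the free case $\novel$ are irrelevant to a statement that mentions only $\preceqb_{\mbbd}$ and matrices on $X\times\mbbc$, and the only structural input that argument draws from the representing matrix is the content of \ref{rows-main}. By Part~I we may assume \ref{wrap-main}, so $\mathbf{v}$ is a representation that respects case equivalence and satisfies \ref{rows-main}; the statement is vacuous when $\countof X\le 1$, so assume $\countof X\ge 2$. The richness assumption guarantees that every finitely supported $q\in\nnint^{\mbbt}$ is the counting vector $\langle\countof(D\cap t):t\in\mbbt\rangle$ of some $D\in\mbbd$. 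Moreover, exactly as in \gsii, any matrix $\mathbf{u}$ that represents $\preceqb_{\mbbd}$ has, for $c\sim^{\star}d$, columns $\mathbf{u}(\cdot,c)$ and $\mathbf{u}(\cdot,d)$ that differ by an additive constant (a squeeze at indifference thresholds); hence every difference vector $\mathbf{u}(x,\cdot)-\mathbf{u}(y,\cdot)$, like $\mathbf{v}(x,\cdot)-\mathbf{v}(y,\cdot)$, descends to a function on $\mbbt$. It then suffices to produce a single $\lambda>0$ with $\mathbf{u}(x,c)-\mathbf{u}(y,c)=\lambda\bigl(\mathbf{v}(x,c)-\mathbf{v}(y,c)\bigr)$ for all $x,y\in X$ and $c\in\mbbc$, for then $\beta\defeq\mathbf{u}-\lambda\mathbf{v}$ has constant columns and $\mathbf{u}=\lambda\mathbf{v}+\beta$.

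Next I would fix distinct $x,y\in X$ and write $\delta\defeq\mathbf{v}(x,\cdot)-\mathbf{v}(y,\cdot)$ and $\delta'\defeq\mathbf{u}(x,\cdot)-\mathbf{u}(y,\cdot)$, viewed on $\mbbt$. Since both matrices represent $\preceqb_{\mbbd}$, for every counting vector $q$ we have $\operatorname{sign}\langle\delta,q\rangle=\operatorname{sign}\langle\delta',q\rangle$ (each records whether $x\prec_D y$, $x\simeq_D y$, or $y\prec_D x$). Taking $q$ the counting vector of a singleton shows $\delta$ and $\delta'$ agree in sign coordinatewise, and the non-domination clause of \ref{rows-main} supplies a coordinate on which $\delta$ is strictly positive and one on which it is strictly negative. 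Restricting to counting vectors supported on a single pair of case types $s,t$ with $\delta(s)>0>\delta(t)$, the sign of $a\,\delta(s)+b\,\delta(t)$ over $a,b\in\nnint$ is controlled by the ratio threshold $-\delta(t)/\delta(s)$; matching it with the corresponding threshold for $\delta'$ forces $\delta'(s)/\delta(s)=\delta'(t)/\delta(t)$. Connecting an arbitrary coordinate to a strictly positive or a strictly negative one in this way (coordinates where $\delta$, hence $\delta'$, vanishes causing no trouble) yields $\delta'=\lambda_{xy}\,\delta$ for a single $\lambda_{xy}>0$.

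It remains to check that $\lambda_{xy}$ does not depend on the pair. When $\countof X=2$ there is nothing to prove. Otherwise, for distinct $x,y,z$, adding $\mathbf{u}(x,\cdot)-\mathbf{u}(y,\cdot)$ to $\mathbf{u}(y,\cdot)-\mathbf{u}(z,\cdot)$ and comparing with $\mathbf{u}(x,\cdot)-\mathbf{u}(z,\cdot)$ gives
\[(\lambda_{xy}-\lambda_{xz})\,\mathbf{v}(x,\cdot)+(\lambda_{yz}-\lambda_{xy})\,\mathbf{v}(y,\cdot)+(\lambda_{xz}-\lambda_{yz})\,\mathbf{v}(z,\cdot)=0,\]
an affine combination of the three rows (its coefficients sum to zero) that vanishes; by the affine-independence clause of \ref{rows-main} the three coefficients are zero, so $\lambda_{xy}=\lambda_{yz}=\lambda_{xz}$. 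Hence all the $\lambda_{xy}$ equal a common $\lambda>0$, and, since the difference vectors are constant on case types, $\mathbf{u}(x,c)-\mathbf{u}(y,c)=\lambda\bigl(\mathbf{v}(x,c)-\mathbf{v}(y,c)\bigr)$ for every $c\in\mbbc$, which is what the reduction in the first paragraph requires.

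I expect the one genuinely delicate point to be the passage, for a fixed pair, from ``$\delta$ and $\delta'$ induce the same sign pattern on all counting vectors'' to ``$\delta'$ is a positive multiple of $\delta$'', together with the preliminary fact that an arbitrary representing matrix respects case equivalence up to additive column constants; both are precisely the corresponding steps of \gsii, the former resting on \ref{rows-main} to guarantee the sign pattern is non-degenerate (equivalently, that the separating hyperplane meets the interior of the nonnegative orthant) and on the density there of the rays through counting vectors, and the latter on a squeeze at the indifference boundaries.
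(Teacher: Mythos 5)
Your proposal is correct and follows essentially the paper's own route: the reduction to proportionality of row differences, the within-pair rational-threshold (separating-hyperplane) argument corresponding to part \textup{(iv)} of \cref{lem-insep}, the delegation of the fact that an arbitrary representing matrix has type-constant differences to part 3 of the proof of theorem 2 of \gsii\ (exactly what the paper itself cites), and the equating of the per-pair scalars via the affine-independence clause of \ref{rows-main}, which is precisely the computation in \cref{lem-induction} that here replaces \fourdiv. So this is the same argument as the paper's, merely written out where the paper defers to \gsii.
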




The proof of \cref{thm-main} appears in online appendix \ref{sec-proof-main}. It
relies upon a translation from the abstract database/memory set up of the model
to the setting of rational vectors similar to \gsii. We show that the translated
theorem \cref{thm-mainQ} (see \cref{sec-proof-main}) is equivalent to
\cref{thm-main}. The proof of \cref{thm-mainQ} is then the subject of online
appendix \ref{sec-proof-mainQ}.

\paragraph{We appeal to the following corollary}\hskip-7pt
when we connect the present framework to the concept of arbitrage in finance. Central to this connection is
\begin{definition} 

  For $ Y \in 2 ^ { X } $, the matrix
  $ v^{ \dd } : Y^{2}\times \mbbc \rightarrow \R $ satisfies the \emph{Jacobi
    identity} whenever, for every $ x , y , z \in Y $, the rows satisfy
  $ v ^{ \xz } = v ^{ \xy } + v ^{ \yz }$.
\end{definition}
In \cref{lem-insep}, of \cref{sec-proof-mainQ}, we show that, for a given
extension $\ext$, \ref{K}--\ref{A} and \twodiv\ yield a \emph{pairwise
  representation} $v^{\dd}$ of $ \ext $.\footnote{That is, for every $x,y\in X$
  and $D \in \mbbd$, $x \extb_{D} y$ if, and only if
  $ \sum_{c \,\in D}v^{\xy}(c) \geq 0$.} 

   
\begin{theoremEnd}[link to proof]{corollary}[a characterisation of
  prudence]\label{cor-foureq}
   Let the number of case types be finite and let $\preceq_{\mbbd}$ satisfy
   \ref{c2d}. Then $\preceq_{\mbbd}$ satisfies \fourpru\ if, and only if,
   $\preceq_{\mbbd}$ has a pairwise representation $v^{\dd}$ that satisfies the
   Jacobi identity. Moreover, for every other pairwise representation
   $u^{\dd} $, there exists $\lambda >0$, such that $u^{\dd} = \lambda v^{\dd}$.
 \end{theoremEnd}
 \begin{proofEnd}
   This follows from \cref{thm-foureq}, \cref{lem-induction} and the fact
   that, via \cref{lem-test-empty-fourdiv}, \fourpru\ implies \ref{T}--\ref{A} when
   the number of case types is finite. \label{proof-cor-foureq}
\end{proofEnd}
\section{Discussion}\label{sec-discussion}
We begin by restating the existence part of the main theorem of \gsii.
\begin{theorem*}[Existence]\label{thm-gsii}
  Let there be given  $X$, $ \mbbc $ and $\preceqb_ \mbbd$, as above, such that the
  richness condition holds. Then \ref{ax-gsii} and \ref{wrap-gsii} are equivalent.

\begin{enumerate}[label=\textup{(\roman*)}]

\item\label{ax-gsii}

  \textup{\ref{T}--\ref{A}} and \textup{\fourdiv} hold for $\preceq_{\mbbd}$.

\item\label{wrap-gsii} There exists a matrix
  $ \mathbf{v} : X \times \mbbc \rightarrow \R $ satisfying \ref{rep-gsii} and \ref{rows-gsii}$\,:$
  \begin{enumerate}[label=\textup{(\alph*)}]
  \item\label{rep-gsii}
  $ \mathbf{v} $ is a representation of $ \preceq _ { \mbbd }$ that respects case equivalence$\,;$

\item\label{rows-gsii} if $\countof X < 4$, then no row is dominated by an
  affine combination of the other rows, and for every four distinct elements
  $x,y,z,w \in X$ and every $\lambda , \mu, \theta \in \R$ such that
  $ \lambda +\mu + \theta = 1$,
  $\mathbf{v}(x,\cdot ) \not \leq \lambda \mathbf{v}(y,\cdot )+\mu
  \mathbf{v}(z,\cdot)+ \theta \mathbf{v}(w,\cdot)$.
\end{enumerate}  

\end{enumerate}
\end{theorem*}

Although diversity axioms play an important technical role, they are not
obviously behavioural. Instead, diversity axioms impose restrictions on what is
beyond the predictor's control and on what is central to inductive inference:
experience. Our main contention is that $ \current $ may not be so rich as to
support $\preceqb_{\mbbd}$ satisfying \fourdiv. That is to say, there may exist
$ Y \subseteq X $ such that $ \countof Y = 4$, and such that the data is
insufficiently rich to support all $ 4 ! = 24 $ strict rankings. A casual
comparison of condition \ref{rows-main} and \ref{rows-gsii} confirms that the
present framework achieves the main purpose of accommodating the less
experienced.

For the remainder of this discussion, we take both $X$ and the set $\mbbt$ of
case types to be finite and of cardinality $m$ and $n$ respectively. Via case
equivalence, for any $\preceq_{\mbbd}$ (satisfying \ref{wrap-main} or
\ref{wrap-gsii}) we may efficiently summarise rankings using a real-valued
$m\times n$ matrix $\mathbf{v}$ on the product of $X$ and case types $ \mbbt$.

\paragraph{Comparing the complexity of \condtwodiv\ and \fourdiv,}\hskip-8pt in
the presence of the other axioms, provides a measure of the value of
experience. As an estimate, we compare condition \ref{rows-main} with
\ref{rows-gsii} of Gilboa and Schmeidler's theorem.  Verifying \ref{rows-gsii}
involves checking $n$ affine dominance constraints: one for each case type. This
is well-known to be equivalent to the complexity of linear programming with real
variables \textup{\citep{DR-Linear_programming}}. In the absence of knowledge
regarding the sparsity of $\mathbf v$, the fastest algorithm for achieving this
is of order $n^{3}$ \textup{\citep{LS_Linear_programming}}. Since this holds for
every subset of four distinct eventualities, checking \ref{rows-gsii} is of
order $\binom{m}{4}n^{3}$.  In contrast, verifying
$\mathbf{v}(x,\cdot )\not \leq \mathbf{v}(y,\cdot)$ takes at most $n$ steps for
each of the $\binom{m}{2}$ subsets of $2$ distinct elements. Likewise, for every
three distinct elements $x,y,z$ in $X$, checking for noncollinearity of two
vectors takes at most $n$ steps. Thus, a na\"{i}ve algorithm for checking
\ref{rows-main} is of order $\binom{m}{3}n$. Even for $m = 4 $ and $n = 4$, the
difference is stark: $\binom{m}{3}n = 16$ versus $ \binom{m}{4} n^3 = 64$. (The
threshold $4$ is important as we show in \cref{prop-fourdiv-empty} below.)

\paragraph{A robust test of prudence\hskip-7pt} is available precisely when
\ref{T}--\ref{c2d} hold and \fourdiv\ fails. For, when $n<\infty$, the existence
of a similarity representation satisfying \cref{wrap-main} is equivalent to
\fourpru. The test is robust in that, generically, predictors that fail to check
for the arrival of new cases also fail to satisfy \cref{wrap-main}. This is
because, the row differences
$v^\xy = - \mathbf{v}(x,\cdot) + \mathbf{v}(y,\cdot)$, the Jacobi identity
$v^{\xz} = v^{\xy} + v^{\yz}$ holds for every $x,y,z \in X$. This in turn
implies that the hyperplanes $H^{\{x,y\}}, H^{\{y,z\}}$ and $H^{\{x,z\}}$, to
which the row differences are normal, are congruent. Congruence implies the
hyperplanes are not in general position. (When $n = 2$, the same argument can be
made in terms of extensions: see \cref{eq-matrix-3} of the proof of
\cref{thm-foureq}.) In other words, there is a zero (conditional) probability of
the predictor striking lucky and appearing to be prudent when in fact they are
not engaging in this form of second order inductive inference.

\paragraph{Novelty, memory and transitivity.} An early test and taxonomy of
intransitive behaviour is due to
\citet{Weinstein-Intransitivity,Weinstein-Transitivity}. Weinstein points out
that intransivity can sometimes be rational in complex situations.
Interestingly, he shows that young are people significantly less transitive in their
choices. He also points out that the law itself is designed to accommodate
irresponsible under-age decisions. In the psychology literature,
\citet{BR-Novelty_and_intransitivity} show that presenting novel objects is more
likely to trigger intransitive choices.

More recently, \citet{Enkavi-Hippocampal_dependence} provide evidence that
people with a specific form of memory impairment (lesions in the hippocampus of
the brain)
are significantly more likely to violate transitivity in pairwise choices of
chocolate bars: even though they rank numbers transitively.\footnote{The
  hippocampus is associated with learning and
  memory. \Citet{Hassabis-Hippocampal_dependence} and
  \citet{Schacter-Hippocampal_dependence} present evidence showing that the
  hippocampus plays an important role in imagining future experiences on the
  basis of past ones. \Citet{Enkavi-Hippocampal_dependence} go further by
  showing that it plays a role in the value-based decision making framework of
  \citet{Rangel-Value-based_neurobiology}.} Whilst this literature does not yet
offer a direct test of the present model, it supports the case-based framework
of constructing preference as well as our premise that violations of
transitivity are often driven by novelty, or equivalently impaired memory.

In line with the case-based approach, the experimental evidence of
\citet{Enkavi-Hippocampal_dependence} suggests that agents are constructing
their preferences on the basis of past experience. Moreover, it seems natural to
interpret agents with hippocampal impairment as inexperienced predictors. The
fact that impaired agents then make intransitive decisions is very much in line
with what our model predicts as they are, in effect, facing a novel situation
and are required to construct their preferences on the fly. It appears that
impaired agents are also failing to be prudent, though it is not obvious that
chocolates warrant the additional neural computation that accompanies \fourpru.
\begin{remark*}
  A closer look at the relationship between the proportion $\rho$ of hippocampal
  impairment and the percentage $\sigma$ of intransitive choices \citep[in
  figure 2 of][]{Enkavi-Hippocampal_dependence} suggests another interpretation.
  For $\rho $ above $ \frac{1}{4}$, $\sigma $ is above $ 20\%$$:$ twice as high as
  it is for $\rho < \frac{1}{4}$. Memories and the rankings they generate are
  latent variables to the observable $\rho$ and this threshold is where
  \condtwodiv\ fails to hold and our model breaks down. The $16$ cases in the
  data are thus partitioned into three groups$:$ two that satisfy \threediv\
  ($\rho < 0.05 $ and $\sigma< 5\%$)$;$ $12$ intermediate cases that satisfy
  \condtwodiv\ ($0.05 \leq \rho \leq 0.25 $ and $5\%\leq \sigma \leq 10\%$)$;$ and
  $2$ severe cases that fail to satisfy \condtwodiv\ ($0.25 < \rho$ and
  $10\% < \sigma$).
\end{remark*}

\paragraph{The success or failure of startups. \hskip-7pt} Inexperience raises
significant barriers to entry. Overcoming these barriers is either the result of
making mistakes and learning by doing ``on the fly'' or the result of being
prudent. Which form of second-order induction bears out in practice will depend
on many factors.

The following proposition confirms our thesis that experienced predictors (\ie\
those that satisfy \fourdiv) have indeed encountered a high number of case
types. As the main theorem of \gsii\ shows, experienced predictors have no need
for the additional structure of extensions. Unless the prediction problem
changes (\eg\ new eventualities become relevant),  they have no need to engage in second order induction. This saving in
cognitive effort is the prize that experience confers.

\begin{theoremEnd}{proposition}[experience and case types]\label{prop-fourdiv-empty}
  If $\preceqb_{\mbbd}$ satisfies \ref{K}--\ref{A} and \fourdiv, then
  $ n \geq \min \{4, m\} $, and, for every $Y \subseteq X$ of cardinality $m'$
  and regular $Y$-extension $\ext $, the number $n'$ of equivalence classes of
  $\sim^{\ext}$ satisfies $ n' \geq \min \{4,m'\}$.
  \end{theoremEnd}
  \begin{proofEnd}
    Via \cref{lem-axiomsQ}, \ref{K}--\ref{A} and \fourdiv\ hold for
    $\preceq_{\mbbd}$ if and only if \ref{KQ}--\ref{AQ} and \fourdiv\ hold for
    $\preceq_{\mbbj}$. Let $Y\subseteq X$ be of cardinality $m' = 1, 2, 3$ or
    $4$ and let $\ext$ be a regular $Y$-extension. Via \cref{lem-insep}, there
    exists a pairwise representation $v^{\dd} $. For $m' = 1$, $n' = 1 $ because
    $\ext_{J}$ is constant on $\mbbjp$.  For $m' = 2$, $n' \geq 2$, since via
    part \ref{D-insep} of \cref{lem-insep}, $G^{\xy}$ and $G^{\yx}$ are both
    nonempty.

    By way of contradiction, first suppose $n' = 2$ and $m' \geq 3$. Via
    \cref{eg-zaslavski} of appendix \ref{sec-proof-mainQ},
    $\total(\ext) \leq 4$. In contrast, \fourdiv\ requires $\total(\ext) = 6$.
    The remaining case is where $n' = 3$ and $m' \geq 4$. If the rank
    $\mathbf r$ of $v^{\dd}$ satisfies $\mathbf r \geq 3$, then the kernel
    $A^{Y}$ of $v^{\dd}$ is zero-dimensional. Then $0$ is the unique element of
    $ A^{Y}$. Thus, the positive kernel $A^{Y}_{\mdoubleplus}$ of $v^{\dd}$ is
    empty. Then Zaslavski's theorem implies that $\total(\ext)< 4!$, so that
    \fourdiv\ fails to hold.  If $\mathbf r \leq 2$, then an application of the
    rank version Zaslavski's theorem (in particular \cref{eq-zaslavski-4} with
    $\acute{\mathbf r } = \mathbf r = 2$) yields
  \begin{linenomath*}
    \[ \lvert \mc G_{\mdoubleplus} \rvert \leq 1 - 6 + 15 - 20 + 15 + 6 + 1 =
      12. \]
\end{linenomath*}
Thus, once again \fourdiv\ fails to hold. Thus $n' \geq \min \{4, m'\}$, as
required.  Finally, since $Y\subseteq X$, $m\geq m'$, and, via part
\ref{item-dimension} of \cref{def-extension}, $n \geq n'$.
\end{proofEnd}
In contrast, \condtwodiv\ implies no restrictions on the cardinality of
$\mbbt$ beyond $n \geq 2$ and this is also a virtue of \twodiv.

\paragraph{When is prudence worth the trouble?}
The simple answer to this question is: when revising a model ``on the fly'',
once a novel case arrives, is costly.  The following is our main example of such
a setting.
\begin{example}\label{eg-arbitrage}
  Consider a fair market maker of zero-coupon (treasury) bonds.\footnote{Similar
    to a fair insurer, the fair market maker sets the market spread to zero.}
  The compound-interest formula for the accumulation process of such a bond
  is
  \begin{linenomath*}
    \[a^{\xy}= \left(1+r^{\{x,y\}}\right)^{-x + y}, \]
  \end{linenomath*}
  where $r^{\{x,y\}}$ is the implied yield on a forward contract that accrues
  interest between dates $x$ and $y$. If $ x$ is later than $y$, then the
  contract is to sell, and the market maker pays this yield, so that
  $r^{\{y,x\}} = r^{\{x,y\}}$.  Let $X \subseteq \R_{\mplus}$ index a suitable
  sequence of trading dates with $0 \in X$ being the spot date.  It is well
  known that the (normalised) spot bond price
  $x \mapsto b(x) = \left(1+r^{\{0,x\}}\right)^{-x} $ is arbitrage-free if, and
  only if, the log-accumulation process satisfies
  \begin{linenomath*}
  \begin{equation}\label{eq-no-arbitrage}
    \text{for every $x,y,z \in X$,}\quad   \log a^{\xy} = \log a^{\xz} +  \log a^{\zy}
    \,.\footnote{To see this, suppose that, for some
      $x< y$, she sets $ a^{\xy} < a^{\xz}a^{\zy}$.  Another trader would do well
      to sell the forward contract $\xy$, buy the spot contract $\xz$ and sell the
      spot contract $\zy$.  A risk-free arbitrage opportunity is also
      available if the reverse inequality holds.}
  \end{equation}
\end{linenomath*}
This no-arbitrage condition is a special case of the Jacobi
identity. We now explain how a market maker might infer the
accumulation process from past cases.

Let each case $c \in D^{\star}$ consist of market-relevant data from a given
time interval (a block of time periods) in the past. These blocks are chosen so
that, for any finite resampling $D $ of cases from $D^{\star}$, the sequence of
cases that makes up $ D $ is exchangeable. The free case $\novel $ has no
additional structure beyond that of
\cref{sec-model}.  
Next, for every finite resampling $D$ and every date $x$ and $y$, let
$x \preceq_{D} y$ if, and only if, in answer to the question ``At which date
will the price be higher?'' the market maker finds that $y$ is more plausible
than $x$.

For the following corollary, we introduce the empirical implied yield function
that maps triples $x \times y \times c $ to $ r^{\{x,y\}}_{c} \in \R$.  This
function is characterised by three conditions: for time intervals of length
zero, the yield is zero; fair pricing; and case equivalence.  These are
respectively formalised as follows: for every $x,y\in X$ and every
$c , d\in \mbbc$, $ r^{\{x,x\}}_{c} = 0$; $r^{\{y,x\}}_{c} = r^{\{x,y\}}_{c}$;
and $c \sim^{\star} d$ if, and only if, $r^{\{x,y\}}_{c} = r^{\{x,y\}}_{d}$. The
empirical bond price function $B : X \times \mbbd \rightarrow \R$ maps every
pair $x \times D$ to
\[ B(x,D) = \prod_{c\, \in D}\left( 1 + r^{\{0,x\}}_{c}
  \right)^{-\frac{x}{ \lvert D \rvert}}.
  \]

When $D^{\star}$ belongs to $\mbbd$, the number of case types is finite, and we have
\begin{corollary}\label{cor-arbitrage}
  Let $\preceq_{\mbbd}$ satisfy \ref{c2d}. Then $\preceq_{\mbbd}$ satisfies
  \fourpru\ if, and only if, there exists empirical implied yield and
  empirical bond price functions, such that
  \begin{linenomath*} 
    \begin{equation}\tag{$*$}\label{eq-bond-rep}
 \left\{
  \begin{array}{l}
    \text{for every $ x , y \in X$ and every $ D \in \mbbd $,}\\
      x \preceq_{D} y \quad \text{if, and only if,}\quad B(x,D) \leq B(y,D).
  \end{array}\right.
      \end{equation}
    \end{linenomath*} 
    Moreover, for every $D\in \mbbd$, the bond price
    $B(\cdot,D)$ is arbitrage-free.
\end{corollary}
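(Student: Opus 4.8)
The plan is to read \cref{cor-arbitrage} off \cref{cor-foureq} via a change of variables. Under the standing hypotheses — finitely many case types and \ref{c2d} — \cref{cor-foureq} already tells us that \fourpru\ is equivalent to $\preceq_{\mbbd}$ admitting a pairwise representation $v^{\dd}$ satisfying the Jacobi identity $v^{\xz} = v^{\xy} + v^{\yz}$. It therefore suffices to set up a dictionary between such matrices $v^{\dd}$ and pairs consisting of an empirical implied yield function together with its associated empirical bond-price function $B$ satisfying \eqref{eq-bond-rep}, under which the Jacobi identity translates simultaneously into \eqref{eq-bond-rep} and into the no-arbitrage identity \eqref{eq-no-arbitrage} (which the text already notes is a special case of the Jacobi identity).

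For the ``only if'' direction I would start from the Jacobi pairwise representation $v^{\dd}$ given by \cref{cor-foureq}; note the Jacobi identity at once forces $v^{(x,x)} = 0$ and $v^{\xy} = -v^{\yx}$. As $X\subseteq\R_{\mplus}$ consists of distinct dates we have $y - x \neq 0$ for $x\neq y$, so I would set
\[ r^{\{x,y\}}_{c}\defeq \exp\!\left(\frac{v^{\yx}(c)}{y-x}\right) - 1, \qquad r^{\{x,x\}}_{c}\defeq 0, \]
which keeps $1+r^{\{x,y\}}_{c}>0$ and gives $\log a^{\xy}_{c} = v^{\yx}(c)$ for $a^{\xy}_{c}\defeq(1+r^{\{x,y\}}_{c})^{-x+y}$. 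I would then verify the three defining properties of an empirical implied yield function: the degenerate-interval property holds by fiat; fair pricing $r^{\{y,x\}}_{c}=r^{\{x,y\}}_{c}$ follows from $v^{\xy}=-v^{\yx}$ together with $x-y=-(y-x)$; and case equivalence holds because $v^{\dd}$ respects case equivalence, so $v^{\dd}(c)$ and $v^{\dd}(d)$ are equal columns exactly when $c\sim^{\star}d$. With $B$ the associated bond-price function from the displayed product formula, the Jacobi identity in the form $\log a^{(0,x)}_{c} - \log a^{(0,y)}_{c} = v^{(x,0)}(c)-v^{(y,0)}(c) = v^{\xy}(c)$, together with $\log B(x,D) = -\tfrac{1}{\lvert D\rvert}\sum_{c\in D}\log a^{(0,x)}_{c}$, yields the one-line identity
\[ \lvert D\rvert\bigl(\log B(y,D) - \log B(x,D)\bigr) = \sum_{c\in D} v^{\xy}(c), \]
so \eqref{eq-bond-rep} is just the pairwise-representation identity for $v^{\dd}$ and hence holds. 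For the ``moreover'' clause, set $\log A^{\xy}_{D}\defeq\tfrac{1}{\lvert D\rvert}\sum_{c\in D}\log a^{\xy}_{c}$; each $\log a^{\dd}_{c}$ inherits the Jacobi/no-arbitrage identity from $v^{\dd}$, hence so does the average $\log A^{\dd}_{D}$, and since $B(x,D) = \bigl(A^{(0,x)}_{D}\bigr)^{-1}$, \eqref{eq-no-arbitrage} certifies that $B(\cdot,D)$ is arbitrage-free.

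For the ``if'' direction, I would take the given empirical implied yield and bond-price functions satisfying \eqref{eq-bond-rep} and put $w^{\xy}(c)\defeq x\log(1+r^{\{0,x\}}_{c}) - y\log(1+r^{\{0,y\}}_{c})$. Then $w^{\dd}$ satisfies the Jacobi identity for free, since it telescopes: $w^{\xy}(c) = g_{c}(x) - g_{c}(y)$ with $g_{c}(t)\defeq t\log(1+r^{\{0,t\}}_{c})$, whence $w^{\xz} = w^{\xy} + w^{\yz}$. Moreover $\sum_{c\in D}w^{\xy}(c) = \lvert D\rvert\bigl(\log B(y,D) - \log B(x,D)\bigr)$, which is nonnegative precisely when $B(x,D)\le B(y,D)$, i.e.\ (by \eqref{eq-bond-rep}) precisely when $x\preceq_{D}y$. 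So $w^{\dd}$ is a pairwise representation of $\preceq_{\mbbd}$ satisfying the Jacobi identity, and \cref{cor-foureq} returns \fourpru.

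I do not anticipate a genuine obstacle: the only content beyond \cref{cor-foureq} is this bijective re-parametrisation, and the single place that demands care is the sign and exponent bookkeeping — in particular, recognising that the fair-pricing symmetry $r^{\{y,x\}}=r^{\{x,y\}}$ is nothing but the antisymmetry $v^{\xy}=-v^{\yx}$ forced by the Jacobi identity, and that \eqref{eq-no-arbitrage} and the Jacobi identity are a single additive-cocycle condition, written multiplicatively in the bond language and additively in the similarity language.
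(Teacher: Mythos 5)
Your proposal is correct and, for the ``only if'' direction, follows the paper's own proof essentially step for step: the same definition $B(x,D)=\exp\bigl(-\tfrac{1}{\lvert D\rvert}\sum_{c\in D}v^{(x,0)}(c)\bigr)$, the same yields $1+r^{\{x,y\}}_{c}=\exp\bigl(v^{\yx}(c)/(y-x)\bigr)$, and the same verification of symmetry, the degenerate-interval property, case equivalence, \eqref{eq-bond-rep} and the no-arbitrage identity from the Jacobi identity. The only difference is that the paper's written proof stops there, leaving the converse implicit in the ``if and only if'' of \cref{cor-foureq}, whereas you make it explicit via the telescoping construction $w^{\xy}(c)=g_{c}(x)-g_{c}(y)$ with $g_{c}(t)=t\log\bigl(1+r^{\{0,t\}}_{c}\bigr)$, which correctly yields a Jacobi pairwise representation of $\preceq_{\mbbd}$ from \eqref{eq-bond-rep} and hence \fourpru; this is a sound and slightly more complete rendering of the same argument.
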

\begin{proof}[Proof of \cref{cor-arbitrage}]
  Via \cref{thm-foureq}, there exists a pairwise Jacobi representation
  $v^{\dd}: X^{2} \times \mbbc \rightarrow \R$ such that, for every
  $c,d\in \mbbc$, $v^{\dd}(c) = v^{\dd}(d)$ if, and only if, $c \sim^{\star}
  d$. Moreover, $v^{\dd}$ is unique upto multiplication by a positive
  scalar. Then, for every $x, y, z \in X$, $v^{(x,x)}(\cdot) = 0$,
  $v^{\yx} = - v^{\xy}$, and $v^{\xy} = v^{\xz} + v^{\zy}$.

  Recalling that $0 \in X$, for every $x\in X$ and $D \in \mbbd$, let
  \begin{linenomath*}
    \begin{equation}\label{eq-bond-vxy}
      \textstyle B(x,D) = \exp\left( -\frac{1}{\lvert D \rvert} \sum_{c\,\in D} v^{(x,0)}_{c}
      \right).
      \end{equation}
\end{linenomath*}
For the proof of \eqref{eq-bond-rep}, recall that $x \preceq_{D} y$ if,
and only if, $\sum_{c\,\in D} v^{\xy}(c) \geq 0$. Since
$- v^{(y,0)}_{c} = v^{(0,y)}_{c}$ and, via the Jacobi identity,
$v^{(x,0)} + v^{(0,y)} = v^{\xy}$, we have:
\begin{linenomath*}
  \[\textstyle
    - \log B(x,D) + \log B(y,D) = \frac{1}{\lvert D\rvert} \sum_{c\,\in D}
    \left(v^{(x,0)}_{c} - v^{(y,0)}_{c}\right) = \frac{1}{\lvert D\rvert}
    \sum_{c\,\in D} v^{\xy}_{c}.
\]
\end{linenomath*}
It remains for us to confirm that the bond price is a suitable function of the
empirical yield function. For every $x,y \in X$ and $c \in \mbbc$,
$ \log a^{\xy}_{c} = -v^{\xy}(c)$: so that, as the solution to
$ (y - x )\log (1+r^{\{x,y\}}_{c}) = -v^{\xy}_{c} = v^{\yx}_{c}$, for
$x \neq y$,
\begin{linenomath*}
  \begin{equation}\label{eq-gross-yield-vxy}
    1+ r^{\{x,y\}}_{c} = \exp\left(\frac{v^{\yx}_{c}}{y-x}\right) =
    \exp\left( \frac{v^{\xy}_{c}}{x-y}\right) = 1+r^{\{y,x\}}_{c}.
    \end{equation}
\end{linenomath*}
We therefore observe that $r^{\{x,y\}}_{c} = r^{\{y,x\}}_{c}$. For $x = y$,
$v^{\xy}_{c}= 0$ ensures that we can take $r^{\{x,x\}}_{c} = 0$. Finally, note
that for $c \sim^{\star} d$, the property $r^{\{x,y\}}_{c} = r^{\{x,y\}}_{c}$ is
inherited from $v^{\xy}_{c} = v^{\xy}_{d}$, so that we have an empirical implied
yield function.

The fact that, for every $D$, $B(\cdot, D)$ is arbitrage-free follows by virtue
of the fact that $v^{\dd}$ satisfies the Jacobi identity. 
\end{proof} 
In the present setting, because we are modelling a normalised bond price, we
obtain a stronger uniqueness result relative to part II of \cref{thm-main}. If
$\tilde B$ is another function that satisfies the present corollary, then, for
every $D \in \mbbd$, the spot price $\tilde B(0,D) = 1$. Thus, via
\cref{eq-bond-vxy} and \cref{cor-foureq}, for some $\lambda >0$,
$\tilde B = \textup{e}^{-\lambda} B$.

We now point out an interesting implication of \condtwodiv\ the recent
prevalence of negative interest rates. \Wlog, fix $x<y$. Then, given \fourpru,
\twodiv\ implies that there exists $c,d \in \mbbc$ such that
$v^{\xy}(c) < 0 < v^{\xy}(d)$. This is equivalent to
$v^{\yx}(d) < 0 < v^{\yx}(c)$, and, via \cref{eq-gross-yield-vxy},
$r^{\{x,y\}}_{d} < 0 < r^{\{x,y\}}_{c}$. That is, \twodiv\ requires that the
market maker's data is rich enough to contain at least one case where the yield
between date $x$ and $y$ is negative (as well as one where it is
positive). 
\Condtwodiv\ extends this notion to require that
$r^{\{x,y\}}_{D} < 0 < r^{\{x,y\}}_{C}$ for some $C$ and $D$ such that
$r^{\{x,z\}}_{C}\cdot r^{\{x,z\}}_{D} >0$.
  \end{example}

  \paragraph{Discussion of second-order induction.}\vskip-8pt
  The market maker of \cref{eg-arbitrage} engages in second-order induction when
  she acts prudently. She reflects on her model by checking that the basic
  axioms of \gsii\ will continue to hold when a novel case arrives. By way of
  contrast, suppose  the bond price of the market maker is such that
  $\preceq_{\mbbd}$ is consistent with the basic axioms, but not \fourpru. Then
  when a novel case arrives, she may be exposed to arbitrage and need to
  respecify her entire model ``on the fly''. Such a step corresponds to the
  intermittent respecification of her similarity weighting function
  $\mathbf{v}(x,c)$, that \citet[p.10324]{AG-Second-order_induction}
  describe. In \citet{AG-Second-order_induction}, the ``leave-one-out''
  technique of cross-validating the model by omitting a case of each type is
  intuitively and operationally close to our inclusion of the free case
  $\novel$. The difference is that by allowing $\novel $ more degrees of
  freedom, our market maker can study novel extensions and peer into the future
  through the lens of her current model. She can exploit the intervals of time
  inbetween the arrival of novel cases by continuously engaging in second-order
  induction.

  Through an example, we now show that the present framework provides the
  flexibility to accommodate second-order induction without sacrificing the
  computational or normative advantages that additive similarity functions
  provide.


  \begin{example*}[second-order induction, \gsii, p.12] \label{eg-john_mary} Let
    $c$ denote a case where Mary chooses restaurant $x$ over restaurant $y$. In
    the absence of any further information, it is tempting to assume some
    similarity between John and Mary. The predictor then finds it plausible that
    John prefers $x$ to $y$ given $\{c\}$. A separate database $D$ contains no
    choices between $x$ and $y$. Thus, in the absence of further information,
    $x$ and $y$ appear equally likely based on $D$. Additivity of the similarity
    function (or \ref{C})) implies it is plausible that John prefers $x$ to $y$
    given $\{c\} \cup D$. The violation of \ref{C} arises when a more careful
    examination of the contents of $D$ reveals many choices between other pairs
    of restaurants where John and Mary consistently differ.
\end{example*}
Quine's notion of perceptual similarity \citep{Quine-Roots_of_reference} offers
a check on the predictor's inference about John's choice given $\{c\}$. John and
Mary may just as well be two drivers passing through an intersection at
different times. Although their situations are broadly speaking very similar, if
one faces a red light and the other a green light, their responses will
differ. In the restaurant setting, some pivotal information is omitted from $c$.
Observing that the evidence in $c$ in favour of $x$ over $y$ is somewhat weak, a
prudent predictor instead recasts $\{c\}$ as a database $C$ that combines past
observations with copies of the pivotal novel case $\novel$.  With a more
refined model, that explicitly allows for omitted variables, the predictor can
check to see if her model extends to higher dimensions without violating the
basic axioms.

\begin{remark}
Observe that predictors that only fail to satisfy \ref{C} can, with some
additional regularity conditions, still be represented by a nonlinear function
$u: X \times \mbbd \rightarrow \R $ such that for every $D \in \mbbd$ and every
$x,y \in X $, $x \preceq_{D} y$ if, and only if, $ u(x,D)\leq u(y,D)$
\citep[see][]{OCallaghan-Parametric_continuity}. But predictors that satisfy
\ref{T}--\ref{c2d} but not \fourpru\ can also be represented by such a function:
because $\preceq_{\mbbd}$ is complete and transitive for each $D \in \mbbd$. The
present framework allows us to disentangle the latter kind of predictor from
those who, for good reason, fail to satisfy the combination axiom. (See \gsii\
for  examples of such reasons.)
\end{remark}

\paragraph{On the veracity of false news.} How should a predictor check whether
her model consistently extends to higher dimensions (when novel cases arrive)?
If our model is a guide then, the most useful rankings that she might wish to
examine are those that are far from her own. This is because our definition of
testworthy extensions involves assigning to the novel case $\novel$ the inverse
of some total ranking $\preceq_{D}$.  This may offer some rationale for why
information that differs from our own is intrinsically valuable. Testworthy
extensions play a vital role in taming the complexity of our proof. It seems
plausible that something similar may be at play when agents encounter radically
different information from their own on social media: \emph{even if it is
  fake}. This may help to explain why false news is significantly more veracious
than real news online \citep{Vosoughi-Roy-Aral-Veracity}. The fact that real
news is typically closer to what we have observed in the past means that it is
of less value to the prudent predictor that finds it costly to imagine worlds
that are far from her own.

\bibliographystyle{apalike} \bibliography{references}
\makeatletter
\def\@seccntformat#1{Appendix\,\csname the#1\endcsname.\quad}
\makeatother
\begin{appendices}
\section{The proof of \cref{thm-main}}\label{sec-proof-main}



As in \gsii, we  translate the model into one where
databases are represented by vectors, the dimensions of which are
case types. To allow us to focus on aspects of the present model, 
proceed directly to rational vectors and present the axioms  and a corresponding
theorem (\cref{thm-mainQ}) which, as we confirm, holds if,
and only if,  \cref{thm-main} does. The proof of \cref{thm-mainQ} can be found
in \cref{sec-proof-mainQ}.

\paragraph{Case types as dimensions.}
From our definition of case types in  \cref{sec-model}, $ \mbbt = \mbbc_{/\sim^{\star}} $ and
$ \mbbt^\novel \defeq \mbbt\cup [ \novel ]$. Let $ \mbbtpp $ be a free variable
in $ \{ \mbbt , \mbbtp \}$.  When no possible confusion should arise, we use
$ \novel $ as shorthand for $ [ \novel ] $.  It is straightforward to show that
the following construction would work if instead we were to work with any
partition $\textup T$ of $\mbbc $ that is at least as fine as $\mbbt$. The
present construction is the one with the lowest feasible number $\countof \mbbt$
of dimensions.

\paragraph{Translation to  counting vectors.}
Let $ \nnint $ denote the set of nonnegative integers and $ \posint $ those that
are (strictly) positive.  Let $ \mbbi \subseteq \nnint ^ { \mbbt } $ denote the
set of counting vectors $ L : \mbbt \rightarrow \nnint $ such that
$ \lbc t : L ( t ) \neq 0 \rbc$ is finite and let $\mbbip$ denote the
corresponding subset of $ \nnint^{\mbbtp}$. Then let
\begin{linenomath*}
\begin{equation*}
\mbbipp = \left\{
\begin{array}{ll}
 \mbbi & \text{if, and only  if, $\mbbtpp= \mbbt$, and}\\
\mbbip  &\text{otherwise.}
\end{array}\right.
\end{equation*}
\end{linenomath*}
Modulo notation, the following construction is identical to \gsii.  For every
$D\in \mbbd$, let $L_{D}: \mbbt \rightarrow \nnint $ denote the function
$t \mapsto L_{D}(t) = \countof (D \cap t )$.  For each $D \in \mbbd$, let
$\preceqb_{L_{D}} \defeq \preceqb_{D}$.  We need to establish that
$ \preceqb _{ \mbbi } \defeq \langle \preceqb _{L } : L \in \mbbi \rangle $ is
well-defined. For every $L\in \mbbi$, the richness assumption (on $\mbbtp$)
guarantees the existence of $D\in \mbbd$ such that $L_{D}= L$. By definition,
$\sim^{\star}$ is such that, for every $C,D \in \mbbd$, $C\sim^{\star}D$ if, and
only if, $L_{C}= L_{D}$. Straightforward mathematical induction on the
cardinality of $C$ shows that $C\sim^{\star} D$ implies
$\preceqb_{C} = \preceqb_{D}$.  This construction of $ \preceqb_{\mbbi}$
ensures that the same notion of equivalence that we introduced in \cref{obs-reg-eq}
also applies here. Thus, $ \preceqb_{\mbbi} \equiv \preceqb _{\mbbd} $.

\paragraph{Translation to rational vectors.}
Similarly, let $\nnrat$ denote the nonnegative rationals and $\posrat$ those
that are (strictly) positive.  Take $ \mbbj \subseteq \nnrat^{\mbbt}$ to be the
set of rational vectors with $ \lbc t \in \mbbt : J ( t ) \neq 0 \rbc $ finite
and take $ \mbbjp $ to denote the corresponding subset of $\nnrat ^{ \mbbtp } $.
For each $ J \in \mbbj $, by virtue of the fact that $\posint$ is well-ordered
and $J$ has finite support, there exists (unique) minimal $ k_{J} \in \posint $
such that $ L_{J} \defeq k_{J} J $ belongs to $ \mbbi $.  Let
$ \preceqb_{J}\defeq \preceqb_ { L_{J} }$. (This definition acquires meaning
below once we translate and apply the combination axiom.)  In this way,
$ \preceqb _{ \mbbj } = \langle \preceqb_ J : J \in \mbbj \rangle $ is
well-defined, and we may introduce axioms for $\preceq_{\mbbj}$ directly: \ie\
without first introducing axioms for $\preceq_{\mbbi}$. We first demonstrate
that $\preceqb_{\mbbj}$ and $\preceqb_{\mbbd}$ are equivalent. First note that,
for every $I,J\in \mbbj$ such that $L_{I} = L_{J} $,
$\preceqb_{I}=\preceqb_{J}$. Then, let $L^{\prime} = L_{J}$ and take any $D$
such that $L_{D} = L^{\prime}$. Then $\preceqb_{J} = \preceqb_{D}$.  The reverse
embedding follows by virtue of the fact that $\mbbi\subset \mbbj$. Thus,
$ \preceqb_{\mbbj}\equiv \preceqb_{\mbbd}$.



\paragraph{Construction of extensions of $\preceqb_{\mbbj}$.}
We follow common practice by letting $ 2 ^{ X } $
denote the collection of nonempty subsets $ Y \subseteq X $.
For each $ Y \in 2 ^{ X } $, we will denote the set of regular, novel and
testworthy $Y$-extensions (of $\preceqb_{\mbbd}$ or $\preceqb_{\mbbj}$) by
$ \reg ( Y , \cdot )$, $ \nov ( Y , \cdot)$ and
$ \test ( Y , \cdot)$ respectively. Recalling that every
$Y$-extension is either regular or novel, let $ \Ext (Y,\cdot)$
denote the set of all $Y$-extensions. We now clarify
what it means to be an extension of $\preceqb_{\mbbj}$.

For each $t \in \mbbtpp$, we take  $\delta_{t} : \mbbtpp \rightarrow \R$ to be
the function satisfying $ \delta_{t} ( s ) = 1 $ if $ s = t $ and $ \delta_{t} ( s ) = 0 $
otherwise. (When $\mbbtpp$ is finite, these are simply the basis vectors for $\R^{\mbbtpp}$.)  When we wish to emphasise that the  vectors belong to in
$\R^{\mbbtp}$, then, for each $\mbbtp $, we will write $\delta_{t} ^{ \novel }$.
Let
\begin{linenomath*}
\begin{equation*}
 \mbbjpp  = \left \{
\begin{array}{ll}
\mbbj & \text{ if, and only if, $\mbbtpp = \mbbt$, and}    \\
 \mbbjp & \text{ otherwise.}
\end{array}\right.
\end{equation*}
\end{linenomath*}
For every $I \in \mbbj$ and $J \in \mbbjpp$, we write $I \equiv J $ whenever
$I = J $ or $J = I\times 0 $. (In the latter case, $J(t)= I(t)$ for every
$t\in \mbbt$ and $J(\novel )= 0 $.) This notion reflects the fact that, for the purposes
of the present model, such $I $ and $J$ are equivalent.
\begin{definition}\label{def-extensionQ}
 
$\extb =\langle \extb_{J}: J \in \mbbjpp \rangle$ is an
  \emph{extension}, and in particular a $Y$-\emph{extension}, of
  $ \preceqb _{ \mbbj }$ if, and only if, for some nonempty
  $ Y \subseteq X $ both the following hold
\begin{enumerate}

\item for every $J\in \mbbjpp$, $\extb_{J}\in \relations(Y)$,
  $ \nextb_{J} \defeq \extb_{J} \cap \extb_{J} ^ { -1 }$ and
  $ \sextb_{J} \defeq \extb_{J} \bs \extb_{J} ^ { -1 } $$;$

\item for every $ J \in \mbbj $ and $ L \in \mbbjpp $ such that $J \equiv
  L$, $\extb_{L} = \preceqb_{J} \cap (Y ^ 2) $. \label{item-preservingQ}
\end{enumerate}
An extension $\ext_{\mbbjpp}$ (of $\preceq_{\mbbj}$) is proper if $\mbbjpp=
\mbbjp$ and improper otherwise.  A proper extension is either regular or
novel. $\ext$ is novel if, for every $ s \in \mbbt $, there exists $ I $ in $
\mbbj$ such that, for $ J = I \times 0 $ (in $\mbbjp$), we have $\extb _ { J +
  \delta _{s} ^{ \novel } } \neq \extb _ { J + \delta _ { \novel} ^{ \novel }
}$\,.
 \end{definition} 

 For every regular $Y$-extension $ \ext$ of $\preceq_{\mbbd}$ such that $Y=X$,
 \cref{obs-reg-eq} implies $\extb\equiv \preceqb_{\mbbd} $. And, via
 $\preceqb_{\mbbj}\equiv \preceqb_{\mbbd}$ and transitivity of equivalence, we
 conclude that $\ext$ is equivalent to $\preceq_{\mbbj}$.  Two sets of
 extensions are isomorphic if there exists a canonical isomorphism between
 equivalent extensions. 


\begin{theoremEnd}[no link to proof]{lemma}[proof on \cpageref{proof-nov-iso}]
  \label{lem-nov-iso}

  For every $Y \in 2 ^ { X } $, $\reg(Y, \preceqb_{\mbbj}) $ is isomorphic to
  $ \reg(Y,\preceqb_{\mbbd})$ and
  $\nov ( Y , \preceqb_{\mbbj} ) $ is isomorphic to $ \nov ( Y , \preceqb _ \mbbd )$.


\end{theoremEnd}
\begin{proofEnd}
  \label{proof-nov-iso}

  We show that there exists a canonical embedding (a structure preserving
  injection) of $ \nov ( Y , \preceqb _ \mbbj ) $ into
  $ \nov ( Y , \preceqb _ \mbbd ) $. The fact that this map is also
  surjective follows from the fact that $ \nov ( Y , \preceqb _ \mbbd ) $
  can be embedded in $ \nov ( Y , \preceqb _ \mbbj ) $ in precisely the same
  way. The proof that the two sets of regular extensions are isomorphic follows
  via a similar argument plus the observation that every $Y$-extension is either
  regular or novel.

  Take $ \extb \in \nov ( Y , \preceqb _ \mbbj )$ and define
  $ \hextb = \langle \hextb_{C} : C \in \mbbdp \rangle$ via the property: for
  each $ C \in \mbbdp $, $ \hextb_{C}\defeq \extb_{J} $ if, and only if,
  $ L_{C}= L_{J} $, where, as before, $t \mapsto  L_{C}(t) $ counts the number of cases of
  type $t$ in $C$ and $ L_{J} = \kappa_{J} J \in \mbbip $ for some minimal
  $\kappa_{J} \in \nnint$. Now, for any
  $ \extb ' \neq \extb $ in $ \nov ( Y , \preceqb _ \mbbj ) $, there exists
  $J\in \mbbjp$ such that $\extb'_{J}\neq \extb_{J}$. If we define $ \hext' $
  analogously, so that it is equivalent to $\ext'$, then
  $ \hextb ' \neq \hextb $. As a consequence, the canonical mapping
  $\extb\mapsto \hextb$ is injective. If we can show that $ \hext $ does
  in fact belong to $ \nov ( Y , \preceqb _ \mbbd )$, then we have constructed
  the required embedding. The fact that $ \hext $ satisfies
  \ref{item-binary-rel} and \ref{item-preserving} of \cref{def-extension}
  follows immediately from \cref{def-extensionQ}. The proof that \cref{item-dimension}
  of \cref{def-extension} holds is as follows. Take any $ c , c ' \in \mbbcp $
  and $ D \in \mbbdp $ such that $ c \sim ^ \star c ' $ and
  $ c , c ' \notin D $. First, observe that
  $ D \cup \{c\} \sim ^ \star D \cup \{c'\} $, and moreover, for some $ t \in \mbbtp $
  we have $ c , c ' \in t $. Then, for every $t\in \mbbtp$,
  $ \lvert D \cup \{c\} \rvert = \lvert D \cup\{c'\} \rvert = L$ for some
  $ L \in \mbbip\cap \mbbjp $. Thus
  $ \hextb _ { D \cup \{c\} } = \hextb _ { D \cup \{c'\}}$, as required for $ \hext $
  to be an extension of $ \preceqb _{ \mbbd }$. Finally, via \cref{def-extensionQ},
  the definition of a novel extension ensures that the induced equivalence
  relation $\sim^{\extb}$ on $\mbbcp $ satisfies $ c \not \sim^{\extb} \novel $ for
  every $ c \in \mbbc $. Since $ \sim ^{ \hextb}$ inherits this property,
  $ \hext $ is novel.
\end{proofEnd}

\paragraph{Axioms and theorem.} 
We now restate the axioms for $Y$-extensions $ \ext $ of $ \preceqb  _ \mbbj $.
\begin{enumerate}[label=\textup{A\arabic*}$^\flat$]
 \setcounter{enumi}{-1}
  
\item\label{TQ}

  For every $ J \in \mbbjpp $, $\ext _ J $ is transitive on $ Y $.

\item\label{KQ}

  For every $ J \in \mbbjpp $, $ \ext _ J $ complete on $ Y $.

\item\label{CQ}

  For every $ I , J \in \mbbjpp $, every $ x , y \in Y $ and every
  $ \lambda , \mu \in \posrat $, if $ x \ext _ I y $  and
  $ x \ext_{J} y $, then $ x \ext_{ \lambda I + \mu J } y $$;$ moreover, if $x
  \sext_{I} y$ and $x \ext_{J} y$, then $x \sext_{\lambda I + \mu J}$.

\item\label{AQ}

  For every $ I , J \in \mbbjpp $ and every $ x , y \in Y $ if
  $ x \mathrel{\sext} _ J y $, then there exists $0 < \lambda < 1$ such that,
  for every $ \mu \in \mbb Q \cap ( \lambda , 1 ) $,
  $ x \mathrel{\sext _{ ( 1 - \mu ) I + \mu J }} y $.

\end{enumerate}

For $k = 2, 3, 4$, $k$-diversity is defined for extensions of $\preceq_{\mbbj}$
in exactly the same way. We continue to use the term $k$-diversity in this
setting.  The following are \condtwodiv\ and \parthreediv\ respectively.
\begin{enumerate}[label=\textup{A\arabic*}$ ^ \flat$,resume]
\item\label{c2dQ} For every three distinct elements $ x , y , z \in Y $, one of
  the two subsets $ \{ J' : x \prec _{J ' } y \}$ and
  $ \{ J' : y \prec_{J ' } x \} $ of $\mbbj $ contains both $I$ and $J$ such
  that $ z \prec _{ I } x $ and $ x \prec _{ J } z $. If $ \countof Y = 2 $,
  then \twodiv\ holds on $Y$.

\end{enumerate}
\begin{enumerate}[label=\textup{A\arabic*}$^{'\flat}$,resume]
  \setcounter{enumi}{3}
\item\label{p3dQ}
  For every $ Y^{\prime} \subseteq Y $ with cardinality $ n = 2 $ or $  3 $, every
  $Y^{\prime}$-extension $ \ext $ of $ \preceq _{ \mbbj }$ is such that $ \countof
  \total ( \ext ) \geq n$.
  

\end{enumerate}


A proper extension $ \ext $ of $ \preceqb _{ \mbbj }$ is \emph{testworthy} if
it satisfies \ref{KQ}--\ref{AQ} and, for some $J\in \mbbj$ such that
$\ext_{J\times 0 }$ is total, $\extb_{\novel} = \extb_{J\times 0}^{-1}$. Thus,
for each $ Y \in 2 ^ { X } $,
$ \test ( Y , \preceqb _{ \mbbd } ) \simeq \test ( Y , \preceqb _{ \mbbj
})$. For any pair of extensions $\ext $ and $\hext $, $\hext $ is a perturbation
of $\ext$ if $\extb_{\novel} = \hextb_{\novel}$. Moreover, $\hext$ is a
nondogmatic perturbation if
$\countof \total (\hext) \leq \countof \total (\ext) $.
\begin{enumerate}[label=\textit{4}-\textup{P}$^{\flat}$]
\item\label{PQ} For every $Y \subseteq X$ of cardinality $3$ or $4$, every
  testworthy $ Y $-extension of $ \preceqb _{ \mbbj }$ that is novel has a
  nondogmatic perturbation that satisfies \ref{TQ}–\ref{AQ}.
\end{enumerate}


The following result corresponds to claim 2 of \gsii. Its proof is a consequence
of mathematical induction and the combination axiom.
\begin{lemma}\label{lem-coneQ}

  If $ \ext _ \mbbjpp $ and $ \hext _ \mbbipp $ are equivalent and the latter
satisfies \ref{C}, then for every $ J \in \mbbjpp $ and every rational number
$ q >0 $, we have $ \extb _{ q J }  =  \extb _{J} $.

\end{lemma}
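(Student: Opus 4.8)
The plan is to transport the statement through the hypothesised equivalence $\extb_{\mbbjpp}\equiv\hextb_{\mbbipp}$ into the counting-vector picture, reduce it to the invariance of $\hextb$ under positive-integer dilations of a single counting vector, and then prove that invariance by induction on the dilation factor, driven by the combination axiom \ref{C}.

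\emph{Reduction.} First I would fix $J\in\mbbjpp$ and $q\in\posrat$, observing that $J\neq 0$ may be assumed and that $qJ\in\mbbjpp$ since $\mbbjpp$ is closed under positive rational dilation. Via the translation of $\preceqb_{\mbbj}$ into counting vectors, \cref{def-extensionQ}, and the hypothesised equivalence, the extension satisfies $\extb_{J}=\hextb_{L_{J}}$ and $\extb_{qJ}=\hextb_{L_{qJ}}$, where $L_{K}\in\mbbipp$ denotes the minimal positive-integer multiple of a vector $K$. Both $L_{J}$ and $L_{qJ}$ lie on the single ray through $J$, so each is a positive-integer multiple of the primitive integer vector $P$ on that ray, and $P\in\mbbipp$ because $J$ has finite support. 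It therefore suffices to prove that $\hextb_{nL}=\hextb_{L}$ for every $L\in\mbbipp$ and every $n\in\posint$: granting this, $\extb_{J}=\hextb_{L_{J}}=\hextb_{P}=\hextb_{L_{qJ}}=\extb_{qJ}$.

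\emph{The induction.} I would fix $L$ and induct on $n$, the case $n=1$ being vacuous. Assume $\hextb_{nL}=\hextb_{L}$. By the richness assumption the counting vectors $nL$ and $L$ can be realised by disjoint databases, so \ref{C} applies to that pair, whose union has counting vector $(n{+}1)L$. If $x\hext_{L}y$, then $x\hext_{nL}y$ by the induction hypothesis, and the first clause of \ref{C} gives $x\hext_{(n+1)L}y$; hence $\hextb_{L}\subseteq\hextb_{(n+1)L}$. For the reverse inclusion, suppose $x\hext_{(n+1)L}y$ but not $x\hext_{L}y$; completeness of $\hextb_{L}$ (it is the restriction of the plausibility ranking $\preceqb_{L}$ to $Y$, and wherever this lemma is used the rankings are complete) forces $y\hsext_{L}x$, the induction hypothesis upgrades this to $y\hsext_{nL}x$, and the strict clause of \ref{C} applied to $y\hsext_{nL}x$ and $y\hext_{L}x$ yields $y\hsext_{(n+1)L}x$ --- contradicting $x\hext_{(n+1)L}y$. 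Thus $\hextb_{(n+1)L}=\hextb_{L}$, closing the induction.

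\emph{Main obstacle.} The only subtle step is the reverse inclusion $\hextb_{(n+1)L}\subseteq\hextb_{L}$: the first clause of \ref{C} can only ever enlarge a ranking as the database grows, so passing back down from the larger database to the smaller one is exactly where the strict clause of \ref{C} --- together with completeness, used to discard any pair left incomparable by $\hextb_{L}$ --- has to be invoked. The reduction to integer multiples and the ``enlarging'' inclusion are routine bookkeeping.
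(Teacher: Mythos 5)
Your proof is correct and follows essentially the route the paper intends: the paper gives no detailed argument, saying only that the lemma "is a consequence of mathematical induction and the combination axiom" (it is the analogue of claim 2 of \gsii), and your reduction to ray-constancy of $\hextb$ on integer multiples plus an induction on the dilation factor driven by \ref{C} is exactly that argument. The one point worth stressing is the role of completeness, which you flag but justify somewhat loosely. You are right that the downward inclusion $\hextb_{(n+1)L}\subseteq\hextb_{L}$ cannot be obtained from \ref{C} alone: a family that is empty on databases with an odd number of cases and equal to $\{(x,y)\}$ on those with an even number satisfies \ref{C} vacuously yet violates the conclusion, so some form of \ref{K} is genuinely needed. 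Strictly speaking this is a gap in the lemma's stated hypotheses (and it propagates to how \cref{lem-coneQ} is invoked inside \cref{lem-axiomsQ}, where \ref{K} is likewise not assumed) rather than in your argument; your parenthetical justification that "wherever this lemma is used the rankings are complete" is accurate for the main theorem, where \ref{K}/\ref{KQ} are among the axioms, but it is not literally guaranteed by the lemma or by \cref{lem-axiomsQ} as stated. So: same approach as the paper (and as \gsii's claim 2, which also uses completeness at precisely this step), with your treatment making explicit a hypothesis the paper leaves implicit.
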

The fact that $\preceqb_{\mbbj}\equiv \preceqb_{\mbbd}$ immediately implies
that $\preceqb_{\mbbj}$ satisfies \ref{TQ}, \ref{KQ} and \ref{c2dQ} if, and
only if, the corresponding axiom holds for $\preceq_{\mbbd}$. In general, we
have the following result, which then also yields the equivalence for the
prudence axiom.

\begin{theoremEnd}[no link to proof]{lemma}[proof on \cpageref{proof-axiomsQ}]
  \label{lem-axiomsQ}

  For $ \extb _{ \mbbjpp } \equiv \hextb _{ \mbbdpp }$, $ \ext _ { \mbbjpp} $ satisfies
  \ref{CQ}--\ref{AQ} if, and only if, $ \hext_{\mbbdpp} $ satisfies
  \ref{C}--\ref{A}.
\end{theoremEnd}

\begin{proofEnd}
  \label{proof-axiomsQ}

  Fix $ \extb _ { \mbbjpp} \equiv \hextb _ { \mbbdpp}$ and assume that
  $ \hextb _{ \mbbdpp} $ satisfies \ref{C}. We show that $ \extb _ { \mbbjpp}$
  satisfies \ref{CQ}.  Fix $ x , y \in Y $ and $ J \in \mbbjpp $ such that
  $ x \ext _{ J } y $ and $ x \ext _{ J '} y $. Fix
  $ \lambda , \mu \in \posrat $ and let $ \kappa $ be the smallest positive
  integer such that both $ L \defeq \kappa \lambda J $ and
  $ L ' \defeq \kappa \mu J ' $ belong to $ \mbbipp $. Then, by
  \cref{lem-coneQ}, we have both $ x \ext _{ L } y $ and $ x \ext _{ L ' } y
  $. Moreover, for $D, D'$ such that $L_{D}= L$ and $L_{D'}= L'$ , we have
  $ x \hext _{ D } y $ and $ x \hext _{ D ' } y $ and, by \ref{C},
  $ x \hext _{ D \cup D ' } y $. Finally, since
  $ L_{D} + L_{D'} = \kappa (\lambda J + \mu J ' )$, one further application of
  \cref{lem-coneQ} yields $ x \sext _{ \lambda J + \mu J ' } y $, as required
  for \ref{CQ}.

  The proof that “\ref{C} implies \ref{CQ}” is \emph{mutatis mutandis} a special
  case of the above argument and ommitted.  We now assume $ \hextb _{ \mbbdpp} $
  satisfies \ref{C} and \ref{A} and prove that $ \extb _{ \mbbjpp }$ satisfies
  \ref{AQ}.  Fix $ x , y \in X $ such that $x \sext _{ J } y $ for some
  $ J \in \mbbj $ and take any $ J ' \in \mbbjpp $.  Then, by the construction
  of $ \extb _ \mbbjpp $, there exists $ L, L ' \in \mbbi $ such that
  $ j J = L $ and $ j ' J ' = L ' $ for some $ j , j ' \in \posint $.  By
  \cref{lem-coneQ}, $ \extb_{L} = \extb_{J} $ and
  $ \extb _ { L '} = \extb _ {J '} $.  Moreover, by construction, for some $D$
  and $D'$ such that $L_{D}=L$ and $I_{D'}= L'$, $ \hextb_{D} = \extb _ J $ and
  $ \hextb _ { D'} = \extb _ {J '} $.  We therefore conclude that
  $ x \hsext _{ D} y $, so that \ref{A} implies the existence of
  $ \kappa \in \posint $ and
  $\{D_{l}: D_{l}\sim^{\hextb} D\}_{1}^{\kappa}$ such that
  $ x \hsext _ { D_{1}\cup\cdots \cup D_{\kappa} \cup D '} y $.  Then, by the
  construction of $ \extb _{ \mbbjpp }$,
  $ x \sext _ { \kappa L_{D} + L_{D'}} y $.  Let
  $ \nu \defeq \frac { 1 } { \kappa j + j '}$ and take $ \lambda = \nu j '$, so
  that $0 < \lambda < 0 $ and $ 1-\lambda = \nu \kappa j $. In fact, since
  $ \lambda \in \mbb Q$, we have
  \begin{linenomath*}
    \[ K \defeq (1-\lambda ) J +\lambda J ' \in \mbbjpp .\]
  \end{linenomath*}  
  Simplifying, we obtain $ K = \nu ( \kappa L + L ')$.  Since
  $ \nu \in \posrat $ and $ \kappa L + L ' \in \mbbjpp $, \cref{lem-coneQ}
  implies $ \extb _ { K } = \extb _ { \kappa L + L ' }$.  This allows us to
  conclude that $ x \sext _ K y $.  Finally, take any
  $ \mu \in \mbb Q \cap ( 0 ,\lambda )$.  From basic properties of the real
  numbers, there exists $ \xi < 1 $ such that $ \mu = \xi\lambda $ and,
  moreover, $ \xi $ is rational.  Next, note that the definition of $ K $
  implies $ \xi ( K - J ) = \xi\lambda ( J ' - J ) $.  Adding $ J $ to each side
  of the latter and applying the definition of $ \mu $ yields
\begin{linenomath*}
  \[(1 - \xi ) J +  \xi K =  ( 1 - \mu ) J + \mu  J ' . \]
\end{linenomath*}
  Then, since $ x \sext _ J y $ and $ x \sext _ K y $, \ref{CQ} implies
  $x \sext _ { ( 1 - \mu ) J + \mu J '} y $, as required for \ref{AQ}.

Conversely, we now assume that $ \ext _{ \mbbjpp } $ satisfies \ref{CQ} and
\ref{AQ} and prove that \ref{A} holds.  Take $ D , D ' \in \mbbd $ such
that $ x \hsext_{D} y $ and any other $ D' \in \mbbd $. Let $L=L_{D}$ and
$L' = L_{D'}$.  Then, by construction, $ x \sext_{L} y $ and, by \ref{AQ}, there
exists $\lambda \in \mbb Q \cap ( 0, 1) $ such that
$ x \sext _ { (1 -\mu ) L +\mu L'}  y $.  Then, since $\mu $ is rational,
$\mu = \nicefrac { j } { k } $ for some $ j , k \in \posint $.  Let
$ q : = (1 - \mu ) /\mu = ( k - j )/ j $ and let $ \kappa = j q $, so that
$ \kappa = k - j $.  The fact that $ 0 < \mu < 1 $ ensures that
$ \kappa \in \posint $.  To complete the proof, we show that
$ x \sext _ { \kappa L + L'} y $, for then the existence of
$D_{1}, \dots , D_{\kappa}$ such that
$ x \hsext _ { D_{1}\cup \cdots \cup D_{\kappa} \cup D'} $ immediately follows.  Together
$ x \sext _ { (1 -\mu ) L +\mu L' } y $ and \cref{lem-coneQ} imply
$ x \sext _ { q L + L ' } y $.  Similarly, together $ x \sext _{ L } y $ and
\cref{lem-coneQ} imply $ x \sext _ { ( j - 1 )q L } y $.  Then, since
  $ ( j - 1 )q L + ( q L + L ' ) = j q L + L ' $ and $ \kappa = j q $, an
  application of \ref{CQ} yields $x \sext_{\kappa L + L '} y $, as required.
\end{proofEnd}
The matrix $\mathbf{v}: X \times \mbbt \rightarrow R$ is a \emph{representation of
  $ \preceq _{ \mbbj }$} whenever it satisfies
\begin{linenomath*}
\begin{equation}\tag{$\flat$}\label{eq-rep-main}
  \left\{
  \begin{array}{l}
    \text{for every $ x , y \in X$ and every $ J \in \mbbj $,}\\
    x \preceq_{J} y \quad \text{if, and only if,} \quad
\sum _ { \, t \,\in\, \mbbt} \mathbf{v} ( x
    , t ) J(t) \leq \sum _ {\, t \,\in\, \mbbt } \mathbf{v} ( y , t ) J(t)  .
  \end{array}\right.
\end{equation}
\end{linenomath*}

We observe that, via the definition of case types, there exists a representation
of $\preceq_{\mbbd}$ that respects case equivalence if, and only if there
exists a representation of $\preceq_{\mbbj}$. The above translation and results 
imply that \cref{thm-main} is equivalent to
\begin{theorem}\label{thm-mainQ}
  Let there be given $X$, $\mbbtp$, $ \preceqb _ { \mbbj } $ and associated extensions,
  as above. Then \ref{ax-mainQ} and \ref{wrap-mainQ} are equivalent.

\begin{enumerate}[label=\textup{(\ref{thm-mainQ}.\roman*)}]

\item\label{ax-mainQ}

 \ref{TQ}--\ref{c2dQ} and
 \ref{PQ} hold for   $\preceqb _ { \mbbj } $ on $X$.

\item\label{wrap-mainQ} There exists a matrix
  $ \mathbf{v} : X \times \mbbt \rightarrow \R $ that satisfies both$\,:$
  \begin{enumerate}[label=\textup{(\ref{thm-mainQ}.\alph*)}]
  \item\label{rep-mainQ}
  $ \mathbf{v} $ is a representation of $ \preceq _ { \mbbj }$$\,;$ and

\item\label{rows-mainQ} no row of
  $\mathbf{v}$ is dominated by any other row, and, for every three distinct elements
  $x,y, z \in X$, $\mathbf{v}(x,\cdot)-\mathbf{v}(z,\cdot)
  $ and $\mathbf{v}(y,\cdot)-\mathbf{v}(z,\cdot)$ are noncollinear (\ie\ linearly independent).
\end{enumerate}
\end{enumerate}
Moreover, $\mathbf{v}$ is unique in the sense of \cref{thm-main} part II, with
\ref{wrap-mainQ} replacing \ref{wrap-main} and $\mbbt$ replacing $\mbbc$.
\end{theorem}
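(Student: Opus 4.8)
The plan is to prove the two implications of \cref{thm-mainQ} separately, in each case reducing — via the translation lemmas \cref{lem-axiomsQ}, \cref{lem-nov-iso} and \cref{lem-coneQ} — to statements about the pairwise representations $v^{\dd}$ delivered by \cref{lem-insep}, and then to read off uniqueness from the uniqueness of those pairwise representations.

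\emph{\ref{wrap-mainQ} $\Rightarrow$ \ref{ax-mainQ}.} Writing $v^{\xy}\defeq\mathbf v(y,\cdot)-\mathbf v(x,\cdot)$, the representation says $x\preceq_{J}y$ iff $\sum_{t}v^{\xy}(t)J(t)\geq0$. Then \ref{TQ} and \ref{KQ} are immediate from the numerical representation, \ref{CQ} is bilinearity of $(u,J)\mapsto\sum_{t}u(t)J(t)$, and \ref{AQ} holds because along $(1-\mu)I+\mu J$ this pairing is affine in $\mu$ and strictly positive at $\mu=1$. For \ref{c2dQ}: non-domination of rows forces every $v^{\xy}$ to take both signs, while noncollinearity of $v^{\zx}$ and $v^{\zy}$ — equivalently, linear independence of any two members of the automatically cocyclic triple $v^{\xy},v^{\yz},v^{\xz}$ — makes the associated hyperplanes distinct, and a short argument with counting vectors then produces the databases \ref{c2dQ} asks for. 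For \ref{PQ}: given a testworthy novel $Y$-extension $\ext$ with $\extb_{\novel}=\extb_{J\times0}^{-1}$ and $\ext_{J\times0}$ total, extend $\mathbf v$ to $X\times\mbbtp$ by the $\novel$-column $x\mapsto-\sum_{t}\mathbf v(x,t)J(t)$; the extension it induces satisfies \ref{TQ}--\ref{AQ}, agrees with $\ext$ on $\mbbj$ and at $\novel$, hence is a perturbation of $\ext$, and one checks that it is nondogmatic.

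\emph{\ref{ax-mainQ} $\Rightarrow$ \ref{wrap-mainQ}.} For each two-element $Y$, \ref{c2dQ} supplies \twodiv, so \cref{lem-insep} gives a pairwise representation $v^{\dd}_{Y}$, unique up to a positive scalar. The core step is to upgrade this to a \emph{cocyclic} representation. For each $Y$ with $\countof Y=3$ or $4$ we have \ref{TQ}--\ref{AQ}, \condtwodiv\ and \fourpru\ on $Y$, so \cref{thm-foureq} produces a pairwise representation on $Y$ satisfying the Jacobi identity $v^{\xz}=v^{\xy}+v^{\yz}$, still unique up to positive scaling; by that uniqueness the representations on overlapping small subsets agree up to positive scalars on their common pairs, and \cref{lem-induction} glues them into one cocyclic $v^{\dd}$ on all of $X$. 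Fixing $x_{0}\in X$ and putting $\mathbf v(x,\cdot)\defeq v^{(x_{0},x)}$ (and $\mathbf v(x_{0},\cdot)\defeq0$) yields a matrix with $\mathbf v(y,\cdot)-\mathbf v(x,\cdot)=v^{\xy}$, hence a representation of $\preceq_{\mbbj}$. For \ref{rows-mainQ}: $\mathbf v(x,\cdot)\leq\mathbf v(y,\cdot)$ would give $v^{\xy}\geq0$, contradicting \twodiv\ on $\{x,y\}$; and collinearity of $v^{\zx}$ and $v^{\zy}$ would make every ranking of $\{x,y,z\}$ depend on a single linear functional, allowing at most two strict total orders and so violating \parthreediv, which is equivalent to \ref{c2dQ} by \cref{obs-c2d}.

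\emph{Uniqueness and the main obstacle.} If $\mathbf u$ is another representation, each row difference $u^{\xy}$ is a pairwise representation, so $u^{\xy}=\lambda_{\xy}v^{\xy}$ with $\lambda_{\xy}>0$; comparing the Jacobi identities for $\mathbf u$ and $\mathbf v$ on a triple and using noncollinearity forces $\lambda_{\xy}$ to be a single scalar $\lambda$, so $\mathbf u-\lambda\mathbf v$ has constant columns (for $\countof X=2$ this is exactly the conclusion of \cref{lem-insep}). The real work is the cocyclic upgrade: \cref{thm-foureq} is where \fourpru\ is used — it converts the requirement that a free novel column be absorbable without intransitivity into congruence of the pairwise hyperplanes — and the plan must also verify that the per-subset scalings \cref{thm-foureq} leaves open can be chosen globally consistently (this is the role of the cardinality-$4$ case of \ref{PQ} together with \cref{lem-induction}) and that none of this breaks when $\mbbt$ is infinite, so that the vectors $v^{\dd}$ need not be finitely supported.
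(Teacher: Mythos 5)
Your proposal is correct and follows essentially the same route as the paper: pairwise representations via \cref{lem-insep}, the equivalence of \fourpru\ with the Jacobi identity via \cref{thm-foureq}, gluing across subsets by \cref{lem-induction}, the base-point construction of $\mathbf{v}$ from the Jacobi pairwise representation, uniqueness from pairwise uniqueness, and a reduction of the infinite-$\mbbt$ case to small $Y$. The one step you leave as ``one checks'' (nondogmaticity of the appended-column perturbation in the \ref{wrap-mainQ}$\Rightarrow$\ref{ax-mainQ} direction) is precisely the Zaslavsky-counting content of \cref{thm-foureq}, which you already invoke for the converse direction, so nothing essential is missing.
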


\section{The proof of \cref{thm-mainQ}}\label{sec-proof-mainQ}
The present proof follows a similar structure to that of \gsii. That is, we
begin with the proof for the case of arbitrary (nonempty) $ X $ and finite
$ \mbbt $. We then show that we can ``patch'' the proof together to account for
the case where $0 <  \countof X < \infty = \countof \mbbt$. We omit the proof for the
case where both $X$ and $\mbbt$ are arbitrary since it is identical to that of \gsii.




\begin{step}[for $\countof \mbbt < \infty$, we characterise \ref{KQ}–\ref{AQ},
  \twodiv\ and novel extensions]\label{step-twodiv}
  For any pair of vectors $\acute v, J: \mbbtpp \rightarrow \R$ and consider the
  inner product
\begin{linenomath*}
  \[ \langle \acute v, J \rangle \defeq \sum_{t \,\in \,\mbbtpp} v(t) J(t).\]
\end{linenomath*}
Since $\mbbtpp $ is finite, the linear operator
$J \mapsto \langle \acute v , J \rangle$ is real-valued.  Let
$\acute H \defeq \{J : \langle \acute v , J \rangle = 0 \}$, let
$\acute G \defeq \{J: \langle \acute v, J \rangle > 0\}$, and let
$\acute F \defeq \{J : \langle \acute v, J \rangle \geq 0\}$. Similarly, we take
$\acute H_{\mdoubleplus} = \acute H\cap \posreal^{\mbbtpp} $ to be the
(strictly) \emph{positive kernel} of $\langle \acute v , \cdot \rangle$ and, for
$0 \not \leq \acute v \not \leq 0$, $\acute G_{\mdoubleplus} $ and
$\acute F_{\mdoubleplus}$ are, respectively, the open and closed half-spaces of
$\posreal ^{\mbbtpp}$ associated with $\acute v$. (For such $\acute v$, we also
refer to $\acute H_{\mdoubleplus}$ as a hyperplane in $\posreal^{\mbbtpp}$.) For
any $Y\in 2^{X}$ and matrix $\acute v : Y^{2}\times \mbbtpp\rightarrow \R$, each
row vector $\acute v^{\xy}: \mbbtpp \rightarrow \R$ the associated spaces are
$\acute H^{\{x,y\}}_{\mdoubleplus}$, $\acute G^{\xy}_{\mdoubleplus}$ and
$\acute F^{\xy}_{\mdoubleplus}$ respectively.

    On occasion we refer to the non-negative counterpart of these sets
    $\acute{H}_{\mplus},\acute{G}_{\mplus}$ and $\acute{F}_{\mplus}$ in
    $\nnreal^{\mbbtpp}$. It is natural to ask why we do not work with the latter
    sets throughout. The answer is that $\posreal^{\mbbtpp}$ has the same
    topological structure as $\R^{\mbbtpp}$. By working with hyperplanes in
    $\posreal^{\mbbtpp}$, the key result,  Zaslavski's theorem, from the literature on
    arrangements (of hyperplanes) applies without
    modification. We briefly introduce this literature following the next lemma
    which corresponds to lemma 1 of \gsii\ and gives meaning to the statement
    ``the arrangement generated by an extension''.

 \begin{theoremEnd}{lemma}[two-diverse pairwise representation]\label{lem-insep}
 
   Let $\aext$ be a $Y$-extension of $\preceq_{\mbbj}$.  $ \aext $ satisfies
   \ref{KQ}--\ref{AQ} and \twodiv\ holds on $Y$, if, and only if, there exists a
   matrix $\acute{v}^{(\cdot,\cdot)} : Y^{2} \times \mbbtpp \rightarrow \R $
   such that, for every $x,y\in Y$, row
   $\acute{v}^{(x,y)}: \mbbtpp \rightarrow \R$ and its associated spaces
   $\acute H^{\{x,y\}}_{\mdoubleplus}$ and $\acute G^{\xy}_{\mdoubleplus}$
   satisfy
   \begin{enumerate}[label=\textup{(\roman*)}]
   \item \label{K-insep}
     $ \acute H^{\{x,y\}}_{\mdoubleplus} \cap \mbb Q^{\mbbtpp} = \{J : x \anext_{J} y\}$ and
     $ \acute G ^{\xy}_{\mdoubleplus} \cap \mbb Q^{\mbbtpp}= \{ J: x \asext_{J} y\}$,
   \item\label{D-insep}

  $ \acute{G}^{\xy}_{\mdoubleplus}$ and $ \acute{G}^{\yx}_{\mdoubleplus} $ are
  both nonempty if $x \neq y$ and both empty otherwise,

\item\label{skew-insep}

  $\acute H^{\{y,x\}}_{\mdoubleplus} =\acute H^{\{x,y\}}_{\mdoubleplus}$ (and in particular
  $ \acute{v} ^ \yx = - \acute{v} ^ \xy $),



\item\label{unique-insep}

  $\acute H^{\{x,y\}}_{\mdoubleplus}$ is the unique hyperplane in $\posreal^{\mbbtpp}$ that
  separates $\{J : x \asextb_{J} y\}$ and $ \{ J: y \asextb_{J} x\}$
  ($\acute{v}^{\xy}$ is unique upto multiplication by a positive scalar).

\end{enumerate}
Moreover, $ \aext $ is novel if, and only if, for every $ t \neq \novel $,
$\acute{v}^{\dd}(t) \neq \acute{v}^{\dd}(\novel)$.

\end{theoremEnd}
\begin{proofEnd}
  
  In addition to \ref{KQ}–\ref{AQ}, the proof of lemma 1 of \gsii\ only appeals
  to \twodiv. That lemma, like the present one, does not require \ref{TQ}, or
  any diversity condition stronger than \twodiv\ since it is only result about
  distinct pairs of elements $x$ and $y$ in isolation.

  \Wlog\ we suppress reference to the acute accent.  Modulo notation, lemma 1 of
  \gsii\ and its proof show that \ref{KQ}–\ref{AQ}, in addition to \twodiv\ on
  $Y$, imply the existence $v^{\dd}$ with rows satisfying properties
  \ref{K-insep}–\ref{unique-insep} of the present lemma.

  We now prove the converse: that properties \ref{K-insep}–\ref{unique-insep}
  imply the axioms hold.  For \ref{KQ}, fix arbitrary $J \in \mbbj $. For every
  $x,y \in Y$, the fact that $\langle v^{\xy},\cdot \rangle$ is real-valued and,
  via property \ref{skew-insep}, equal to $-\langle v^{\yx},\cdot \rangle$,
  ensures that $J$ belongs to one of the sets $ H_{\mdoubleplus}^{\{x,y\}}$,
  $G_{\mdoubleplus}^{\xy}$ and $G_{\mdoubleplus}^{\yx}$. Then property
  \ref{K-insep} and the fact that $J $ belongs to $\mbb Q^{\mbbtpp}$ completes
  the proof. \ref{CQ} and \ref{AQ} hold by virtue of the fact that
  $\langle v^{\xy},\cdot \rangle$ is linear on $\posreal^{\mbbtpp}$. Finally, we
  prove that \twodiv\ holds. Take any distinct $x,y \in Y$. Then, via property
  \ref{D-insep}, both $G_{\mdoubleplus}^{\xy} $ and $G_{\mdoubleplus}^{\yx}$ are
  nonempty. By continuity of $\langle v^{\xy}, \cdot \rangle$,
  $G_{\mdoubleplus}^{\xy}$ and $G_{\mdoubleplus}^{\yx}$ are also open in
  $\posreal^{\mbbtpp}$. As such they each contain a rational vector, so that, by
  property \ref{K-insep}, \twodiv\ holds on $\{x,y\}$.
 
  We now prove the characterisation of novel extensions. Let $\ext$ be a novel
  $Y$-extension with matrix representation $v^{\dd}$ satisfying parts
  \ref{K-insep}–\ref{unique-insep} of the lemma.  Fix arbitrary
  $ t \neq \novel $.  Then \cref{def-extensionQ} implies the existence of
  $ J \in \mbbj $ and $ L = J \times 0 \in \mbbjp $ such that
  $ \extb _ { L + \delta_{t} } \neq \extb _ { L + \delta_{\novel} }$.  \Wlog,
  consider the case where, for some $ x , y \in Y $, it holds that both
  $ y \ext _ { L + \delta _ t } x $ and $ x \sext _ { L + \delta_ \novel } y
  $. Equivalently,
\begin{linenomath*}
  \[ \langle v^{\xy}, L + \delta_ t \rangle \leq 0 <\langle v^{\xy}, L + \delta_
  \novel\rangle \]
\end{linenomath*}
 which, via linearity of $\langle v^{\xy}, \cdot \rangle$, we
 may rearrange to obtain
 \begin{linenomath*}
  \begin{equation}\label{eq-nov}
    v^{\xy} ( t ) \leq -\langle v^{\xy}, L \rangle < v ^ \xy ( \novel ) .
\end{equation}
\end{linenomath*}
Thus, $v^{\xy} (t) \neq v^{\xy}(\novel)$, as required for the lemma.

For the converse argument, fix arbitrary $t \in \mbbt$. Then
$\acute v^{\dd} (t) \neq \acute v^{\dd} (\novel)$ implies the existence of
distinct $x,y \in Y$ such that
$ \acute v^{\xy}(t) \neq \acute v^{\xy}(\novel) $. We show that there exists
$J \in \mbbj $ and $L= J\times 0 \in \mbbjp$ satisfying \cref{eq-nov}. For then,
by retracing (in reverse order) the arguments that lead to \cref{eq-nov}, we
arrive at the conclusion that
$ \extb _ { L + \delta_{t} } \neq \extb _ { L + \delta_{\novel}} $, as required
for $\ext$ to be novel.

Take $ \mu = v ^ \xy ( t ) $ and $ \xi = v ^ \xy ( \novel ) $ and consider the
case where $ \mu < \xi < 0 $.  Since $\mu<0$, \twodiv\ implies that there exists
$s\in\mbbt$ such that $v^{\xy}(s)$ is positive.  Then, for some
$ \lambda \in \posrat $, $ - \lambda v ^ \xy ( s ) \in ( \mu , \xi ) $.  Let
$ L = \lambda \delta ^{ \novel } _ s $ and observe that
\begin{linenomath*}
  \[ \mu < - \langle v^{\xy}, L \rangle < \xi ,\]
\end{linenomath*}
  as required. \emph{Mutatis mutandis}, the case where both $ \mu $ and $ \xi $
  are positive is the same.  If $ \mu \leq 0 \leq \xi $, then take $ L = 0 $, so
  that $\mu \neq \xi$ yields
  $ \extb _ { L + \delta ^{ \novel } _ t } \neq \extb _ { L + \delta ^{ \novel }
    _ \novel } $.
\end{proofEnd}
We refer to a matrix $v^{\dd}$ that satisfies condition \ref{D-insep} of
\cref{lem-insep} (for every distinct $x,y\in Y$) as a \emph{$2$-diverse
  matrix}. If $v^{\dd}$ satisfies all the conditions of \cref{lem-insep} with
respect to a given extension $\ext$ and, in addition, $x=y$ implies
$v^{\xy}= 0$, then $v^{\dd}$ is a $2$-diverse pairwise representation of
$\ext$.\footnote{Note that for $\countof X=2$, \cref{lem-insep} constitutes a
  proof of \cref{thm-mainQ}. For we may take
  $\mathbf{v}: X \times \mbbt \rightarrow \R$ such that $\mathbf{v}(x,\cdot)=0$
  and set $\mathbf{v}(y,\cdot)=v^{(x,y)}$, so that
  $v^{(x,y)}= -\mathbf{v}(x,\cdot) + \mathbf{v}(y,\cdot)$.}

\paragraph{An \emph{arrangement}\hskip-4pt} is a collection of hyperplanes in
$\posreal^{\mbbtpp}$ or $\R^{\mbbtpp}$. (A \emph{hyperplane in
  $\posreal^{\mbbtpp}$} is the positive kernel of some nonzero vector.) The
chief result, from the mathematics of arrangements, to which we extensively
appeal is Zaslavsky's theorem two form of which we state below.  For any given
extension $\ext$, Zaslavsky's theorem allows us to use information about the
intersections of hyperplanes in the arrangement to identify
$ \countof \total (\ext) $.  It does so by counting the collection
$\mc G_{\mdoubleplus}$ of open and connected subsets of
$ \R^{\mbbtpp} \bs \bigcup \{ H_{\mdoubleplus} : H_{\mdoubleplus} \in \mc A \} $
are the \emph{chambers} or \emph{regions} of the arrangement.  In the present
setting, each chamber corresponds to a \emph{CAR ranking} of the elements of
$Y$: a complete, antisymmetric and reflexive (but possibly intransitive) ranking
$R$. Every CAR ranking can be succinctly represented as a CAR list. For
instance, take $Y = \{ x, y , z\}$ and $ x \mathbin{R} y \mathbin{R} z $, then
the corresponding list
\begin{linenomath*}
\begin{equation*}
l = \left \{
\begin{array}{ll}
  (x,y,z)& \text{if $R$ is transitive}\\
  (x,y,z,x)& \text{if $R$ is intransitive.}\footnotemark
\end{array}\right.
\end{equation*}
\end{linenomath*}
  The notation extends without exception to sets of cardinality $4$. \Eg,
  $(x,y,z,x,w)$ represents the CAR ranking that is intransitive over
  $ \{x,y,z\}$ and such that $w$ dominates every other member.

\paragraph{The intersection semilattice\hskip-4pt}
of any arrangement $\mc A$ is the partially ordered (\emph{by reverse
  inclusion}) set $\mc L$ of intersections of members of $\mc A$. The unique
minimal element is obtained by taking the intersection $A^{\emptyset}$ over the
empty subarrangement $\mc A^{\emptyset}$ of $\mc A$ to obtain the ambient space
itself. That is $A^{\emptyset}= \R^{\mbbtpp}$ or $\posreal^{\mbbtpp} $,
depending on whether we are considering the lattice $\mc L$ or the lattice
$\mc L_{\mdoubleplus}$ respectively. In \gsii, as a consequence of \fourdiv,
$\mc H_{\mdoubleplus}$ is always central. In our setting, it is only $\mc H$
that is guaranteed to be central. In general an arrangement is central, if, and
only if, its intersection semilattice has a unique maximal element
\citep[proposition 2.3]{Stanley-Arrangements}. Thus, if $\mc H_{\mdoubleplus}$
is \emph{centerless}, then $\mc L_{\mdoubleplus}$ is a meet semilattice with
multiple maxima: as in \cref{eg-zaslavski}.  Extending our notation: if
$Y = \{x,y,z,w\}$, then the unique intersection $A^{Y}$ is the (nonempty)
\emph{center} of $\mc A^{Y} = \mc H$. By $A^{\{x,y,z\}}$, we mean the
intersection over $\mc A^{\{x,y,z\}} \defeq \{H^{\{i,j\}}: \text{$i\neq j$ in
  $\{x,y,z\}$}\}$. Finally, by $A^{\{x,y\}\{z,w\}}$, we mean the intersection
over $\mc A^{\{x,y\}\{z,w\}} \defeq \{H^{\{x,y\}},H^{\{z,w\}}\}$.

\paragraph{Zaslavski's theorem} provides two distinct methods for counting the
number of regions in an arrangement. The first states that
\emph{$\countof \mc G$ is equal to the sum of the absolute values of the
  M\"{o}bius function $\bmu: \mc L \rightarrow \mbb Z$} which is defined
recursively via
\begin{linenomath*}
  \begin{equation}\label{eq-mobius}
\bmu(A) = \left \{
\begin{array}{ll}
  1  & \text{if $A = A^{\emptyset} $}\\
  -\sum\{\bmu(B):  A \subsetneq B\} & \text{otherwise.}\footnotemark
\end{array}\right.
\end{equation}
\end{linenomath*}
The above definition of Zaslavski's theorem is explicitly provided by
\citet{Sagan-Zaslavski}. Specialised to the present setting, the more common
\citep[see ][]{OT-Arrangements,Dimca-Arrangements,Stanley-Arrangements} ``rank''
version of Zaslavski's theorem is
\begin{linenomath*}
  \[\countof \mc G  =\sum_{\substack{\mc A \subseteq \mc H\\
        \mc A \textup{\ central}}} (-1)^{\lvert \mc A \rvert - \rank(\mc A)},\]
  \end{linenomath*}
  where \emph{central} means that $\bigcap \{H: H \in \mc A\}$ is nonempty, and
  $\rank (\mc A)$ is the dimension of the space spanned by the normals to the
  hyperplanes in $\mc A$.\footnote{Equivalently, $\rank(\mc A) $ is the
    dimension of the orthogonal complement in $\R^{\mbbtpp}$ (or
    $\R^{\mbbtpp}_{\mdoubleplus}$) of the intersection over $\mc A$. Since the
    intersection over the empty arrangement $\mc A^{\emptyset}$ is the ambient
    space $\R^{\mbbtpp}$ (or $\posreal^{\mbbtpp}$), the only vector in
    $\R^{\mbbtpp}$ that is orthogonal to the ambient space is $0$,
    $\rank (\mc A ^{\emptyset}) = 0$.}

\begin{example}[a comparison of $\mc L$ and $\mc L_{\mdoubleplus}$]\label{eg-zaslavski}
  Let $ X = \{ x , y , z \}$, $\mbbt = \{s,t\}$, and
  $ u ^{ (x,y) } = 1 \times - 1$ and $ u ^{(y,z) } = 2 \times - 1$ denote
  vectors in $ \R^{\mbbt} $. We now appeal to the Jacobi identity and take
  $u ^{(x,z) } = u ^{ (x,y) } + u ^{ (y,z) } = 3 \times - 2 $ and extend to the
  remaining pairs in $X^{2}$ using \cref{lem-insep}.  Since these vectors are
  pairwise noncollinear and $\countof \mbbt = 2$, the associated arrangement
  $\mc H_{\mdoubleplus} = \left\{ H_{\mdoubleplus}^{\{x,y\}},
    H_{\mdoubleplus}^{\{y,z\}}, H_{\mdoubleplus}^{\{x,z\}}\right\}$ consists of
  three pairwise disjoint lines that partition $\posreal^{\mbbt}$ and
  $\mc G_{\mdoubleplus}$ has cardinality $4$.  We now confirm this using
  Zaslavski's theorem.
  \begin{figure}
    \begin{center}
  \begin{tikzpicture}
    \node (max) at (0,1.5) {\color{lightgray}$\{x,y,z\}$};
    \node (d) at (-2,0.25) {$\{x,y\}$};
    \node (e) at (0,0.25) {$\{y,z\}$};
    \node (f) at (2,0.25) {$\{x,z\}$};
    \node (min) at (0,-1) {$\emptyset$};
    \draw (min) -- (d);
    \draw (min) -- (e);
    \draw (min) -- (f);
    \draw[color = lightgray] (max)  -- (d);
    \draw[color = lightgray] (max)  -- (e);
    \draw[color = lightgray] (max)  -- (f);    
\end{tikzpicture}
\caption{\label{fig-hasse-Y3-r2} The intersection semilattice
  $\mc L_{\mdoubleplus} = \mc L \bs A^{\{x,y,z\}}$.}
\end{center}
\end{figure}

In the present setting $A^{\emptyset}_{\mdoubleplus} = \posreal^{\mbbt}$ and,
via \cref{eq-mobius}, $\bmu(A^{\emptyset}_{\mdoubleplus}) = 1$. Then, since
$\posreal^{\mbbt}$ is the unique element in $\mc L_{\mdoubleplus}$ that
(strictly) contains each hyperplane in $\mc H_{\mdoubleplus}$, \cref{eq-mobius}
yields $\bmu(A) = -\bmu(A^{\emptyset}_{\mdoubleplus}) $ for each
$A \in \mc H_{\mdoubleplus}$.  Now since the hyperplanes in
$\mc H_{\mdoubleplus}$ are disjoint, there are no further elements in
$\mc L_{\mdoubleplus}$ and we observe that
  \begin{linenomath*}
    \[\lvert \mc G_{\mdoubleplus} \rvert= \sum_{A \in \mc L_{\mdoubleplus}} \lvert
      \bmu(A) \rvert = 4.\]
  \end{linenomath*}

  In contrast, although the structure of $\mc L$ is otherwise isomorphic to
  $\mc L_{\mdoubleplus}$, since $\{0\} \subset \R^{\mbbt}$ is a subset of every
  hyperplane in $\mc H$, $\{0\}$ is the center $A^{\{x,y,z\}} $ of $\mc H$ and
  the maximal element of $\mc L$. Via \cref{eq-mobius} and the calculations of
  the previous paragraph, we obtain
  $\bmu(A^{\{x,y,z\}}) = - \left(\bmu(A^{\emptyset}) -3 \bmu(A^{\emptyset})
  \right) = 2$.  Thus,
\begin{linenomath*}
  \[  \countof\mc G = \sum_{A \in \mc L} \lvert \bmu(A) \rvert = 6 = 3!.\]
\end{linenomath*}
\end{example}
\begin{remark}[The relationship between $\mc L $ and $\mc L_{\mdoubleplus}$]\label{rem-lattice}
  Let $\aext $ be a $Y$-extension with $2$-diverse representation
  $\acute{u}^{\dd}$. Since, for every distinct $x$ and $y $ in $Y$,
  $\acute{H}^{\{x,y\}}$ contains the origin, $\acute{\mc H}$ is centered. As we
  see in \cref{eg-zaslavski}, this is not the case for
  $\acute{\mc H}_{\mdoubleplus}$ where $\acute{\mc H}_{\mdoubleplus}$ is
  centerless and each of its members is maximal in
  $\acute{\mc L}_{\mdoubleplus} $.

  In \gsii, \fourdiv\ guarantees that, for every $Y\subseteq X$ of cardinality
  $2,3$ or $4$, the improper $Y$-extension generates a centered arrangement in
  $\posreal^{\mbbt}$. The fact that $\posreal^{\mbbtpp}$ is open in
  $\R^{\mbbtpp}$ ensures that the dimension of any $L \in \acute{\mc L}$ is
  equal to its counterpart $L _{\mdoubleplus} \in \acute{\mc L}_{\mdoubleplus}$
  provided the latter exists.  Thus, $ \acute{\mc L}_{\mdoubleplus}$ and
  $ \acute{\mc L}$ are isomorphic if, and only if,
  $\acute{\mc H}_{\mdoubleplus}$ is centered. For the same reason,
  $\acute{\mc G}_{\mdoubleplus} $ and $\acute{\mc G} $ are isomorphic if,
  and only if, $\acute{\mc H}_{\mdoubleplus}$ is centered.
\end{remark}
We now abstract a useful property from \cref{eg-zaslavski}.
\begin{theoremEnd}{proposition}[polar opposite
  rankings]\label{prop-pairwise-extremal}
  If $\preceqb_{\mbbj}$ satisfies \ref{KQ}–\ref{AQ} and \twodiv, then, for every
  $Y\subseteq X$ of cardinality $3$ or $4$, the improper $Y$-extension $\extb $
  is such that, for some $J,L \in\mbbj$, $\extb_{J} = \extb_{L}^{-1} $ belongs
  to $ \total(\ext)$.
\end{theoremEnd}
\begin{proofEnd}
  Fix $\countof Y= 3$ or $4$, via \cref{lem-insep}, let $ v^{\dd}$ denote the
  $2$-diverse matrix representation of the improper $Y$-extension $\ext$. Let
  $\mc H_{\mdoubleplus}$ denote the associated arrangement of hyperplanes.  For
  every distinct $x,y\in X$, \cref{lem-insep} implies that
  $H^{\{x,y\}}_{\mdoubleplus}$ intersects $ \posreal^{\mbbt}$. Then, similar to
  \cref{eg-zaslavski}, the $ 1 \leq n \leq \binom{\countof Y}{2} $ distinct
  hyperplanes of $\mc H_{\mdoubleplus} $ cut $\posreal^{\mbbt}$ into at least
  $n+1$ regions.  At least one pair $ G$ and $G^{*}$ in $\mc G_{\mdoubleplus}$
  are therefore separated by all $n$ distinct members of $\mc
  H_{\mdoubleplus}$. Take $J \in G$, so that, for every distinct $x,y \in Y$,
  $\langle u^{\xy}, J \rangle \neq 0$.  Thus $\ext_{J}$ is antisymmetric,
  complete and, via \ref{TQ}, total. Next, take $L \in G^{*}$, so that since $J$
  and $L$ are separated by every hyperplane in $\mc H_{\mdoubleplus}$,
  $\extb_{J}= \extb_{L}^{-1}$.
\end{proofEnd}
\begin{example}[insufficiency of \twodiv]\label{eg-lexicographic}
  Let $X = [0,1]^{2}$ and let $\leq^{\textup{lex}}$ denote the lexicographic
  ordering on $X$.  Let $\mbbt = \{s,t\}$, and, for each $J \in \mbbj $, let
\begin{linenomath*}
\begin{equation*}
\preceqb_{J} = \left\{ 
  \begin{array}{ll}
    X^{2} & \text{if  $J(s) = J( t )$$;$}\\
    \leq^{\textup{lex}} & \text{if  $J(s) < J( t )$$;$}\\
    (\leq^{\textup{lex}})^{-1} & \text{otherwise.}
    \end{array}\right.
\end{equation*}
\end{linenomath*}
Recall that if $\preceqb_{J} = X^{2}$, then $\preceqb_{J}$ is symmetric and
hence equal to $\simeq_{J}$.  Thus, for every distinct $x,y\in X$,
$ H^{\{x,y\}}_{\mdoubleplus}= \{J \in \R^{\mbbt}_{\mdoubleplus} : J(s) =
J(t)\}$. Via \cref{lem-insep}, $\preceqb_{\mbbj}$ has a two-diverse matrix
representation $v^{\dd}$. But via \cref{lem-c2dQ}, below, \condtwodiv\ fails to
hold. The fact that $\preceqb_{\mbbj}$ fails to satisfy part \ref{rep-mainQ} of
\cref{thm-mainQ} follows from the fact that $\preceqb_{J}$ is lexicographic for
every $J$ outside $H$.
\end{example}
\end{step}
\begin{step}[for $\mbbt<\infty$, characterisations of \ref{c2dQ}]\label{step-condtwodiv}
  A matrix $v^{\dd}$ that satisfies the conditions of the next lemma is a
  \emph{conditionally-$2$-diverse (pairwise) representation}.

  \begin{theoremEnd}{lemma}[conditionally-$2$-diverse representation]\label{lem-c2dQ}
   Let $\ext$ be a $Y$-extension of $\preceq_{\mbbj}$ with $2$-diverse matrix
    representation $v^{\dd}$.  Then \ref{c2dQ} holds on $Y$ if, and only if, for
    every three distinct elements $x,y,z\in Y$, $v^{\xz} $ and $v^{\yz}$ are
    noncollinear.
  \end{theoremEnd}
  \begin{proofEnd}
\label{proof-lem-c2dQ}
    Let \ref{c2dQ} hold on $Y$. Since $v^{\dd}$ is a $2$-diverse pairwise
    representation, $v^{\xz}, v^{\yz}\not \leq 0$.  By \ref{c2dQ}, one of
    $G_{\mplus}^{\xz}$ and $G_{\mplus}^{\zx}$ contains both $J,L$ such that
\begin{linenomath*}
  \[\langle v^{(y,z)} ,  L \rangle <0< \langle v^{(y,z)} , J\rangle.\]
\end{linenomath*}
\Wlog, suppose $J, L $ belongs to $ G_{\mplus}^{\xz}$. Then
$\langle v^{\xz},\cdot \rangle$ is positive on $\{L,J\}$, so that, for every
$\lambda \in \R $ $v^{\xz} \neq \lambda v^{\yz}$, as required.

    Conversely, let $x,y,z \in Y$ be such that $v^{\xz}$ and $v^{\yz}$ are
    noncollinear.  Then $H_{\mdoubleplus}^{\{x,z\}}\neq H_{\mdoubleplus}^{\{y,z\}}$, and
    there exists $L \in H_{\mdoubleplus}^{\{x,z\}}\bs H_{\mdoubleplus}^{\{y,z\}}$.
    \Wlog, therefore, suppose
    $L \in H^{\{x,z\}}_{\mdoubleplus} \cap G^{\yz}_{\mdoubleplus}$.  Since
    $v^{\dd}$ is $2$-diverse, there exists $s,t\in \mbbt$ such that
    $v^{\xz}(s)<0<v^{\xz}(t)$. Noting that $L \in \posreal^{\mbbtpp}$, so that
    $ v^{\xz} \neq \delta_{s},\delta_{t}$, let $\psi_{s}$ and $\psi_{t}$ be the
    convex paths from $L $ to $\delta_{s}$ and $\delta_{t}$ respectively. For
    sufficiently small  $\lambda > 0$,
    $ \langle v^{\yz},\psi_{s'} (\lambda) \rangle$ remains positive for
    $s'= s, t$ and, moreover, since $L \in H^{\{x,z\}}_{\mdoubleplus}$,
\begin{linenomath*}
    \[\langle v^{\xz}, \psi_{s}(\lambda)\rangle<0< \langle v^{\xz},
      \psi_{t}(\lambda) \rangle. \]
\end{linenomath*}
Finally, since $L$ has finite support, a finite sequence of perturbations of the
elements of $\psi_{s}(\lambda)$ and $\psi_{t}(\lambda)$ yields (rational-valued)
members of $\mbbjpp$ with the same properties, as required for \ref{c2dQ}.
\end{proofEnd}
The following is a translation of \cref{obs-c2d}.
\begin{theoremEnd}{proposition}[on \ref{c2dQ} and \ref{p3dQ}]\label{prop-c2dQ}
  For $\preceqb_{\mbbj}$ satisfying \ref{TQ}--\ref{AQ}, \condtwodiv\ and
  \parthreediv\ are equivalent.
\end{theoremEnd}
\begin{proofEnd}
  \label{proof-prop-c2dQ}
  When $X=2$, \ref{c2dQ} and \ref{p3dQ} are identical to \twodiv.  Let
  $ Y = \{ x, y , z \} \subseteq X$ and let $\ext$ denote the improper
  $Y$-extension of $\preceq_{\mbbj}$. We begin by assuming \ref{c2dQ} and
  showing that $\countof Y + 1 = 4 \leq \countof \total ( \ext )$. Via
  \cref{lem-c2dQ}, there are three distinct hyperplanes
  $H_{\mdoubleplus}^{\{x,y\}}, H_{\mdoubleplus}^{\{y,z\}}$ and
  $H_{\mdoubleplus}^{\{x,z\}}$ in the associated arrangement
  $\mc H_{\mdoubleplus}$. Then, as in \cref{eg-zaslavski},
  $\emptyset = \posreal^{\mbbt}$ is the unique element of $\mc L_{\mdoubleplus}$ that
  lies below each member of $\mc H_{\mdoubleplus}$. Thus, via \cref{eq-mobius},
  $\bmu(A^{\emptyset}) = 1 $ and $\bmu(A) = - \bmu(A^{\emptyset})$ for all three hyperplanes
  $A \in \mc H_{\mdoubleplus}$. Thus, Zaslavski's theorem implies that
  $\countof \mc G_{\mdoubleplus}$ is bounded below by $4$.  Thus
  $\total(\ext) \geq 4$, and since, for every $Y$-extension $\aext$,
  $\countof \total (\aext ) \geq \countof \total (\ext)$, \ref{p3dQ} holds.

  Conversely, suppose \ref{p3dQ} holds and, once again let $\ext$ denote the
  improper $Y$-extension of $\preceq_{\mbbj}$, so that
  $\total(\ext) \geq \countof Y = 3$. Now \ref{p3dQ} implies \twodiv, so that,
  via \cref{lem-insep}, there exists a $2$-diverse matrix representation with
  associated arrangement $\mc H_{\mdoubleplus}$. It is not the case that
  $\lvert \mc H_{\mdoubleplus} \rvert = 1$, for this would imply that
  $\countof \total(\ext) = 2$. \Wlog, suppose
  $H_{\mdoubleplus}^{\{x,y\}} \neq H_{\mdoubleplus}^{\{y,z\}}$. Observe that
  \ref{TQ} then implies
  $H_{\mdoubleplus}^{\{x,z\}} \neq H_{\mdoubleplus}^{\{x,y\}}$ and
  $H_{\mdoubleplus}^{\{x,z\}} \neq H_{\mdoubleplus}^{\{y,z\}}$. This implies
  that $v^{\xy}, v^{\yz} $ and $v^{\xz}$ are pairwise noncollinear. Finally, an
  application of \cref{lem-c2dQ} then yields \ref{c2dQ}.
\end{proofEnd}
  \end{step}

  \begin{step}[for $\mbbt<\infty$, a characterisation of \fourpru]\label{step-prudence}
   The following Jacobi identity plays a central role in the proof of \gsii.
\begin{definition*}

  For $ Y \in 2 ^ { X } $, the matrix
  $ v^{ \dd } : Y^{2}\times \mbbtpp \rightarrow \R $ satisfies the \emph{Jacobi
    identity} whenever, for every $ x , y , z \in Y $,
  $ v ^{ \xz } = v ^{ \xy } + v ^{ \yz }$.
\end{definition*}
For any given $Y$-extension $\ext$, the \emph{Jacobi identity holds for
  $ \ext $} whenever it holds for some pairwise representation $v^{\dd}$ of
$ \ext $. Moreover, in this case, $v^{\dd} $ is a \emph{Jacobi
  representation}. Finally, if the $Y$-extension $\ext$ is improper and the
Jacobi identity holds for $ \ext $, we simply say that the \emph{Jacobi identity
  holds on $ Y $}.  Consider
\begin{k-jac*}

  For every $ Y \subseteq X $ with $ 3 \leq \countof Y \leq k $, the Jacobi
  identity holds on $ Y $.
   
\end{k-jac*}
We will work with \threejac\ and \fourjac\ in particular. The following lemma is
the special case of \cref{thm-foureq} of \cref{sec-proof-foureq} where $\mbbt$
is finite. When $\mbbt$ is finite, for every $Y$, the set of testworthy
$Y$-extensions that are novel is nonempty. In this case, \fourpru\ implies
\ref{TQ}--\ref{AQ} via nonrevision of rankings (part \ref{item-preservingQ}  of
\cref{def-extensionQ}).
 \begin{lemma}\label{lem-foureq}
   For $\preceq_{\mbbj}$ satisfying \ref{c2dQ}, \fourpru\ holds if, and only if,
   \fourjac\ holds.
 \end{lemma}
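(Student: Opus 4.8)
I would prove both implications through the pairwise representations of \cref{lem-insep}. Since $\mbbt$ is finite, \fourpru\ entails \ref{TQ}--\ref{AQ} by nonrevision of rankings (as noted just before the statement) and \fourjac\ presupposes them, so in either direction we may assume \ref{KQ}--\ref{AQ} and apply \cref{lem-insep}: every relevant $Y$-extension $\ext$ carries a $2$-diverse pairwise representation $\acute v^{\dd}\colon Y^{2}\times\mbbtp\to\R$; write $v^{\xy}$ for the restriction of the row $\acute v^{\xy}$ to $\mbbt$ and $a_{xy}\defeq\acute v^{\xy}(\novel)$ for its $\novel$-coordinate, and recall from \cref{lem-c2dQ} that under \ref{c2dQ} the rows $v^{\xy}$ are pairwise noncollinear on every triple. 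The first step is an algebraic reformulation of the Jacobi identity. Because each row of $\acute v^{\dd}$ is pinned down only up to a \emph{positive} scalar (part \ref{unique-insep} of \cref{lem-insep}), while \ref{TQ} forces any dependence $v^{\xz}=\alpha v^{\xy}+\beta v^{\yz}$ among three rows of the improper extension to have $\alpha,\beta>0$, the Jacobi identity holds on a triple $\{x,y,z\}$ \emph{if and only if} $\rank\{v^{\xy},v^{\yz},v^{\xz}\}=2$ (rescale $v^{\xy}\mapsto\alpha v^{\xy}$, $v^{\yz}\mapsto\beta v^{\yz}$). Hence \fourjac\ on a set $Y$ amounts to the existence of $\phi\colon Y\to\R^{\mbbt}$ with $v^{\xy}=\phi(y)-\phi(x)$ for all $x,y\in Y$ (for $\countof Y=4$ this requires the Jacobi identity on every triple of $Y$ for one and the same pairwise representation). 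The gain is geometric: if the representation of an extension is such a coboundary, then $x\ext_{J}y\iff\langle\phi(y)-\phi(x),J\rangle\ge0$, so $\ext_{J}$ is the weak order induced by the ``utility'' $x\mapsto\langle\phi(x),J\rangle$; thus every chamber of the associated arrangement is a total order and \ref{TQ} holds automatically.

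For \fourjac\ $\Rightarrow$ \fourpru: fix $Y$ with $3\le\countof Y\le4$ and a testworthy novel $Y$-extension $\ext$, normalised so that its $\mbbt$-part $v^{\dd}$ is the coboundary of a fixed $\phi$. Testworthiness makes $\ext_{\novel}$ a total order of $Y$ (the inverse of a total order of $\preceqb_{\mbbj}$), so I choose $\psi\colon Y\to\R$ whose induced order is $\ext_{\novel}$; then $a'_{xy}\defeq\psi(y)-\psi(x)$ has the same sign pattern as the given $\novel$-column $a$. Replacing $a$ by $a'$ produces a perturbation $\hext$ of $\ext$ (same $\novel$-ranking, since that depends only on the signs of the column), whose representation $(v^{\dd},a')$ is the coboundary of $(\phi,\psi)\colon Y\to\R^{\mbbtp}$; by the last remark $\hext$ satisfies \ref{TQ}, and it satisfies \ref{KQ}--\ref{AQ} by the converse half of \cref{lem-insep} (the $\mbbt$-part is already $2$-diverse, so a slightly positive $\novel$-coordinate keeps $\acute G^{\xy}_{\mdoubleplus}$ and $\acute G^{\yx}_{\mdoubleplus}$ nonempty). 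Finally I must check nondogmatism, $\countof\total(\hext)\ge\countof\total(\ext)$: I would compare the arrangements of $\ext$ and $\hext$ via Zaslavsky's theorem, verifying that degenerating $a$ to the coboundary $a'$ merges only the ``central'' intersections of those triples whose $\novel$-column was not already coboundary-like, which deletes exactly their intransitive chambers (the bounded ``triangles'' of \cref{eg-zaslavski}) while retaining every total-order chamber that $\ext$ realised. If $\ext$ already satisfies \ref{TQ}, take $\hext=\ext$.

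For $\neg$\fourjac\ $\Rightarrow$ $\neg$\fourpru: fix a triple $Y'=\{x,y,z\}$ on which Jacobi fails, so $\rank\{v^{\xy},v^{\yz},v^{\xz}\}=3$ (forcing $\countof\mbbt\ge3$). Since the $\mbbt$-parts already span rank $3$, \emph{every} choice of $\novel$-column leaves the triple arrangement of rank $3$, hence every $Y'$-extension of $\preceqb_{\mbbj}$ has two intransitive chambers in $\R^{\mbbtp}$. Using \cref{prop-pairwise-extremal} to obtain a total order $R_{0}$ of $\preceqb_{\mbbj}$ on $Y'$ whose inverse is also total, I would build a testworthy novel $Y'$-extension $\ext$ with $\ext_{\novel}=R_{0}^{-1}$ whose arrangement carries an intransitive chamber \emph{inside} $\posreal^{\mbbtp}$ (choosing the $\novel$-column so that the rank-$3$ arrangement over $\mbbtp$ is a tilt of the rank-$3$ arrangement over $\mbbt$ that remains transverse to $\posreal^{\mbbtp}$; by the rank-$2$-versus-rank-$3$ dichotomy together with \condtwodiv, such an arrangement cannot confine all its regions to total orders). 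Any perturbation $\hext$ of $\ext$ keeps the sign pattern of the $\novel$-column, hence keeps rank $3$, so to satisfy \ref{TQ} it must expel both intransitive chambers from $\posreal^{\mbbtp}$; a Zaslavsky count shows that this expulsion simultaneously removes at least one total-order chamber that $\ext$ realised. Therefore $\countof\total(\hext)<\countof\total(\ext)$, so $\ext$ admits no nondogmatic perturbation satisfying \ref{TQ}--\ref{AQ} and \fourpru\ fails.

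The delicate ingredient common to both directions --- and the step I expect to be the main obstacle --- is this arrangement bookkeeping: quantifying, via Zaslavsky's theorem and its rank version (as already used in \cref{prop-fourdiv-empty}), exactly how many chambers, and how many \emph{transitive} chambers, a triple --- or, when $\countof Y=4$, a family of up to six --- of hyperplanes in $\posreal^{\mbbtp}$ gains or loses when its $\novel$-column is degenerated to a coboundary, respectively when an unavoidable intransitive region is driven out of the positive orthant. Everything else is the routine translation through \cref{lem-insep} and \cref{lem-c2dQ}.
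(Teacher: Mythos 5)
Your overall strategy is the paper's strategy---pairwise representations via \cref{lem-insep} and \cref{lem-c2dQ}, a rank-two/coboundary reading of the Jacobi identity, and Zaslavsky chamber counts for testworthy novel extensions---but the two steps you defer or assert are exactly where the proof lives, and one of them fails as stated. In the direction \fourjac\ $\Rightarrow$ \fourpru, replacing the $\novel$-column by an \emph{arbitrary} coboundary $a'$ inducing $\ext_{\novel}$ does give an extension all of whose chambers are total, but nothing makes it nondogmatic: the resulting rank-two arrangement restricted to $\posreal^{\mbbtp}$ may be centerless, as in \cref{eg-zaslavski}, and then on a triple it realises only $4$ total orders, whereas the testworthy novel extension being perturbed (which need only satisfy \ref{KQ}--\ref{AQ}, not \ref{TQ}, and can have a rank-three column central in the orthant) can realise all $6$; your perturbation would then be dogmatic, so the claim that the degeneration ``retains every total-order chamber that $\ext$ realised'' is false for an arbitrary $\psi$. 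The paper's fix is to pick the $\novel$-entries so that the strictly positive vector $\acute J=(1-\acute\iota)J\times\acute\iota$ lies on every hyperplane (\cref{prop-central-testworthy}), forcing a central arrangement and hence the \emph{maximal} count ($3!$; and $4!$ via the rank-three $\epsilon$-tilt in the four-element, rank-two case of \cref{sec-proof-foureq}), which is what makes nondogmatism automatic. Even your opening reduction (``\ref{TQ} forces $\alpha,\beta>0$, hence Jacobi on a triple iff rank two'') is not a sign observation: the paper obtains it from the $6$-chamber count plus lemma 2 of \gsii, since producing a positive rational $J$ in the needed double half-space is itself a diversity argument.

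The converse is also incomplete. You only treat failure of the Jacobi identity on a triple, but \fourjac\ can fail on a four-element $Y$ while \threejac\ holds on every one of its triples: the four triple identities may be individually solvable yet mutually inconsistent (the $\hat\beta\neq\beta$ case, which occurs precisely when $\mathbf r=2$). That subcase is where most of \cref{sec-proof-foureq} is spent (\cref{lem-2r3}, \cref{lem-arrangement-cardinality-rank}, the echelon computation \eqref{eq-matrix-6}--\eqref{eq-matrix-3}, and the small-$\epsilon$ semilattice argument), because the unavoidable intransitivity then sits in the extended, rank-three triple $\{x,z,w\}$ of the constructed testworthy extension rather than in the data on $\mbbt$; an argument that starts ``fix a triple of $\preceq_{\mbbj}$ on which Jacobi fails'' never reaches it. Finally, even in the genuine triple-failure case, your closing step---that expelling both intransitive chambers from $\posreal^{\mbbtp}$ costs at least one total chamber relative to $\ext$---is asserted rather than proved; the paper's \cref{lem-Y3-r3} establishes it by the M\"obius computation showing that $\countof\total(\hext)=6$ forces some pairwise intersection $\hat A^{\{x,y\}\{y,z\}}_{\mdoubleplus}$ to meet the orthant, whence \ref{TQ} fails, so every \ref{TQ}-satisfying perturbation is dogmatic. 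In short, the ``arrangement bookkeeping'' you flag as the expected obstacle is not a verification to be postponed: it is the content of the lemma, and as set up in your proposal it does not go through without the paper's specific constructions.
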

 \end{step}
\setcounter{step}{3}
\begin{step}[for arbitrary $X$ and $\mbbt$, the induction
  argument]\label{step-induction}
  The present step is closely related to lemma 3 and claim 9 of \gsii. There the
  authors establish that, when \fourdiv\ holds, $3$-Jac is a necessary and
  sufficient condition for the (global) Jacobi identity to hold on $ X $.
  \gsii\ relies on the fact that \fourdiv\ implies linear independence of
  $\{v^{\xy},v^{\yz}, v^{\zw}\}$ for every four distinct elements
  $x,y,z,w \in X$. In the present setting, where \condtwodiv\ only implies
  linear independence of pairs $\{v^{\xy},v^{\yz}\}$, the Jacobi identity
  requires \fourjac.

\begin{theoremEnd}{lemma}[Jacobi representation]\label{lem-induction} Let
  $\preceq_{\mbbj}$ have a conditionally-$2$-diverse representation $u^{\dd}$.
  Then \fourjac\ holds if, and only if, $\preceq_{\mbbj}$ has a Jacobi
  representation $v^{\dd}$.  Moreover, for every Jacobi representation
  $ \mathbf v ^{ \dd }$ of $\preceq_{\mbbj}$ there exists $\lambda > 0$
  satisfying $ \mathbf v ^{ (\cdot, \cdot) } = \lambda v ^{(\cdot, \cdot)} $.

\end{theoremEnd}

\begin{proofEnd}\label{proof-lem-induction}
  Note that, when $1 \leq \lvert X \rvert \leq 2 $, \fourjac\ holds vacuously and
  $\preceq_{\mbbj}$ has a Jacobi representation via \cref{lem-insep}.  For
  general $X$, the fact that \fourjac\ is necessary for $\preceq_{\mbbj} $ to
  have a Jacobi representation follows simply because if the Jacobi identity
  holds on $X$, then it holds on every $Y\subseteq X$. For the sufficiency of
  \fourjac, we proceed by induction.  As in lemma 3 and claim 9 of \gsii, we
  assume that $X$ is well-ordered.  


  In the case that $ \lvert X \rvert \leq 4 $, we only need to show that
  $ v ^{ \dd } $ is unique. \Wlog, we take the initial step in our induction
  argument to satisfy $ \lvert X \rvert = 4 $. 
  Let $ \mathbf v ^{\dd }$ denote any other Jacobi representation of
  $\preceq_{\mbbj}$. By \cref{lem-insep}, for every distinct
  $ x , y \in Y ^{ 2 } $, there exists $ \lambda^{\{x,y\}}> 0 $ such that
  $ \mathbf v ^{ \xy } = \lambda ^{\{x,y\}} v ^{ \xy }$. We need to show that
  $ \lambda ^{\{x,y\}} = \lambda $ for every distinct $ x , y \in Y $.  Let
  $ Y = \{ x , y , z , w \}$.  By \cref{lem-c2dQ}, the set
  $\{ v ^{ \xy } , v ^{ \xz } , v ^{ \xw } \}$ is pairwise noncollinear. Then,
  since the Jacobi identity holds for both $ v ^{ \dd } $ and
  $ \mathbf v ^{\dd }$, we derive the equation
\begin{linenomath*}
  \begin{equation}\label{eq-jac-unique}
    (1 - \lambda ^{\{x,y\} })v ^{ \xy } + (1 - \lambda ^{\{y,z\}})v ^{\yz } =
    (1 - \lambda ^{\{x,z\}}) v ^{\xz}
  \end{equation}
\end{linenomath*}
  Suppose that $ 1 - \lambda ^{\{y,z\} } = 0 $. Then, either the other coefficients
  in \cref{eq-jac-unique} are both equal to zero (and our proof is complete), or
  we obtain a contradiction of \cref{lem-c2dQ}.  Thus, $ 1 - \lambda ^{\{y,z\} } $
  is nonzero and we may divide through by this term and solve for
  $ v ^{ \yz } $. First note that, since $ v ^{ \dd }$ is a Jacobi representation,
  $ v ^{ \yx }+ v ^{ \xy } = v ^{ (y,y) } = 0$. Then, since  
  $ v ^{ \yx} = - v ^{ \xy }$, 
\begin{linenomath*}
  \[
    v ^{ \yz } =
    \frac{ 1- \lambda ^{ x y }}{ 1 - \lambda ^{ y z }} v^{ \yx}
    +
    \frac{ 1 - \lambda ^{ x y }}{ 1 - \lambda ^{ y z }} v^{ \xz }.
  \]
  \end{linenomath*}
  We then conclude that both of the coefficients in the latter equation are
  equal to one. (This follows from linear independence of $ v ^{ \yx }$ and
  $ v ^{ \xz }$ together with the Jacobi identity
  $ v ^{ \yz } = v ^{ \yx} + v ^{ \xz }$.)  Thus,
  $ \lambda ^{\{x,y\}} = \lambda ^{\{y,z\}} = \lambda ^{\{x,z\}} $, as
  required. Repeated application of the same argument to the remaining Jacobi
  identities yields the desired conclusion, $ \mathbf v^{\dd}= \lambda v^{\dd}$.


For the inductive step, take $Y$ to be an initial segment of $X$. By the
induction hypothesis, there exists a Jacobi representation
$ \v ^{ \dd }: Y^{2} \times \mbbt \rightarrow \R $ of the improper
$Y$-extension $\extb = \preceqb_{\mbbj} \cap Y^{2} $ that is suitably unique.
\begin{claim}\label{claim-induction-well-defined}
  For every $ w \in X \bs Y $ and $W \defeq Y \cup \{w\}$, there
exists a Jacobi representation
$\hat{v}^{\dd}: W^{2} \times \mbbt \rightarrow \R$ of the improper
$W$-extension $\hextb $.
\end{claim}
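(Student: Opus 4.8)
The plan is to build $\hat v^{\dd}$ from $v^{\dd}$ by propagating the Jacobi rule along the single edge $\{y_{0},w\}$ for a fixed $y_{0}\in Y$. If $\countof W\leq 4$ the claim is immediate, because \fourjac\ then asserts the Jacobi identity on $W$ itself (and for $\countof W\leq 2$ a pairwise representation is already a Jacobi representation); so I would assume $\countof W\geq 5$, hence $\countof Y\geq 4$. By \cref{lem-insep} (applicable since the hypothesised conditionally-$2$-diverse representation $u^{\dd}$ of $\preceq_{\mbbj}$ yields \ref{KQ}--\ref{AQ} and $2$-diversity) the improper $\{y_{0},w\}$-extension has a pairwise representation $\nu:\mbbtpp\rightarrow\R$, unique up to a positive scalar. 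I would declare $\hat v^{(x,y)}=v^{(x,y)}$ for $x,y\in Y$, $\hat v^{(y_{0},w)}=\kappa\,\nu$ for a positive $\kappa$ to be pinned down, $\hat v^{(x,w)}=v^{(x,y_{0})}+\hat v^{(y_{0},w)}$ for $x\in Y\bs\{y_{0}\}$, and extend antisymmetrically ($\hat v^{(y,x)}=-\hat v^{(x,y)}$, $\hat v^{(w,w)}=0$).

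The first thing to record is that, \emph{whatever} $\kappa$ is, $\hat v^{\dd}$ satisfies the Jacobi identity on every triple of $W$: on triples inside $Y$ this is the induction hypothesis; on a triple $\{x,y_{0},w\}$ it is the defining equation; and on a triple $\{x,x',w\}$ with $x,x'\in Y\bs\{y_{0}\}$ it follows by adding $\hat v^{(x,x')}=v^{(x,x')}$ to the definition of $\hat v^{(x',w)}$ and invoking the identity $v^{(x,y_{0})}=v^{(x,x')}+v^{(x',y_{0})}$ that holds on $Y$. So the only substantive task is to choose $\kappa$ making $\hat v^{\dd}$ a genuine \emph{pairwise} representation, i.e.\ making each $\hat v^{(x,w)}$ represent the $\{x,w\}$-extension. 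Applying \threejac\ (a consequence of \fourjac) to $\{x,y_{0},w\}$ gives a Jacobi representation of that improper extension; by the pair-uniqueness in \cref{lem-insep} its $\{x,y_{0}\}$- and $\{y_{0},w\}$-rows are positive multiples of $v^{(x,y_{0})}$ and $\nu$, so its $\{x,w\}$-row is a pairwise representation of the $\{x,w\}$-extension of the form $s_{x}\,v^{(x,y_{0})}+t_{x}\,\nu$ with $s_{x},t_{x}>0$. Since conditional-$2$-diversity (\cref{lem-c2dQ}) makes $v^{(x,y_{0})}$ and $\nu$ noncollinear, $\hat v^{(x,w)}=v^{(x,y_{0})}+\kappa\,\nu$ is a positive multiple of this pairwise representation precisely when $\kappa=t_{x}/s_{x}$.

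Everything therefore reduces to showing $t_{x}/s_{x}$ is the same for all $x\in Y\bs\{y_{0}\}$; I would then set $\kappa$ to that (positive) common value, after which each $\hat v^{(x,w)}$ is a positive multiple of a bona fide pairwise representation and hence inherits every clause of \cref{lem-insep} (notably $2$-diversity), completing the proof of the claim. For distinct $x,x'\in Y\bs\{y_{0}\}$ I would apply \fourjac\ to the $4$-set $\{x,x',y_{0},w\}$, take a Jacobi representation of the corresponding improper extension, and normalise it so that its $\{y_{0},w\}$-row equals $\nu$. Its $\{x,y_{0}\}$-, $\{x',y_{0}\}$- and $\{x,x'\}$-rows are then $a\,v^{(x,y_{0})}$, $a'\,v^{(x',y_{0})}$, $b\,v^{(x,x')}$ with $a,a',b>0$, and its $\{x,w\}$-row is $\mu\bigl(s_{x}v^{(x,y_{0})}+t_{x}\nu\bigr)$ with $\mu>0$. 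The triple identity on $\{x,x',y_{0}\}$ together with $v^{(x,y_{0})}=v^{(x,x')}+v^{(x',y_{0})}$ and noncollinearity of $v^{(x,x')},v^{(x',y_{0})}$ forces $a=a'=b$; the triple identity on $\{x,y_{0},w\}$, which reads $\mu\bigl(s_{x}v^{(x,y_{0})}+t_{x}\nu\bigr)=a\,v^{(x,y_{0})}+\nu$, together with noncollinearity of $v^{(x,y_{0})},\nu$ forces $\mu s_{x}=a$ and $\mu t_{x}=1$, i.e.\ $a=s_{x}/t_{x}$; symmetrically $a'=s_{x'}/t_{x'}$; hence $t_{x}/s_{x}=t_{x'}/s_{x'}$.

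I expect the last step to be the genuine obstacle, and it is precisely where \fourjac\ (rather than merely \threejac) is indispensable: conditional-$2$-diversity supplies only \emph{pairwise} noncollinearity of the $v^{(\cdot,\cdot)}$, so two triples sharing the edge $\{y_{0},w\}$ do not by themselves constrain their relative scalings — one must pass through a common $4$-element set to transport the normalisation. The remaining verifications (agreement of the pieces on overlaps, and the sign and $2$-diversity clauses of \cref{lem-insep} for the new rows) are routine once $\kappa$ is fixed.
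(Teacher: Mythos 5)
Your proposal is correct and follows essentially the same route as the paper's proof: both define the new rows $\hat v^{(\cdot,w)}$ via the Jacobi identity through an anchor in $Y$ and establish well-definedness by passing two triangles through a common four-element set containing $w$, where \fourjac\ forces the shared coefficient and the pairwise noncollinearity from \cref{lem-c2dQ} (plus the uniqueness clause of \cref{lem-insep} and the induction hypothesis on $Y$) pins the scalars down. The only difference is bookkeeping — you anchor a single vertex $y_{0}$ and a common scalar $\kappa$ on $\nu$, whereas the paper anchors a pair $\{y,z\}$ and matches the coefficients $\beta,\sigma,\tau$ across the two $4$-sets $Z$ and $Z'$ — which is immaterial.
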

\begin{proof}[Proof of \cref{claim-induction-well-defined}]
  Via \cref{lem-c2dQ}, there exists a conditionally $2$-diverse pairwise
  representation $u^{\dd}$ of $\preceq_{\mbbj}$. Fix any four distinct elements
  $x, x ' , y , z$ in $Y$.  \Cref{lem-insep} implies the existence of
  $\phi, \phi' \in \posreal $ such that
  $\phi u^{\xz} = \v^{\xz}$ and $\phi' u^{\xpz} = \v^{\xpz}$.  Let
  $Z = \{ x, y , z , w\}$ and $Z ' = \{ x' , y , z , w\}$. Since \threejac\
  holds, there exist positive scalars
  $ \alpha , \beta , \hat \beta, \gamma, \sigma $ and $ \tau $ such that
\begin{linenomath*}
\begin{align}\label{eq-xy}
  \alpha u ^{ \xw } + \beta u ^{ \wy } &= \gamma u ^{ \xy },\\
   \label{eq-yz}
    \hat \beta u ^{ \yw } + \sigma u^{ \wz } &= \tau u ^{ \yz }, \text{  and}\\
   \label{eq-xz}
  \gamma u ^{ \xy } + \tau u ^{ \yz } &= \v ^{ \xz } .
\end{align}
\end{linenomath*}
\addtocounter{linenumber}{-1}
Moreover, \fourjac\ ensures that we may take
$\beta = \hat \beta $.  Since $u^{\dd}$ is conditionally $2$-diverse,
$ \{ u ^{ \xy } , u ^{ \yz } \}$ is linearly independent, and the linear system
\cref{eq-xz} in the unknowns $ \gamma $ and $ \tau $ has a unique solution.
This, together with the induction hypothesis (which yields
$ \v ^{ \xy } + \v ^{ \yz } = \v ^{ \xz }$) implies that
$ \gamma u ^{ \xy } = \v ^{ \xy }$ and $ \tau u ^{ \yz } = \v ^{ \yz } $.
Similarly, for $ Z ' $, \fourjac\ yields
$ \alpha ' , \beta ' , \sigma ' , \gamma ' , \tau ' > 0 $ such that
\begin{linenomath*}
 \begin{align}\label{eq-xpy}
   \alpha ' u ^{ \xpw } + \beta ' u ^{ \wy } &= \gamma ' u ^{ \xpy },\\
\label{eq-yz-xp}
   \beta '  u ^{ \yw } + \sigma '  u ^{ \wz }& = \tau '  u ^{ \yz }, \text{ and }\\
\label{eq-xpz-y}
    \gamma '  u ^{ \xpy } + \tau ' u ^{ \yz } &= \v ^{ \xpz } .
 \end{align}
\end{linenomath*}
\addtocounter{linenumber}{-1} As in the arguments involving $ \gamma $ and
$ \tau $, the induction hypothesis yields
$ \gamma ' u ^{ \xpy } = \v ^{ \xpy } $ and $ \tau ' u ^{ \yz } = \v ^{ \yz }$.
We conclude that $ \tau = \tau' $.  Substituting for $\tau'$ in \cref{eq-yz-xp}
and appealing to linear independence of $\{u^{\yw}, u^{\wz}\}$ then yields the
desired equalities $\beta = \beta' $ and $\sigma = \sigma ' $.
  
  As a consequence of the above argument, for every $y , z \in Y$, take
  $ \hat{v}^{\yw} $ and $ \hat{v}^{\wz}$ to be the unique vectors in
  $\R^{\mbbt}$ that solve the equation
  $\hat{v} ^{ \yw } + \hat{v} ^{ \wz } = \v ^{ \yz } $. For every $y , z \in Y$,
  let $\hat{v}^{\yz} = \v^{\yz}$ and $\hat{v}^{(w,w)} = 0$.  Then the matrix
  $\hat{v}^{\dd} $ with row vectors
  $\left\{ \hat{v}^{\xy}: x , y \in W \right\}$ is a Jacobi representation of
  $\hext$.
\end{proof}
Our proof of \cref{claim-induction-well-defined} shows that the extension to $W$
holds for any initial subsegment of $Y$ consisting of four elements. Our proof
thereby accounts for the case where  $ X $ is infinite and  $w$  is a limit ordinal.
\end{proofEnd}
\end{step}

\begin{step}[for $\countof \mbbt < \infty $, the concluding arguments in the proof of
  \cref{thm-mainQ}]\label{step-conc-mainQ}
  In \cref{step-condtwodiv} we showed that \ref{KQ}–\ref{c2dQ} hold if, and only
  if, $\preceq_{\mbbj}$ has a conditionally $2$-diverse pairwise
  representation.  In \cref{step-prudence}, we showed that, when
  $\preceq_{\mbbj}$ satisfies \ref{TQ}–\ref{c2dQ}, a necessary and sufficient
  condition for \fourjac\ is \ref{PQ}.  In \cref{step-induction},
  via (mathematical) induction, we showed that conditionally $2$-diverse Jacobi
  representations of $Y$-extensions of $\preceq_{\mbbj}$ such that
  $\countof Y = 4$, can be patched together to obtain a conditionally
  $2$-diverse Jacobi representation of $\preceq_{\mbbj}$ (on all of $X$,
  regardless of cardinality). The fact that \ref{TQ} is necessary for a
  Jacobi representation follows from \gsii.  As a
  consequence, \ref{TQ}–\ref{c2dQ} and \ref{PQ} are necessary and sufficient for
  a conditionally $2$-diverse Jacobi representation of $\preceq_{\mbbj}$. The
  following argument then completes the proof of \cref{thm-mainQ}.

  Let $v^{\dd}$ be a (conditionally $2$-diverse) Jacobi representation of
  $\preceq_{\mbbj}$ and define $\mathbf{v}: X\times \mbbt \rightarrow \R$ as
  follows. Fix arbitrary $w\in X$, and let $\mathbf{v}(w,\cdot) = 0 $. Then, for
  every $x\in X$, let $\mathbf{v}(x,\cdot) = v^{\wx}$. Recalling that
  $v^{\wx}= -v^{\xw}$, note that, since the rows of $v^{\dd}$ satisfy the Jacobi
  identity, for every $x,y\in X$, we have
\begin{linenomath*}
    \[v^{\xy}= v^{\xw}+v^{\wy} = -\mathbf{v}(x,\cdot) + \mathbf{v}(y,\cdot).\]
\end{linenomath*}
To see that
  \ref{rep-mainQ} holds note that, for every $J\in \mbbj$, we have
  $x \preceq_{J} y$, if, and only if, $0 \leq \langle v^{\xy}, J \rangle $, if,
  and only if, $\langle v(x,\cdot), J \rangle \leq \langle \mathbf{v}(y,\cdot), J \rangle$.
  For \ref{rows-mainQ}, note that, since $v^{\dd}$ is a conditionally
  $2$-diverse pairwise representation, for every $x,y\in X$,
\begin{linenomath*}
  \[0 \not \leq v^{\xy} = - \mathbf{v}(x,\cdot) + \mathbf{v}(y,\cdot).\]
\end{linenomath*}  
Finally, for every $z\in X$, we have, for every $\lambda \in \R$,
$v^{\zx} \neq \lambda v^{\zy}$. Equivalently,
\begin{linenomath*}
  \[ v(x,\cdot) \neq (1-\lambda) \mathbf{v}(z,\cdot ) +  \lambda
    \mathbf{v}(y,\cdot) . \]
\end{linenomath*}
\Cref{thm-main} part II, on uniqueness, follows from \cref{lem-induction} and,
without modification, part 3 of the proof of theorem 2 of \gsii\ (see page 23).
\end{step}

\begin{step}[the case of arbitrary $X$ and $\mbbt$]\label{step-infinite}
  Note that the inner product we have been working with throughout is
  well-defined on the infinite-dimensional, linear subspace
  $\R^{\oplus \mbbtpp} \subset \R^{\mbbtpp}$ of vectors with finite
  support. Indeed, for every $v \in \R^{\mbbtpp}$ and
  $J \in \R^{\oplus\mbbtpp}_{\mplus}$, the inner product is a finite sum
\begin{linenomath*}
  \[\langle v, J\rangle = \sum_{\{t:J(t)>0\}}v(t)J(t).\]
  \end{linenomath*}
  Next note that $\mbbjpp $ is just the set of rational-valued vectors in
  $ \R^{\oplus \mbbtpp}$. As such, the notions of orthogonality and hyperplanes
  carry over to the present, infinite-dimensional, setting.  \Wlog, let $\aext$
  be an improper $Y$-extension of $\preceq_{\mbbj}$ and note that, for every
  $Y\subseteq X$ of cardinality $2,3$ or $4$, the representation
  $\acute{u}^{\dd}$ of $\aext$ has finite rank $\acute{\mathbf r}$. The
  \emph{essentialization} $\ess(\acute{\mc H})$ of the associated arrangement
  $\acute{\mc H}$ in $ \R^{\oplus \mbbt}$ is the arrangement we obtain via the
  by orthogonally projecting $\acute{\mc H}$ onto the
  ($\acute{\mathbf r}$-dimensional) span of $\acute{u}^{\dd}$. Let
  $\textup{p}: \R^{\oplus \mbbt}\rightarrow S = \spann\{\acute{u}^{\dd}\}$
  denote this projection. Then, for any $J\in \mbbjpp$, $J$ is the sum of
  $\textup{p}(J)$ and a term that belongs to the kernel of $u^{\xy}$, for every
  $x,y \in Y$. Thus, for every $J \in \R^{\oplus\mbbt}$,
  $\langle u^{\xy}, J \rangle_{\,\R^{\oplus\mbbt}} = \langle u^{\xy},
  \textup{p}(J) \rangle_{S}$. As such, all the structure of $\acute{\mc H}$ in
  $\R^{\oplus\mbbt}$ is preserved by $\ess(\acute{\mc H})$ in the
  finite-dimensional subspace $S$.

  For $n = \acute{\mathbf{r}}$, take $\mc T= \{T_{i}: i = 1, \dots , n\} $ to be
  an orthonormal basis for $S$. Then let $T: R^{n} \rightarrow S $ to be the
  matrix with columns $\mc T$. Then, for every $J \in \R^{n}$,
  $T(J) = J_{1} T_{1} + \cdots + J_{n}T_{n}$, where $J_{i} \in \R$ for each
  $i$. Via orthonormality of $\mc T$,

\begin{linenomath*}
\begin{equation*}
\langle  T_{i} , T_{j} \rangle_{S} = \left\{ 
\begin{array}{ll}
1 & \text{if $i = j$, and}\\
0 & \text{otherwise.}
\end{array}
\right.
\end{equation*}
\end{linenomath*} 
\begin{linenomath*}
It then follows that, for every pair of vectors $v,J \in
  \R^{n}$, we have
\[ \langle T(v) , T(J) \rangle_{S} = \sum_{i,j = 1}^{n} v_{i}J_{j} \,\langle  T_{i},
   T_{j} \rangle_{S} = \langle v, J \rangle_{\,\R^{n}}.\]
\end{linenomath*}
Now, for each row $u^{\xy}$ of $u^{\dd}$, let
$\mathbf{\acute{u}}^{\xy} \defeq T^{-1} (\acute{u}^{\xy})$, and let
$\mathbf{\acute{u}}^{\dd}$ denote the matrix with rows
$\{\mathbf{\acute{u}}^{\xy}: x,y\in Y \}$. Then let
$\mathbf{\aext}= \langle \mathbf{\aext}_{J}: J \in \mbb Q_{\mplus}^{n} \rangle$
denote the extension generated by $\mathbf{\acute{u}}^{\dd}$ in
$\mbb Q_{\mplus}^{n}$, for every $x,y\in Y$ and every
$J \in \mbb Q_{\mplus}^{n}$, satisfy $x \mathbin{\mathbf{\aext}}_{J} y $ if, and
only if, $\langle \mathbf{\acute{u}}^{\xy},J \rangle_{\,\R^{n}} \geq 0$. Then,
by construction, $\langle \mathbf{\acute{u}}^{\xy},J \rangle_{\,\R^{n}} \geq 0$
if, and only if $\langle \acute{u}^{\xy}, T(J) \rangle_{S} \geq 0$. Thus,
$x \mathbin{\mathbf{\aext}}_{J} y $ if, and only if, $x \aext_{T(J)} y$.

Finally, note that, with the exception of \cref{lem-induction} of
\cref{step-induction}, all results in the proof of \cref{thm-mainQ} involve
$Y$-extensions $\ext$ of $\preceq_{\mbbj}$ for $\countof Y = 2,3$ or $4$.  By
taking $\mbbt_{Y} = \{1,\dots, n\}$ and substituting for $\mbbt$ and then
working with $\mathbf{\aext}$ and $\mathbf{\acute{u}}^{\dd}$ in $\R^{n}$, all
these results extend to arbitrary $X$ and $\mbbt$.  And, since
\cref{step-induction} holds for arbitrary $\mbbt$ and $X$, the proof of
\cref{thm-mainQ} is indeed complete.
\end{step}

\section{Statement and proof of \cref{thm-foureq}}\label{sec-proof-foureq}
 \begin{theorem}\label{thm-foureq}
   For $\preceq_{\mbbj}$ satisfying \ref{TQ}--\ref{c2dQ}, \fourpru\ holds if,
   and only if, the Jacobi identity holds (for some pairwise representation of
   $\preceq_{\mbbj}$).
 \end{theorem}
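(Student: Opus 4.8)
The plan is to route the equivalence through the condition \fourjac, by establishing the two links \fourpru\ $\Leftrightarrow$ \fourjac\ and \fourjac\ $\Leftrightarrow$ ``the Jacobi identity holds for some pairwise representation of $\preceq_{\mbbj}$''. The second link is essentially in place: under \ref{TQ}--\ref{c2dQ}, \cref{lem-insep} together with \cref{lem-c2dQ} furnishes a conditionally-$2$-diverse pairwise representation of $\preceq_{\mbbj}$, so \cref{lem-induction} applies verbatim and yields that \fourjac\ is equivalent to the existence of a Jacobi representation of $\preceq_{\mbbj}$, together with its uniqueness up to a positive scalar. For the first link I would reduce to finite $\mbbt$: both \fourpru\ and \fourjac\ concern only $Y$-extensions with $\countof Y\in\{3,4\}$, each of which carries (by \cref{lem-insep}) a pairwise representation of finite rank, so the essentialization of \cref{step-infinite} lets us replace $\mbbt$ by a finite set without disturbing the relevant arrangements, the validity of \ref{TQ}--\ref{AQ}, the sets $\total(\cdot)$, or the Jacobi identity on $Y$. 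Hence it suffices to prove \fourpru\ $\Leftrightarrow$ \fourjac\ when $\mbbt$ is finite.

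For \fourjac\ $\Rightarrow$ \fourpru\ I would argue thus. Let $\ext$ be a testworthy novel $Y$-extension with $\countof Y\in\{3,4\}$ and let $J_{0}\in\mbbj$ witness testworthiness, so $\extb_{\novel}=\extb_{J_{0}\times 0}^{-1}$ with $\extb_{J_{0}\times 0}$ total. By nonrevision (\cref{def-extensionQ}) the restriction of $\ext$ to $\mbbj$ is the improper $Y$-extension, which by \fourjac\ has a Jacobi pairwise representation $v^{\dd}$ on $\mbbt$; extend it to $\mbbtp$ by the column $\hat v^{\xy}(\novel)\defeq-\langle v^{\xy},J_{0}\rangle$. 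Additivity of $\langle\,\cdot\,,J_{0}\rangle$ and the Jacobi identity for $v^{\dd}$ make $\hat v^{\dd}$ satisfy the Jacobi identity on $\mbbtp$, and since $\extb_{J_{0}\times 0}$ is total this column is nonzero off the diagonal and realizes $\extb_{J_{0}\times 0}^{-1}$; so the extension $\hext$ generated by $\hat v^{\dd}$ agrees with $\ext$ on $\mbbj$ and at $\novel$, i.e. is a perturbation of $\ext$. Being a Jacobi representation, $\hat v^{\dd}$ has the separable form $\hat v^{\xy}=\hat{\mathbf v}(y,\cdot)-\hat{\mathbf v}(x,\cdot)$, so $\hextb_{J}$ is represented by the real map $x\mapsto\langle\hat{\mathbf v}(x,\cdot),J\rangle$ and \ref{TQ}--\ref{AQ} follow at once. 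Finally, since $\hext$ is transitive, every chamber of its arrangement is a total order, and a comparison with the arrangement of $\ext$ (whose $\mbbt$-part coincides) shows that $\hext$ sheds no total order already present, so $\countof\total(\hext)\geq\countof\total(\ext)$ and $\hext$ is nondogmatic.

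For \fourpru\ $\Rightarrow$ \fourjac\ --- the crux --- fix $Y$ with $\countof Y\in\{3,4\}$ and, via \cref{lem-c2dQ}, a conditionally-$2$-diverse representation $u^{\dd}$ of the improper $Y$-extension; the target is that, for each triple $\{x,y,z\}\subseteq Y$, $u^{\xz}$ lies in the positive cone of $u^{\xy}$ and $u^{\yz}$ --- after row-rescaling, the Jacobi identity on $Y$. By \cref{obs-testworthy}, either \fourdiv\ holds on $Y$, in which case the existence theorem of \gsii\ applied on $Y$ supplies a separable --- hence Jacobi --- pairwise representation on $Y$; or there is a testworthy novel $Y$-extension $\ext$, and, using \cref{prop-pairwise-extremal}, one may take its novel ranking $\extb_{\novel}$ to be the inverse of a total order realized on $\mbbj$. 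In the latter case \fourpru\ yields a nondogmatic perturbation $\aext$ satisfying \ref{TQ}--\ref{AQ}, which by \cref{lem-insep} carries a $2$-diverse pairwise representation $\acute{w}^{\dd}$ on $\mbbtp$ whose $\mbbt$-restriction is a positive row-rescaling of $u^{\dd}$ (since $\aext$ agrees with the improper extension on $\mbbj$) and whose novel column has the prescribed sign pattern. Transitivity of $\aext$ at every database forces, for each triple and each of its two $3$-cycles, the three corresponding open half-spaces of $\posreal^{\mbbtp}$ to have empty common intersection; a Gordan-type alternative in the open cone $\posreal^{\mbbtp}$ turns the two cycles into a pair of opposing conic inequalities, and --- because $\extb_{\novel}$ has been pinned to the inverse of a total order realized on $\mbbj$, so that its sign data straddles the arrangement --- these inequalities together force $\acute{w}^{\xz}$, hence $u^{\xz}$, into the positive span of the other two rows. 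Ranging over all triples and over all four-element $Y$ gives \fourjac, and with the second link the theorem follows.

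The step I expect to be the main obstacle is this last direction, and in it the passage from ``no intransitive chamber'' to the Jacobi \emph{equality} rather than a mere one-sided inequality: this is precisely where the choice of $\extb_{\novel}$ as the inverse of a total order realized on $\mbbj$ (via \cref{prop-pairwise-extremal}) has to be exploited, so that the two Gordan alternatives coming from the two $3$-cycles are complementary. A secondary, delicate point is the chamber comparison needed for nondogmatism in the forward direction; the reduction to finite $\mbbt$ via \cref{step-infinite} is then routine, since the essentialization leaves perturbations, testworthiness, and $\total(\cdot)$ intact.
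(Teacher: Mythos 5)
Your skeleton matches the paper's: the reduction to ``\fourpru\ $\Leftrightarrow$ \fourjac'' via \cref{lem-induction}, the essentialization to finite $\mbbt$ from \cref{step-infinite}, and your lifted column $\hat v^{\xy}(\novel)=-\langle v^{\xy},J_{0}\rangle$ is exactly the central construction of \cref{prop-central-testworthy}. But the forward direction has a genuine gap at the nondogmatism step. Your claim that ``$\hext$ sheds no total order already present'' is false in the case $\countof Y=4$ with $\mbbt$-rank $\mathbf r=2$: appending a linear functional of the rows preserves rank, so your lift is a \emph{central rank-$2$} arrangement with at most six hyperplanes in $\posreal^{\mbbtp}$, hence at most $12$ chambers and $\countof\total(\hext)\leq 12$. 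Yet when \fourjac\ holds there exist testworthy novel $Y$-extensions realizing many more total orders at databases with positive novel weight --- for instance the paper's two-anchor construction (equations \eqref{eq-xyz}--\eqref{eq-xw} with $\epsilon\neq 0$ small and $\hat{\beta}=\beta$), whose representation has rank $3$, satisfies the Jacobi identity, and realizes all $4!=24$ total orders. Relative to such an $\ext$ (or to nearby testworthy extensions with some intransitive chambers but more than $12$ total orders sharing the same novel ranking) your perturbation is dogmatic. This is precisely the case the paper flags with ``maximally-diverse extensions have centerless arrangements'': there the central lift is abandoned in favour of the rank-raising $\epsilon$-construction, and maximality of the chamber count is proved via Zaslavski (\cref{eq-zaslavski-4}).

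The backward direction is also not a proof as it stands, and the unresolved step you flag is indeed the crux, but the fix is not a Gordan alternative. Over the open cone $\posreal^{\mbbtp}$ the alternative only delivers nonnegative multipliers and one-sided inequalities, and pinning $\extb_{\novel}$ to the inverse of one realized total order is nothing more than testworthiness: for $\countof Y=4$ and $\mathbf r=2$ a single-anchor novel column cannot detect a failure of \fourjac\ at all (the central lift preserves the Jacobi status and remains transitive), so applying \fourpru\ to such a witness yields no contradiction when \fourjac\ fails. The paper instead chooses the testworthy novel extension adversarially and argues through chamber counts: for $\countof Y=3$, \cref{lem-Y3-r2} and \cref{lem-Y3-r3} show via the M\"obius computation that when the rank is $3$ every perturbation with six total orders must contain an intransitive chamber; for $\countof Y=4$ it first secures \threejac\ on triples, then (after \cref{lem-r3-Y4} for $\mathbf r=3$) uses the two-anchor construction so that $\hat{\beta}\neq\beta$ forces $\{\acute v^{\xw},\acute v^{\xz},\acute v^{\zw}\}$ to have rank $3$ (\cref{eq-matrix-3}), whence every nondogmatic perturbation of that particular extension violates \ref{TQ}. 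That choice-of-witness argument, encoding the $\hat{\beta}$ versus $\beta$ discrepancy into the novel column via two anchor points $J$ and $L$, is the missing idea your sketch would need to supply.
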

 Via \cref{lem-induction}, it suffices to show that \fourpru\ is equivalent to
 \fourjac. Then by construction, \fourpru\ and \fourjac\ are conditions that
 apply independently to each subset $Y$ of $X$ that has cardinality $3$ or $4$.
 Throughout the present section, $Y\subseteq X$ has cardinality $3$ or $4$ and
 $\ext$ denotes the improper $Y$-extension of $\preceq_{\mbbj}$. This will allow
 us to apply the translation of \cref{step-infinite} and work in a finite
 dimensional space. But first we need to rule out the following case.
 \paragraph{Suppose there is no testworthy $Y$-extension that is
   novel,}\hskip-7pt In
 this case, \fourpru\ holds vacuously on $Y$. The follow lemma confirms that, in
 this case, \fourdiv\ holds and theorem 2 of \gsii\ applies, so that \fourjac\
 holds on $Y$.
\begin{lemma}\label{lem-test-empty-fourdiv}
  If every testworthy $Y$-extension of
  $\preceq_{\mbbj}$ is regular, then $\lvert \mbbt \rvert =
  \infty $ and \fourdiv\ holds on $Y$.
\end{lemma}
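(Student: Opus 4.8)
The plan is to prove both conclusions ($\countof\mbbt=\infty$ and \fourdiv\ on $Y$) by contradiction, in each case building a testworthy $Y$-extension of $\preceqb_{\mbbj}$ that is \emph{novel}. The construction is the same in both parts. Let $\aext$ be the improper $Y$-extension; it satisfies \ref{TQ}--\ref{AQ}, and since \condtwodiv\ implies \twodiv, \cref{lem-insep} yields a $2$-diverse pairwise representation $\acute{u}^{\dd}:Y^{2}\times\mbbt\to\R$, which by \cref{lem-c2dQ} is conditionally-$2$-diverse. I adjoin one coordinate $\novel$ by choosing a skew-symmetric ``novel column'' $w\in\R^{Y^{2}}$ (so $w^{\yx}=-w^{\xy}$, $w^{(x,x)}=0$) and forming $\bar{u}^{\dd}:Y^{2}\times\mbbtp\to\R$ with $\mbbt$-block $\acute{u}^{\dd}$ and $\novel$-column $w$; then $\ext$ is the proper $Y$-extension generated on $\mbbjp$ by $x\ext_{J}y\iff\langle\bar{u}^{\xy},J\rangle\geq0$. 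Since the $\mbbt$-block of $\bar{u}^{\dd}$ is $\acute{u}^{\dd}$, we get $\ext_{J\times0}=\aext_{J}=\preceqb_{J}\cap Y^{2}$ for every $J\in\mbbj$, so $\ext$ really is an extension of $\preceqb_{\mbbj}$ (\cref{def-extensionQ}); and $\bar{u}^{\dd}$ inherits properties \ref{K-insep}--\ref{unique-insep} of \cref{lem-insep} --- \ref{skew-insep} by construction, and \ref{D-insep} because a witness in $\acute{G}^{\xy}_{\mdoubleplus}\subseteq\posreal^{\mbbt}$, given a small enough positive $\novel$-coordinate, still lies in $\acute{G}^{\xy}_{\mdoubleplus}\subseteq\posreal^{\mbbtp}$ --- so by the converse half of \cref{lem-insep}, $\ext$ satisfies \ref{KQ}--\ref{AQ} and \twodiv, and by its last clause $\ext$ is novel exactly when $w\neq\acute{u}^{\dd}(t)$ for every $t\in\mbbt$. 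Finally, if the sign pattern of $w$ realizes the inverse $R^{-1}$ of a total order $R=\aext_{J_{0}}$ for some $J_{0}\in\mbbj$, then $\ext$ is testworthy: $\ext_{J_{0}\times0}=R$ is total and $\ext_{\novel}=R^{-1}$.

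\emph{Part $(i)$, $\countof\mbbt=\infty$.} Suppose $\countof\mbbt<\infty$. By \cref{prop-pairwise-extremal} there is $J_{0}\in\mbbj$ with $R=\aext_{J_{0}}$ total; fix a $w$ whose sign pattern realizes $R^{-1}$ (that is, $w^{\xy}>0$ iff $y\mathbin{R}x$, for distinct $x,y$). This leaves free the magnitudes of the finitely many off-diagonal entries of $w$; since $\mbbt$ is finite, for a fixed distinct pair $x_{0},y_{0}$ the set $\{\acute{u}^{(x_{0},y_{0})}(t):t\in\mbbt\}$ is finite, so I may pick $\lvert w^{(x_{0},y_{0})}\rvert$ outside it while keeping its sign, which forces $w\neq\acute{u}^{\dd}(t)$ for every $t$. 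Then $\ext$ is testworthy and novel by the construction above, contradicting the hypothesis; hence $\countof\mbbt=\infty$.

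\emph{Part $(ii)$, \fourdiv\ on $Y$.} Now $\countof\mbbt=\infty$. The case $\countof Y=2$ follows from \twodiv, so suppose \fourdiv\ fails: there is $Y'\subseteq Y$ with $n=\countof Y'\in\{3,4\}$ and $\countof\total(\aext_{Y'})<n!$ for the improper $Y'$-extension $\aext_{Y'}$. Via \cref{step-infinite}, essentialize the arrangement $\acute{\mc H}_{\mdoubleplus}$ of $\aext_{Y'}$ to an arrangement in a finite-dimensional $\posreal^{\mathbf r}$, $\mathbf r=\rank\{\acute{u}^{\xy}:x,y\in Y'\}$. Running Zaslavsky's theorem as in \cref{eg-zaslavski} and the proof of \cref{prop-fourdiv-empty}, the strict inequality $\countof\total(\aext_{Y'})<n!$ forces $\acute{\mc H}_{\mdoubleplus}$ to be \emph{deficient}: either centerless in $\posreal^{\mathbf r}$ (the positive kernel $A^{Y'}_{\mdoubleplus}$ empty, as in \cref{eg-zaslavski}) or of too small a rank to cut out all $n!$ chambers inside the positive orthant. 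This deficiency is precisely what leaves room to adjoin a new case type: combining the affine-independence bookkeeping of \cref{lem-c2dQ} with the rank count behind \cref{prop-fourdiv-empty}, one obtains a skew column $w$ realizing the inverse $R^{-1}$ of a total order $R$ featured in $\aext$ (with $R^{-1}$ also featured, by \cref{prop-pairwise-extremal}) with $w\neq\acute{u}^{\dd}(t)$ for every $t\in\mbbt$ --- since the columns $\acute{u}^{\dd}(t)$ are confined to the column space of $\acute{u}^{\dd}$, which by the deficiency does not exhaust the full-dimensional open sign-cone of $R^{-1}$, such a $w$ exists even when $\mbbt$ is infinite. The construction then produces a testworthy novel $\ext$, contradicting the hypothesis; so \fourdiv\ holds on $Y$.

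\emph{Main obstacle.} The delicate step is this last one in part $(ii)$: translating ``$\countof\total(\aext_{Y'})<n!$'' into the concrete existence of a skew column realizing $R^{-1}$ and distinct from every $\acute{u}^{\dd}(t)$. It calls for a genuine case analysis of the intersection semilattice $\acute{\mc L}_{\mdoubleplus}$ in $\posreal^{\mbbt}$ --- separating the centerless situation of \cref{eg-zaslavski} from the low-rank ones --- together with care that an \emph{infinite} set $\mbbt$ of case types does not already exhaust the relevant sign-cone, which is exactly where the finite rank $\mathbf r$ produced by essentialization and the deficiency forced by the failure of \fourdiv\ are used.
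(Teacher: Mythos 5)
Your part (i) is fine, and is close in spirit to what the paper does: if a suitable novel column can be chosen off the (finitely many) existing columns, \cref{lem-insep} turns it into a testworthy novel $Y$-extension, contradicting the hypothesis. The problem is part (ii), which you yourself flag as the delicate step, and which is where the entire content of ``\fourdiv\ holds on $Y$'' lives. Your bridge from ``$\countof \total(\aext_{Y'}) < n!$'' to ``the columns $\acute{u}^{\dd}(t)$ miss some point of the open sign-cone of $R^{-1}$'' rests on the claim that the column space of $\acute{u}^{\dd}$ is a \emph{proper} subspace of the $\binom{n}{2}$-dimensional space of skew vectors, i.e.\ that failure of \fourdiv\ forces $\mathbf r < \binom{n}{2}$. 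That implication is not established by the Zaslavsky ``deficiency'' you invoke, and it is false in general: under \ref{TQ}--\ref{c2dQ} a triple can have $\mathbf r = 3 = \binom{3}{2}$ with the two cyclic cones simply lying outside $\posreal^{\mbbt}$ (transitivity only constrains the positive orthant, not all of $\R^{\mbbt}$ -- this is exactly the rank-$3$ case treated in \cref{lem-Y3-r3}), and in that configuration some total orders can fail to meet the positive orthant while the columns span the whole skew space. Centerlessness of $\acute{\mc H}_{\mdoubleplus}$ likewise says nothing about rank. So when $\mbbt$ is infinite your dimension count yields no $w$, and whether the columns literally exhaust the sign-cone is a genuinely finer question than any chamber count answers.

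The paper resolves this by arguing in the opposite direction, never trying to extract geometry from the \emph{failure} of \fourdiv. Starting from a testworthy extension $\hext$ supplied by \cref{prop-central-testworthy}, with $P = \hsextb_{\novel}$, it shows (\cref{claim-test-empty}) that the hypothesis ``no testworthy $Y$-extension is novel'' forces the columns to take \emph{every} prescribed value: for each strictly positive $\eta^{P}$ there exist $s,t \in \mbbt$ with $\hat v^{P}(s) = \eta^{P}$ and $\hat v^{P}(t) = -\eta^{P}$ -- because any missed value could itself serve as the novel column of a testworthy novel extension. This already gives $\countof \mbbt = \infty$, and then \fourdiv\ is proved \emph{directly}: for an arbitrary total order $R$ one picks $s$ with values $1 \pm \epsilon$ (according to whether a pair lies in $R \cap P$ or $R^{-1}\cap P$) and $t$ with constant value $-1$ on $P$, and checks that $K = \delta_{s} + \delta_{t} \in \mbbj$ realises $R$. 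Your proposal would need to replace the ``deficiency $\Rightarrow$ proper column space'' step with something of this exact-coverage type; as written, the step fails.
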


\begin{proof}[Proof of \cref{lem-test-empty-fourdiv}] \label{proof-test-empty-fourdiv}

  Let $ \lvert Y \rvert = 4 $. (The proof for case where $ \lvert Y \rvert = 3 $
  is similar and omitted.)  Via \cref{prop-central-testworthy}, the set of
  testworthy $Y$-extensions is nonempty. Let $ \hext $ be a testworthy
  $ Y $-extension, so that there exists $J $ in $ \mbbj$ such that
  $\hext_{J\times 0 }$ is total and equal to the inverse of
  $\hextb_{\novel}$. Let $ P \defeq \hsextb_{\novel}$. Thus $P$ is the
  asymmetric part of a total ordering of $Y$.  Via \cref{lem-insep}, let
  $\hat{v}^{\dd}$ be the matrix representation of $\hext$ and let $\hat{v}^{P}$
  denote the restriction of $\hat{v}^{\dd}$ to
  $P \times \mbbtp$.
  \begin{claim}
    \label{claim-test-empty}
    For every vector
    $ \eta^{P} = \langle \eta^{\xy} \in \R_{\mdoubleplus }: \xy \in P \rangle $,
    there exist $ s, t \in \mbbt $ such that
    $\hat{v}^{P}(s)= \eta^{P}$ and $\hat{v}^{P}(t)= - \eta ^{P}$.
  \end{claim}
  
  \begin{proof}[Proof of \cref{claim-test-empty}]
    \label{proof-test-empty}

    By way of contradiction, suppose there exists
    $ \acute{\eta}^{P} \in \R^{P}_{\mdoubleplus}$ such that, for every $s$ in
    $ \mbbt$, $\hat{v}^{P}(s) \neq \acute{\eta}^{P}$. That is $\eta^{P}$ such
    that, for every $ s $ in $ \mbbt $, there exists $ \xy $ in $ P $ such that
    $ \hat{v} ^{ \xy} (s) \neq \acute{\eta}^{\xy } $. This property will suffice
    for the existence of a testworthy $Y$-extension $\aext$ that is novel.
    Define $ \acute{v}^{\dd}: X^{2} \times \mbbtp \rightarrow \R$ as
    follows. For each $ \xy $ in $ P $, let
\begin{linenomath*}
  \begin{equation*}
  \acute{v} ^{ \xy } (s)\defeq
  \left\{
    \begin{array}{ll}
      \acute{\eta}^{\xy} & \text{if $s  = \novel $,}\\
     \hat {v}^{\xy}(s) & \text{otherwise}.
      \end{array}
    \right.
  \end{equation*}
\end{linenomath*}
  For every $ \xy $ in $P^{-1}$, since $\yx $ in $ P$, take
  $ \acute{v}^{\xy} = - \acute{v} ^{ \yx } $. Finally, since $P $ is the
  asymmetric part of a total ordering, for every remaining $ \xy $ in
  $ Y ^{ 2 }$, $x = y$, so let $ \acute{v} ^{ \xy } = 0 $.  Observe that, by
  construction, for every $s $ in $ \mbbt$,
  $\acute{v}^{\dd}(s) \neq \acute{v}^{\dd}(\novel)$. This allows us to appeal to
  \cref{lem-insep} and take $\aextb$ to be the novel extension that
  $\acute{v}^{\dd}$ generates.
  Moreover, $\aext $ is also testworthy. For together $P = \hsextb_{\novel}$ and
  $\hsextb_{\novel} = \hsextb_{J\times 0 }^{-1}$ imply that
  $ \aextb _{ \novel } = \aextb _{ J \times 0 } ^{ - 1 }$ since, for every
  $ \xy $ in $ P$, we have
\begin{linenomath*}
  \[ \langle \acute{v} ^{ \xy } , J \times 0 \rangle = \langle \hat{v} ^{ \xy } , J
    \times 0 \rangle< 0 < \acute{\eta}^{\xy} = \acute{v} ^{ \xy } (\novel).\]
\end{linenomath*}


This contradiction implies that, for every $ \eta ^{ P } \in \posreal^{P} $,
there exists $s $ in $\mbbt $ such that $\hat{v}^{P}(s) = \eta^{P}$.
\emph{Mutatis mutandis}, a repetition of the preceding argument by contradiction
confirms that there exists $t $ in $ \mbbt $ such that
$ \hat{v}^{P}(t) = - \eta^{P}$.
\end{proof}
\Cref{claim-test-empty} implies that, when every testworthy $Y$-extension is
regular, the cardinality of $\mbbt $ is equal to the cardinality of $\R^{P}$.
We now show that \fourdiv\ holds on $Y$. Let $ {R} $ denote an arbitrary
total ordering of $ Y $.  We show that, for some $ K $ in $ \mbbj $,
$ \langle \hat{v} ^{ \xy }, K \rangle \geq 0 $ if, and only if, $ \xy $ in
${R} $. \Cref{claim-test-empty} ensures that we can choose
$ s $ in $ \mbbt $ such that, for some $0 < \epsilon <1$
\begin{linenomath*}
  \begin{equation*}
    \hat{v} ^{ \xy } ( s ) =
    \left \{
      \begin{array}{l l}
        1+ \epsilon & \text{ if $ \xy $ in $ {R} \cap P  $},\\
        1-\epsilon & \text{if $ \xy $ in $ {R} ^{ - 1 } \cap P   $}.
      \end{array}
    \right.
  \end{equation*}
\end{linenomath*}
  Via \cref{claim-test-empty}, take $ t$ in $ \mbbt $ such that, for every
  $ \xy $ in $ P$, $ \hat{v}^{ \xy } ( t ) = -1 $.  Let
  $ K : = \delta _{ s } + \delta _{ t } $ in $ \mbbjp $, so that
  $\langle \hat{v}^{\xy}, K \rangle = \hat{v}^{\xy}(s) + \hat{v}^{\xy}(t)$. By evaluating
  terms and observing that $\epsilon > 0 $ we obtain
\begin{linenomath*}  
 \begin{equation*}
   \langle  \hat{v}^{\xy}, K \rangle =
   \left \{
      \begin{array}{l l}
        (1+ \epsilon) - 1 > 0  & \text{if $ \xy $ in $ {R} \cap P   $},\\
                 (1- \epsilon) - 1 < 0 & \text{if $\xy $ in ${R}^{-1}\cap P $}.
      \end{array}
    \right.
  \end{equation*}
\end{linenomath*}  
  Since $\xy $ in ${R}^{-1} \cap P^{-1} $ if, and only if,
  $\yx $ in ${R}\cap P$ (and, similarly,
  $\xy $ in ${R} \cap P^{-1}$ if, and only if $ \yx$ in
  ${R}^{-1}\cap P$), we appeal to $\hat{v}^{\xy} = - \hat{v}^{\yx}$ and obtain
\begin{linenomath*}
 \begin{equation*}
   \langle  \hat{v}^{\xy}, K \rangle =
   \left \{
      \begin{array}{l l}
        -( 1 + \epsilon) + 1 < 0 & \text{if $ \xy $ in $ {R}^{-1} \cap P^{-1}$,}\\
        -(1- \epsilon) + 1 > 0  & \text{if $ \xy $ in $ {R} \cap P^{-1}  $.}
      \end{array}
    \right.
  \end{equation*}
\end{linenomath*}
Since
$P$ is the asymmetric part of a total ordering we conclude that, for every
$x\neq y$, $\langle \hat{v}^{\xy}, K \rangle
$ has the right sign. Finally, for $x = y $, $\langle \hat{v}^{\xy}, K \rangle =
0$.
\end{proof}

As a consequence of \cref{lem-test-empty-fourdiv}, for the remainder of the proof of
\cref{thm-foureq} we work under the assumption that the set of testworthy
$ Y $-extensions that are novel is nonempty. 

\paragraph{For the remainder of this section, redefine
  $\mbbt : = \mbbt_{Y} $,}\hskip-7pt where recall  $\mbbt_{Y}$ is defined in \cref{step-infinite}.  Thus, in the present section, $\mbbt$ is the finite
set that indexes the translation to $\R^{n}$ of \cref{step-infinite}. Then
$\mbbtp = \mbbt \cup[\novel]$ as before and indeed all other notation remains
the same.  Whereas $\mbbt = \mbbt_{Y}$ is now finite, the set
$\mbbc_{/\sim^{\ext}}$ of equivalence classes of $\sim^{\ext}$ in $\mbbc $ may
be infinite as we have seen in \cref{lem-test-empty-fourdiv}.

 Since \ref{KQ}--\ref{c2dQ} hold, the pairwise representation
 $ u^{\dd}: Y^{2}\times \mbbt \rightarrow \R $ of $\ext$ is conditionally
 $2$-diverse.  \Cref{lem-c2dQ} implies that $ u^{\dd}$ has row rank
 $\mathbf r \geq 2$.


Recall the definition of central arrangements. We now show that the set of
testworthy extensions with a central arrangement is always nonempty.
\begin{lemma}\label{prop-central-testworthy}
  For every $J \in \mbbj$ such that $\ext_{J}$ is total, there exists a
  $Y$-extension $\aext $ with $\aextb_{\novel } = \extb_{J } ^{-1}$. Moreover,
  its arrangement $\acute{\mc H}_{\mdoubleplus}$ is central$;$ its
  representation $\acute{u}^\dd$ has rank $ \acute{\mathbf r}= \mathbf r
  $$;$ and \fourjac\ holds for $\aext$ if, and only if, it holds for $\ext$.
\end{lemma}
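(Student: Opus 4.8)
The plan is to extend the $2$-diverse pairwise representation $u^{\dd}$ of $\ext$ by a single, carefully chosen column indexed by $\novel$, and to read off $\aext$ from the enlarged matrix via \cref{lem-insep}.

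\textbf{Setup and construction.} Since \ref{TQ}--\ref{c2dQ} hold, \cref{lem-insep} together with \cref{lem-c2dQ} gives a $2$-diverse pairwise representation $u^{\dd}\colon Y^{2}\times\mbbt\to\R$ of $\ext$, with $u^{\yx}=-u^{\xy}$, $u^{(x,x)}=0$, rank $\mathbf r$, and unique among such representations up to a positive rescaling of each row. As $\extb_{J}$ is total, $\langle u^{\xy},J\rangle\neq 0$ for all distinct $x,y\in Y$, an open condition on $J$; since $\mbbj\subseteq\nnreal^{\mbbt}=\overline{\posreal^{\mbbt}}$, I may fix $K\in\posreal^{\mbbt}$ with $\extb_{K}=\extb_{J}$ (take $K=J$ if $J$ is already strictly positive). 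Let $\Phi\colon\R^{\mbbt}\to\R^{\mbbtp}$ be the linear injection with $\Phi(v)(t)=v(t)$ for $t\in\mbbt$ and $\Phi(v)(\novel)=-\langle v,K\rangle$, and put $\acute u^{\xy}\defeq\Phi(u^{\xy})$, yielding $\acute u^{\dd}\colon Y^{2}\times\mbbtp\to\R$. This matrix is again skew-symmetric and $2$-diverse: for distinct $x,y$ lift any $J_{0}\in G^{\xy}_{\mdoubleplus}\subseteq\posreal^{\mbbt}$ to $(J_{0},\varepsilon)\in\posreal^{\mbbtp}$ with $\varepsilon>0$ small, which lands in $\acute G^{\xy}_{\mdoubleplus}$. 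Let $\aext$ be generated by $\acute u^{\dd}$, i.e.\ $x\aextb_{L}y\iff\langle\acute u^{\xy},L\rangle\geq 0$ for $L\in\mbbjp$. Restricting $L$ to vectors with null $\novel$-coordinate recovers $u^{\dd}$, so $\aextb_{J_{0}\times 0}=\extb_{J_{0}}$ for every $J_{0}\in\mbbj$; hence nonrevision holds, $\aext$ is a proper $Y$-extension of $\preceq_{\mbbj}$, and — since $\acute u^{\dd}$ satisfies conditions (i)--(iv) of \cref{lem-insep} with respect to $\aext$ (skew-symmetry and $2$-diversity as noted, the remaining conditions by construction) — $\aext$ satisfies \ref{KQ}--\ref{AQ} and \twodiv\ on $Y$ with $\acute u^{\dd}$ as its $2$-diverse pairwise representation.

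\textbf{The four assertions.} (i) $x\aextb_{\novel}y\iff\acute u^{\xy}(\novel)\geq 0\iff\langle u^{\xy},K\rangle\leq 0\iff y\extb_{K}x$, so $\aextb_{\novel}=\extb_{K}^{-1}=\extb_{J}^{-1}$; together with $\aext_{J\times 0}=\extb_{J}$ total this also makes $\aext$ testworthy. (ii) $J^{\ast}\defeq(K,1)\in\posreal^{\mbbtp}$ satisfies $\langle\acute u^{\xy},J^{\ast}\rangle=\langle u^{\xy},K\rangle-\langle u^{\xy},K\rangle=0$ for all distinct $x,y$, so $J^{\ast}\in\bigcap_{x\neq y}\acute H^{\{x,y\}}_{\mdoubleplus}$ and $\acute{\mc H}_{\mdoubleplus}$ is central. (iii) $\Phi$ is injective and linear, so $\acute{\mathbf r}=\dim\spann\{\acute u^{\xy}:x,y\in Y\}=\dim\spann\{u^{\xy}:x,y\in Y\}=\mathbf r$. (iv) The $2$-diverse pairwise representations of $\aext$ are precisely the rowwise positive rescalings $(\mu^{\{x,y\}}\acute u^{\xy})_{x,y}=(\Phi(\mu^{\{x,y\}}u^{\xy}))_{x,y}$, and those of $\ext$ the $(\lambda^{\{x,y\}}u^{\xy})_{x,y}$; since $\Phi$ and its inverse (restriction to $\mbbt$-coordinates, on the image of $\Phi$) are linear, such a family satisfies the Jacobi identity after applying $\Phi$ iff before, so $\aext$ admits a Jacobi representation iff $\ext$ does. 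As $\countof Y\in\{3,4\}$ and the Jacobi identity on $Y$ restricts to every three-element subset, this is exactly the claim that \fourjac\ holds for $\aext$ iff it holds for $\ext$.

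\textbf{Main obstacle.} The crux is the choice of the $\novel$-column as $v\mapsto-\langle v,K\rangle$ applied to the matching $\mbbt$-row: this one choice simultaneously encodes the reversed total order at the singleton database $\delta_{\novel}$, makes $(K,1)$ a strictly positive common zero of all rows (hence centrality), keeps the new column inside the column space of $u^{\dd}$ (hence rank $\mathbf r$), and is linear in the row (hence Jacobi relations pass both ways). The only genuinely fiddly point is replacing $J$ by a strictly positive $K$ with the same induced ranking, needed purely to place the centre of $\acute{\mc H}_{\mdoubleplus}$ inside $\posreal^{\mbbtp}$ rather than merely in $\R^{\mbbtp}$.
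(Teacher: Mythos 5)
Your proof is correct and takes essentially the same route as the paper: after moving $J$ to a strictly positive vector, you append to each row $u^{\xy}$ the entry $-\langle u^{\xy},J\rangle$ (up to a positive normalisation), so that a strictly positive point is a common zero of all rows, which simultaneously yields $\aextb_{\novel}=\extb_{J}^{-1}$, centrality, preservation of rank, and the Jacobi equivalence via linearity. The only cosmetic differences are that your centre is $(K,1)$ rather than the paper's $\bigl((1-\acute{\iota})J\bigr)\times\acute{\iota}$, and you obtain $\acute{\mathbf r}=\mathbf r$ from injectivity of the linear map $\Phi$ where the paper counts kernel dimensions via rank--nullity.
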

\begin{proof} Let $J \in \mbbj$ such that $\ext_{J}$ is total. If, for some
  $t \in \mbbt$, $ J(t) = 0 $, then, via $\binom{\countof Y}{2}$ applications of
  \ref{AQ} (or the continuity properties of the inner product), there exists
  there $L \in \posreal^{\mbbt}$ such that $\extb_{L} = \extb_{J}$. Thus,
  \withoutlog, take $J(t)>0$ for every $t\in \mbbt$. For some rational
  $0<\acute{\iota}<1$, let
  $\acute{J} = (1-\acute{\iota})J \times \acute{\iota} $, so that $\acute{J}$
  lies on the (relative) interior of $ \mbbjp $.  Now, for every $x,y$ in $ Y$,
  let
  \begin{linenomath*}
  \[\acute{\eta}^{\xy} := -\frac{1- \acute{\iota}}{\acute{\iota}}\langle
    u^{\xy}, J \rangle,
  \]
  \end{linenomath*}
  and let $\acute{u}^{\xy}\defeq u^{\xy}\times \acute \eta^{\xy}$, so that
  $\langle \acute{u}^{\xy}, \acute{J} \rangle = 0$. Let $\aext$ be the
  associated $Y$-extension, so that by construction
  $\aextb_{\novel} = \aextb_{J}^{-1}$.  Since
  $\acute{J} \in \acute{H}^{\{x,y\}}$ for every distinct $x,y\in Y$,
  $\acute{\mc H}$ is central. By construction, note that
  $\mathbf r \leq \acute{\mathbf r} \leq \mathbf r + 1$. Moreover, recalling the
  rank--nullity theorem, we have
  $\acute{\mathbf r} = \countof \mbbtp - \textup
  {dim}(\textup{ker}(\acute{u}^{\dd}))$. Now, $\acute{J} \in \R^{\mbbtp}$
  belongs to $ \textup{ker}(\acute{u}^{\dd})$, but not
  $\textup{ker}(u^{\dd})\times 0 \subset \R^{\mbbtp}$. Moreover, the latter set
  belongs to $\textup{ker}(\acute{u}^{\dd})$. Thus,
  $\textup{ker}(\acute{u}^{\dd})$ is of dimension one more than
  $\textup{ker}(u^{\dd})\subset \R^{\mbbt}$. Then since
  $\countof \mbbtp = \mbbt +1$, the rank--nullity theorem yields
  $\acute{\mathbf r} = \mathbf r$.  Finally, via linearity of the inner product,
  the Jacobi equations \eqref{eq-xy}--\eqref{eq-xz} hold for $u^{\dd}$ if, and
  only if, they hold for $\acute{\eta}^{\dd}$ and hence $\acute{u}^{\dd}$.
\end{proof}

\paragraph{The case where $Y$ has cardinality
  $3$,\hskip-10pt} follows from the next two lemmas.
  \begin{lemma}\label{lem-Y3-r2} 
    If $Y = \{x,y,z\} $ and $\mathbf r =
    2$, then \fourjac\ and \fourpru\ hold on $Y$.
  \end{lemma}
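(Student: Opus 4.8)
The plan is to handle \fourjac\ and \fourpru\ on $Y$ in turn; recall first that, by the construction of $\mbbt = \mbbt_Y$ in \cref{step-infinite}, $\mathbf r = \countof\mbbt = 2$, so $\posreal^{\mbbt}$ is an open quadrant and $\posreal^{\mbbtp}$ an open octant. Let $u^{\dd}: Y^{2}\times\mbbt\rightarrow\R$ be the conditionally-$2$-diverse pairwise representation of the improper $Y$-extension $\ext$ furnished by \cref{lem-insep} and \cref{lem-c2dQ}. Its rows $u^{\xy}, u^{\yz}, u^{\xz}$ are pairwise noncollinear and span the $2$-dimensional row space, so we may write $u^{\xz} = a\,u^{\xy} + b\,u^{\yz}$ with $a, b$ uniquely determined, and $a, b \neq 0$ since $u^{\xz}$ is collinear with neither $u^{\yz}$ nor $u^{\xy}$.

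The key step is to deduce $a > 0$ and $b > 0$ from \ref{TQ}. By part \ref{D-insep} of \cref{lem-insep}, each of the lines $H^{\{x,y\}}_{\mdoubleplus}, H^{\{y,z\}}_{\mdoubleplus}, H^{\{x,z\}}_{\mdoubleplus}$ meets the open quadrant $\posreal^{\mbbt}$; being pairwise distinct they appear inside the quadrant as three rays out of the origin, cutting it into four chambers which (since sweeping the angular coordinate across the quadrant crosses the three rays in turn) form a block of four \emph{consecutive} sectors in the cyclic order, about the origin, of the six sectors of the full plane arrangement $\mc H$ in $\R^{\mbbt}$. By \ref{TQ}, local constancy of $\ext_{\cdot}$ on chambers, and density of $\mbbj$, the relation $\ext_{J}$ is a total order for every $J$ in one of these four chambers; in particular neither of the two cyclic sign patterns $\langle u^{\xy},J\rangle, \langle u^{\yz},J\rangle > 0 > \langle u^{\xz},J\rangle$ nor its reversal is realised in $\posreal^{\mbbt}$. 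But if $a < 0$ or $b < 0$, both cyclic patterns are realised somewhere in $\R^{\mbbt}$, occupying a pair of antipodal sectors of $\mc H$ (three apart in the cyclic order of six), and any four consecutive sectors out of six must contain a member of any such pair, since the two omitted sectors are consecutive whereas an antipodal pair is not. One of our four quadrant chambers would then carry an intransitive ranking, contradicting \ref{TQ}; hence $a, b > 0$. Setting $v^{\xy}\defeq a\,u^{\xy}$, $v^{\yz}\defeq b\,u^{\yz}$, $v^{\xz}\defeq u^{\xz}$, $v^{\yx}\defeq -v^{\xy}$, $v^{\zy}\defeq -v^{\yz}$, $v^{\zx}\defeq -v^{\xz}$ and $v^{(x,x)}=v^{(y,y)}=v^{(z,z)}=0$ gives rows that are positive multiples of the rows of $u^{\dd}$, so by the uniqueness clause of \cref{lem-insep} $v^{\dd}$ is still a pairwise representation of $\ext$; and $v^{\xz}=v^{\xy}+v^{\yz}$ yields all ordered-triple Jacobi identities on $Y$ by antisymmetry. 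As $\countof Y = 3$, this is \fourjac\ on $Y$.

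For \fourpru\ on $Y$, let $\aext$ be a testworthy $Y$-extension of $\preceq_{\mbbj}$ that is novel, witnessed by $J\in\mbbj$ with $\aextb_{J\times 0}$ total and $\aextb_{\novel}=\aextb_{J\times 0}^{-1}$; nonrevision (part \ref{item-preservingQ} of \cref{def-extensionQ}) gives $\aextb_{J\times 0}=\ext_{J}$, so $\ext_{J}$ is total and $\aextb_{\novel}=\ext_{J}^{-1}$. Apply \cref{prop-central-testworthy} to this $J$ to get a $Y$-extension $\hext$ with $\hextb_{\novel}=\ext_{J}^{-1}=\aextb_{\novel}$, hence a perturbation of $\aext$; its arrangement $\hat{\mc H}_{\mdoubleplus}$ in $\posreal^{\mbbtp}$ is central, its representation has rank $\mathbf r = 2$, and \fourjac\ holds for $\hext$ iff it holds for $\ext$ --- which, by the previous paragraph, it does. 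So $\hext$ has a Jacobi pairwise representation $\hat{v}^{\dd}$; then $\langle \hat{v}^{\xz},J\rangle=\langle\hat{v}^{\xy},J\rangle+\langle\hat{v}^{\yz},J\rangle$ and its analogues make $\hext_{J}$ transitive for every $J\in\mbbjpp$, so \ref{TQ} holds, and (the restriction of $\hext$'s $2$-diverse representation along $\mbbt$ is that of $\ext$, and appending a small positive $\novel$-coordinate keeps it $2$-diverse) $\hext$ has a $2$-diverse pairwise representation, so by \cref{lem-insep} it satisfies \ref{KQ}--\ref{AQ}. Finally the three hyperplanes of $\hat{\mc H}_{\mdoubleplus}$ are distinct (their normals are pairwise noncollinear by \cref{lem-c2dQ}, as $\hext$ inherits \ref{c2dQ} from $\ext$), and this central arrangement has rank $2$, so its common intersection is a line through $\posreal^{\mbbtp}$; as in the $\mc L$-computation of \cref{eg-zaslavski}, each of the six chambers of the arrangement has the point where that line pierces $\posreal^{\mbbtp}$ in its closure and so meets the open octant, whence $\hext$ realises all $6=3!$ total orders. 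Thus $\countof\total(\hext)=6\geq\countof\total(\aext)$ and $\hext$ is a nondogmatic perturbation, establishing \ref{PQ} --- i.e.\ \fourpru\ --- on $Y$.

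I expect the main obstacle to be the sign statement $a, b > 0$: extracting these strict inequalities from bare transitivity is exactly where the small combinatorial observation (four consecutive sectors out of six meet every antipodal pair) carries the argument. The rest is bookkeeping, save for the region count for the central arrangement $\hat{\mc H}_{\mdoubleplus}$, which is the same M\"{o}bius computation as the $\mc L$-side of \cref{eg-zaslavski}.
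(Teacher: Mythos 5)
Your proof is correct, and it reaches the \fourjac\ half by a genuinely different route from the paper. The paper never argues directly in $\R^{\mbbt}$: it lifts to the testworthy central extension of \cref{prop-central-testworthy}, counts $\countof\acute{\mc G}=6$ via Zaslavski's theorem (\cref{eq-zaslavski-3}), shows every chamber is total by enumerating the six chambers cyclically and reflecting the two that miss $\mbbj\times\{0\}$ through the center, deduces \threediv\ for the lifted extension, and only then obtains \threejac\ from lemma 2 of \gsii, transferring it back to $\ext$ by \cref{prop-central-testworthy}. You instead extract the identity directly from the rank-$2$ geometry: writing $u^{\xz}=a\,u^{\xy}+b\,u^{\yz}$ and forcing $a,b>0$ from \ref{TQ} via the observation that the four quadrant chambers are four \emph{consecutive} sectors of the six-sector plane arrangement and so must meet either member of the antipodal pair of cyclic-sign sectors; this is more elementary (no appeal to \gsii's lemma 2 or to a M\"{o}bius count for this half) and isolates exactly where transitivity bites. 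For \fourpru\ both arguments hinge on the same construction, but you run the logic in reverse: the already-established Jacobi identity gives \ref{TQ} for the lifted extension (as the paper does in \cref{lem-r3-Y4}), and maximal diversity follows because every chamber of a central, rank-$2$ arrangement contains the center ray --- which pierces $\posreal^{\mbbtp}$ --- in its closure, whereas the paper proves totality of all six chambers first and derives Jacobi from it. Two cosmetic remarks: positive rescaling of rows preserves the sign sets directly, so the appeal to the uniqueness clause of \cref{lem-insep} is superfluous; and distinctness of the lifted hyperplanes follows at once because collinear lifted rows would have collinear restrictions to $\mbbt$, with no need to say the lift ``inherits'' \ref{c2dQ}. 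Finally, your inequality $\countof\total(\hext)\geq\countof\total(\aext)$ follows the main-text definition of a nondogmatic perturbation (the appendix restatement reverses the inequality, evidently a typo), which is also how the paper's own proof uses it.
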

  \begin{proof}[Proof of \cref{lem-Y3-r2}]
    Fix arbitrary $J \in \mbbj$ such that $\ext_{J}$ is total.  We first apply
    Zaslavski's theorem to prove that the $Y$-extension $\aext$ of
    \cref{prop-central-testworthy} (which is testworthy relative to $J$)
    satisfies $\lvert \acute{\mc G}_{\mdoubleplus}\rvert = 6$.  Via
    \cref{lem-c2dQ}, $\lvert\acute{\mc H}_{\mdoubleplus}\rvert= 3$. And since
    every subarrangement of $\acute{\mc H}_{\mdoubleplus}$ is central, for every
    $k = 0, 1, 2, 3$ there are $\binom{ 3 }{k }$ ways to choose
    $\lvert \mc A \rvert = k $ hyperplanes from $ \acute{\mc
      H}_{\mdoubleplus}$. For $k < 3$, the rank of every subarrangement is
    $k$. For $k = 3$, the rank of the arrangement is $\acute{\mathbf r}$. So
    that, via \cref{prop-central-testworthy}, since $\acute{\mathbf r} = \mathbf
    r = 2$,  
\begin{linenomath*}
        \begin{equation}\label{eq-zaslavski-3}
                \begin{aligned}
                  \lvert \acute{\mc G}_{\mdoubleplus}\rvert =& \binom{3}{3}
                  (-1)^{3 -\acute{\mathbf{r}}}+\binom{3}{2}(-1)^{2-2}
                  +\binom{3}{1}(-1)^{1-1} + \binom{3}{0}(-1)^{0-0} = 6. 
                \end{aligned}
        \end{equation}
\end{linenomath*}
For both \fourjac\ and \fourpru, we require that every member of $\acute{\mc
  G}_{\mdoubleplus}$ is associated with a total ordering. This ensures that
\threediv\ holds for
$\aext$, so that, via lemma 2 of \gsii, \threejac\ holds for
$\aext$ and \cref{prop-central-testworthy} yields \threejac\ for
$\ext$. (When $\lvert Y \rvert =
  3$, \threejac\ and \fourjac\ coincide.) Also, for \fourpru, if $ \hext
$ is any novel, testable extension that satisfies $
\hextb_{\novel}=\extb_{J}^{-1}$, then
$\aext$ is a nondogmatic perturbation of
$\hext$ that satisfies \ref{TQ}-\ref{AQ}.
   
It therefore remains for us to show that every member of
$\acute{\mc G}_{\mdoubleplus}$ is associated with a total ordering of $Y$.
Since $\acute{\mc H}_{\mdoubleplus}$ is central,
$\acute{\mc G}_{\mdoubleplus} \equiv \acute{\mc G}$, we
suppress reference to $\mdoubleplus$. Let $\{\acute{G}^{r}: r = 1, \dots 6\}$ be
an enumeration of $\acute{\mc G}$ such that $\acute{G}^{r}$ and
$\acute{G}^{r+1}$ (\textup{mod} 6) are separated by a single hyperplane in
$\acute{\mc H}$. Moreover, let the first four members of this enumeration
intersect $\mbbj \times \{0\}$. Then, since
$\extb = \langle \extb_{I}: I \in \mbbj \rangle$ satisfies \ref{TQ},
$\acute{G}^{1}, \dots, \acute{G}^{4}$ are associated with total orders. Take
$L \in \acute{G}^{5} \cap \mbbjp$ and consider the affine path
$\lambda \mapsto \phi(\lambda)= (1-\lambda)\acute{L} + \lambda \acute{J}$, where
$\acute{J}$ is defined in \cref{prop-central-testworthy}. For some rational
$\lambda^{*}$ sufficiently close to, but greater than, $1$,
$L^{*} = \phi(\lambda^{*}) $ belongs to $\acute{G}^{*} \cap \mbbjp$ for some
$\acute{G}^{*} \in \acute{\mc G}$. Since $\phi(1) = \acute{J}$ belongs to the
center of $\mc H$, $L$ and $L^{*}$ are separated by all three members of
$\mc H$, so that $\aextb_{L^{*}} = \aextb_{L}^{-1} $. Thus, $\aextb_{L}$ is
transitive if, and only if, $\aextb_{L^{*}}$ is. Since $\acute{G}^{5}$ is
separated from $\acute{G}^{*}$ by $3$ hyperplanes,
$\acute{G}^{*} = \acute{G}^{2}$. Thus $\aextb_{L^{*}}$ is transitive.
\emph{Mutatis mutandis}, the same argument shows that $\acute{G}^{6}$ is also
associated with a total ordering of $Y$.
\end{proof}

\begin{lemma}\label{lem-Y3-r3}
  If $Y = \{x,y,z\} $ and  $ \mathbf{r} = 3 $, then neither \fourpru\ nor
  \fourjac\ hold.
\end{lemma}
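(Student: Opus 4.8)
The proof has two independent parts. \emph{First, \fourjac\ fails.} Since $\mathbf r=3$, the three rows $u^{\xy},u^{\yz},u^{\xz}$ of the conditionally-$2$-diverse pairwise representation $u^{\dd}$ of $\ext$ are linearly independent. By the uniqueness clause of \cref{lem-insep}, any pairwise representation $v^{\dd}$ of $\ext$ satisfies $v^{\xy}=\lambda^{\{x,y\}}u^{\xy}$, $v^{\yz}=\lambda^{\{y,z\}}u^{\yz}$ and $v^{\xz}=\lambda^{\{x,z\}}u^{\xz}$ with all three scalars strictly positive, so the Jacobi identity $v^{\xz}=v^{\xy}+v^{\yz}$ would force a nontrivial linear relation among a linearly independent set --- a contradiction. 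As $\countof Y=3$, \fourjac\ on $Y$ is just ``the Jacobi identity holds on $Y$'', so it fails.

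\emph{Second, \fourpru\ fails.} Since \ref{c2dQ} holds, \cref{prop-c2dQ} gives $\countof\total(\ext)\geq 3$, so fix $J\in\mbbj$ with $\ext_{J}$ total and let $\hext$ be the testworthy $Y$-extension that \cref{prop-central-testworthy} attaches to $J$: it has $\hextb_{\novel}=\extb_{J}^{-1}$, a central arrangement $\acute{\mc H}_{\mdoubleplus}$, and rank $\acute{\mathbf r}=\mathbf r=3$; moreover $\hext$ is novel, since equality of the $t$-th column of its representation with the $\novel$-column would (via the basis $\{u^{\xy},u^{\yz},u^{\xz}\}$) force $e_{t}=-\mu J$ for the positive scalar $\mu$ occurring in the construction, which is impossible. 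Its three hyperplanes being pairwise noncollinear (\cref{lem-c2dQ}), Zaslavsky's theorem yields $\countof\acute{\mc G}_{\mdoubleplus}=\binom{3}{0}+\binom{3}{1}+\binom{3}{2}+\binom{3}{3}=8$, so $\hext$ realizes all eight CAR rankings of $Y$ --- the six total orders and \emph{both} cyclic ones. I will show that no perturbation of $\hext$ satisfies \ref{TQ}; this kills \fourpru\ on $Y$ outright, and in particular rules out a nondogmatic perturbation satisfying \ref{TQ}--\ref{AQ}.

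So let $\aext$ be a perturbation of $\hext$. Then $\aextb_{\novel}=\extb_{J}^{-1}$ and, by nonrevision, $\aext$ agrees with $\ext$ on $\mbbj$; hence \twodiv\ on $Y$ holds for $\aext$ and \cref{lem-insep} gives it a $2$-diverse pairwise representation $\acute v^{\dd}$. The uniqueness clause together with the agreement on $\mbbj$ lets me normalise so that $\acute v^{\xy}=u^{\xy}\times a$, $\acute v^{\yz}=u^{\yz}\times b$, $\acute v^{\xz}=u^{\xz}\times c$ for scalars $a,b,c$ (the $\novel$-components), and since the $\mbbt$-blocks already span $\R^{\mbbt}$ the rank of $\acute v^{\dd}$ is exactly $3$. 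The constraint $\aextb_{\novel}=\extb_{J}^{-1}$ fixes the \emph{signs} of $(a,b,c)$ --- to the (transitive) sign pattern of $\extb_{J}^{-1}$ --- but leaves their magnitudes free. Because $\{u^{\xy},u^{\yz},u^{\xz}\}$ is a basis of $\R^{\mbbt}$ (this is where $\mathbf r=3$ is used), I pass to the dual coordinates $\phi_{1}=\langle u^{\xy},\cdot\rangle,\ \phi_{2}=\langle u^{\yz},\cdot\rangle,\ \phi_{3}=\langle u^{\xz},\cdot\rangle$, under which $\posreal^{\mbbt}$ becomes an open simplicial cone straddling each coordinate hyperplane $\{\phi_{i}=0\}$ --- precisely the $2$-diversity of $u^{\xy},u^{\yz},u^{\xz}$. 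In these coordinates $\acute{\mc H}_{\mdoubleplus}$ is $\{\phi_{i}+(\cdot)\,s=0:i=1,2,3\}$ on $\{(\phi,s):\phi\text{ in the cone},\ s>0\}$, and a Zaslavsky count that tracks which pairs are central for the fixed signs of $(a,b,c)$ shows the arrangement is either central --- giving all eight chambers, hence both cyclic CAR rankings --- or non-central with exactly two central pairs --- giving six chambers, of which a short feasibility check of the two cyclic linear systems shows exactly one cyclic chamber is killed and the other survives. In both cases $\aextb_{L}$ is intransitive for some $L$, so \ref{TQ} fails for $\aext$, as required.

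\emph{Main obstacle.} The Zaslavsky bookkeeping in both halves is routine; the genuine content is the last step --- showing that for \emph{every} choice of the magnitudes of $(a,b,c)$ (with signs fixed) at least one cyclic chamber is nonempty. This is where $\mathbf r=3$ (which furnishes the basis and hence the normal form) and $2$-diversity (which forces the simplicial cone to straddle all three coordinate hyperplanes) are both indispensable, and it is exactly the point at which the behaviour diverges from the $\mathbf r=2$ case of \cref{lem-Y3-r2}, where the central testworthy extension has only transitive chambers.
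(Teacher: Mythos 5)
Your first part (failure of \fourjac) is exactly the paper's argument and is fine. The problem is the second part: your plan is to show that \emph{no} perturbation of the testworthy extension $\hext$ satisfies \ref{TQ}, and you explicitly claim this ``kills \fourpru\ outright''. That claim is false, and the dichotomy you base it on (either central with $8$ chambers, or non-central with exactly two central pairs, $6$ chambers and a surviving cyclic chamber) does not follow from fixing only the signs of $(a,b,c)$. Concretely: by \cref{prop-pairwise-extremal} you may choose $J,L\in\mbbj$ with $\extb_{L}=\extb_{J}^{-1}$ both total, and then take the $\novel$-column to be $\bigl(\langle u^{\xy},L\rangle,\langle u^{\yz},L\rangle,\langle u^{\xz},L\rangle\bigr)$. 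This choice has exactly the sign pattern your constraint prescribes (it lies in the octant of $\extb_{J}^{-1}$), yet for every $(q,s)\in\mbbjp$ the induced ranking is $\extb_{q+sL}$ with $q+sL\in\mbbj$, so the resulting perturbation satisfies \ref{TQ}--\ref{AQ} and its arrangement in $\posreal^{\mbbtp}$ contains no cyclic chamber at all (the image cone is just that of the improper extension). So transitive perturbations with the prescribed $\novel$-ranking do exist; the entire content of the lemma is that they are necessarily \emph{dogmatic}, and your proposal never engages with the nondogmaticity clause of \fourpru.

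This is precisely where the paper's proof differs: after computing $\lvert\acute{\mc G}\rvert=8$ for the central testworthy extension of \cref{prop-central-testworthy} (so that a nondogmatic perturbation must realise all $3!=6$ total orders), it uses the M\"obius-function bookkeeping to show that any perturbation with $\countof\total=6$ must have either the center or all but at most one pairwise intersection nonempty in $\posreal^{\mbbtp}$, and then exhibits an explicit intransitive region (a point of $\hat{A}^{\{x,y\}\{y,z\}}_{\mdoubleplus}$ or of $\hat{G}^{\xz}_{\mdoubleplus}\cap\hat{G}^{\yx}_{\mdoubleplus}\cap\hat{G}^{\zy}_{\mdoubleplus}$), concluding that only dogmatic perturbations can satisfy \ref{TQ}. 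Your perturbations with $\novel$-column inside the cone of the existing columns are exactly these dogmatic ones: since $\mathbf r=3$ rules out \threediv\ for the improper extension (via lemma 2 of \gsii), they realise at most $5<6$ total orders. To repair your argument you would have to restrict attention to perturbations realising all six total orders and prove the intransitivity claim for those, i.e.\ essentially reproduce the paper's Zaslavsky/M\"obius step rather than the unconditional claim you make.
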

\begin{proof}[Proof of \cref{lem-Y3-r3}]
  When $\mathbf r = 3$, $u^{\xy}, u^{\yz}$ and $ u^{\xz}$ are linearly
  independent, so that \threejac\ fails to hold.

  We now confirm that \threepru\ also fails to hold.  Via
  \cref{prop-central-testworthy}, $\acute{\mathbf r} = \mathbf r $.  The only
  term in \cref{eq-zaslavski-3} that we adjust for the present lemma is
  $\mathbf{r}=3$. Thus,
\begin{linenomath*}
  \begin{align*}
    \lvert \acute{\mc G}\rvert
    = (-1)^{0}+3(-1)^{0}+ 3(-1)^{0}+(-1)^{0} = 8.
  \end{align*}
\end{linenomath*}
\addtocounter{linenumber}{-1}
  Then there are $3!=6$ members of $\total (\aext)$, and the two additional
  regions of $\acute {\mc G}$ are associated with intransitive CAR rankings. It
  remains for us to show that every $Y$-extension $\hext$ with
  $\countof \total (\hext) = 6$ is of this form. 
  
  \begin{figure}
    \begin{center}
  \begin{tikzpicture}
    \node (max) at (0,3) {$\{x,y,z\}$}; \node (a) at (-3,1.75)
    {$\{x,y\},\{y,z\}$}; \node (b) at (0,1.75) {$\{x,y\},\{x,z\}$}; \node (c) at
    (3,1.75) {$\{y,z\},\{x,z\}$}; \node (d) at (-2,0.25) {$\{x,y\}$}; \node (e)
    at (0,0.25) {$\{y,z\}$}; \node (f) at (2,0.25) {$\{x,z\}$}; \node (min) at
    (0,-1) {$\emptyset$}; \draw (min) -- (d) -- (a) -- (max) -- (b) -- (f) (e)
    -- (min) -- (f) -- (c) -- (max) (d) -- (b); \draw[preaction={draw=white,
      -,line width=6pt}] (a) -- (e) -- (c);
\end{tikzpicture}
\caption{\label{fig-hasse-Y3-r3} The intersection lattice of a central
  arrangement for $\countof Y = 3$ and $ {\mathbf r} =3$.}
\end{center}
\end{figure}

In the Hasse diagram of \cref{fig-hasse-Y3-r3}, an increase in level corresponds
to a decrease in dimension: since $\acute{A}^{\{x,y,z\}} $ is nonempty, it is of
dimension at least zero.  Since $\acute{J}$ belongs to the interior of
$\posreal^{\mbbtp}$ and $\acute{A}^{\{x,y\}\{y,z\}}$ is at least
one-dimensional, $ \acute{A}^{\{x,y\} \{y,z\}}_{\mdoubleplus}$ is
one-dimensional.  Since
$\acute{A}^{\{x,y,z\}} \subset \acute{A}^{\{x,y\}\{y,z\}}$, the latter set is
nonempty whenever $\acute{\mc H}_{\mdoubleplus}$ is central. The same, of
course, applies to other members at the same level. Conversely, if
$\acute{A}^{\{x,y\}\{y,z\}}$ is empty, then so is $\acute{A}^{\{x,y,z\}}$. We
now use this to show that there is a unique form of $Y$-extension $\hext$ such
that $\countof \hat{\mc G}_{\mdoubleplus} = 6$ and, moreover, that any such
$\hext$ fails to satisfy \ref{TQ}.

Recall from \cref{eg-zaslavski} that $\bmu(A^{\emptyset}) = 1$ and
$\bmu(A^{\{i,j\}}) = -\bmu(A^{\emptyset})$. Then, since
$\hat{A}^{\{x,y\}\{y,z\}} \subset \hat{A}^{\{x,y\}} , \hat{A}^{\{y,z\}}$, we
have $\bmu(A^{\{x,y\}\{y,z\}}) = - \bmu(A^{\emptyset})(1 - 2) = 1$. Finally, by
the same argument, since $\hat{A}^{\{x,y,z\}}$ belongs to every member of
$\hat{\mc L}$, we have
$\bmu(A^{\{x,y,z\}}) = - \bmu(A^{\emptyset})(1 - 3 + 3) = -1$.  Thus, when
$\mathbf{r} = 3$, if $\countof \hat{\mc G}_{\mdoubleplus} = 6$, then, in
addition to $\hat{A}^{\{x,y,z\}}_{\mdoubleplus}$, at most one of the
intersections $\hat{A}^{\{i,j\}\{k,l\}}_{\mdoubleplus}$ is empty. Then,
\withoutlog, suppose $\hat{A}^{\{x,y\}\{y,z\}}_{\mdoubleplus} $ is nonempty. We
will show that \ref{TQ} fails to hold. Take
$ L \in \hat{A}^{\{x,y\}\{y,z\}}_{\mdoubleplus} $. If $L \in \mbbjp$, then since
$x \hnext_{L}y \hnext_{L} z$ and $\neg (x \hnext_{L} z)$, the proof is
complete. In any case, $L$ belongs to one $\hat{G}^{\xz}_{\mdoubleplus}$ or
$\hat{G}^{\zx}_{\mdoubleplus}$. \Wlog, suppose
$L \in \hat{G}^{\xz}_{\mdoubleplus}$. Then $\hat{H}^{\{x,y\}}$ and
$\hat{H}^{\{y,z\}}$ split $\hat{G}^{\xz}_{\mdoubleplus}$ into the four
(nonempty) open regions formed by expanding
\begin{linenomath*}
  \[\hat{G}^{\xz}_{\mdoubleplus}\bs \left(\hat{H}^{\{x,y\}} \cap
      \hat{H}^{\{y,z\}}\right) = \hat{G}^{\xz}_{\mdoubleplus}\cap
    \left(\hat{G}^{\xy}_{\mdoubleplus} \cup \hat{G} ^{\yx}_{\mdoubleplus}
    \right) \cap \left (\hat{G}^{\yz}_{\mdoubleplus} \cup
      \hat{G}^{\zy}_{\mdoubleplus} \right). \]
\end{linenomath*}
Take any $K$ in the final member
$ \hat{G}^{\xz}_{\mdoubleplus}\cap \hat{G}^{\yx}_{\mdoubleplus} \cap
\hat{G}^{\zy}_{\mdoubleplus}$ of this expansion. Then, since
$ x \hsext_{K} z \hsext_{K} y \hsext_{K} x$, we observe that $\hext$ fails to
satisfy \ref{TQ}. We conclude that only dogmatic perturbations of $\aext$
satisfy \ref{TQ}.
\end{proof}
\paragraph{The case where $Y$ has cardinality $4$.\hskip-4pt} Note that a
failure of \threejac\ on $Z \subset Y$ such that $\lvert Z \rvert = 3$ implies a
failure of \fourjac\ on $Y$. And since the arguments for the case where
$\lvert Y \rvert = 3$ account for the case where \threejac\ fails, we henceforth
assume that \threejac\ holds on $Y$. That is, our conditionally \emph{2}-diverse
representation $u^{\dd}$ will now satisfy equations \eqref{eq-xy}--\eqref{eq-xz}
with $\hat{\beta} = \beta$ if, and only if, \fourjac\ holds on $Y$.

First some some useful results that exploit \threejac.
\begin{proposition}\label{lem-2r3}
  If $ Y = \{x, y, z, w\} $ and \threejac\ holds for $u^{\dd}$, then, for every
  $\acute{v}^{\dd} = u ^{\dd} \times \acute{\eta}^{\dd}$ with rank
  $ \acute{\mathbf r}$, that satisfies \threejac,
  $2\leq \acute{\mathbf r}\leq 3$.
\end{proposition}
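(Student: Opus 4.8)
The plan is to prove the two inequalities separately, both by elementary linear algebra, using the Jacobi structure of $\acute{v}^{\dd}$ for the upper bound and the conditional $2$-diversity of $u^{\dd}$ for the lower bound. The one structural fact I would record at the outset is that, by the defining relation $\acute{v}^{\dd}=u^{\dd}\times\acute{\eta}^{\dd}$, the restriction of each row $\acute{v}^{(a,b)}$ to the columns indexed by $\mbbt$ is exactly $u^{(a,b)}$; that is, $u^{\dd}$ is the matrix obtained from $\acute{v}^{\dd}$ by deleting the column indexed by $[\novel]$.

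For the upper bound $\acute{\mathbf r}\le 3$: since $\acute{v}^{\dd}$ satisfies \threejac, the identity $\acute{v}^{(a,c)}=\acute{v}^{(a,b)}+\acute{v}^{(b,c)}$ holds whenever $a,b,c$ lie in a common three-element subset of $Y$. Taking $a=b=c$ gives $\acute{v}^{(a,a)}=0$ and then $a=c$ gives the skew-symmetry $\acute{v}^{(a,b)}=-\acute{v}^{(b,a)}$, so that, fixing $x\in Y$, every row can be rewritten $\acute{v}^{(a,b)}=\acute{v}^{(a,x)}+\acute{v}^{(x,b)}=\acute{v}^{(x,b)}-\acute{v}^{(x,a)}$. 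Hence — the diagonal rows included — the entire matrix lies in the span of the three vectors $\acute{v}^{\xy},\acute{v}^{\xz},\acute{v}^{\xw}$, and therefore $\acute{\mathbf r}\le 3$.

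For the lower bound $\acute{\mathbf r}\ge 2$: by the structural fact above, any linear relation among the rows of $\acute{v}^{\dd}$ restricts to a linear relation among the corresponding rows of $u^{\dd}$, so deleting a column cannot raise the rank and $\acute{\mathbf r}=\rank(\acute{v}^{\dd})\ge\rank(u^{\dd})=\mathbf r$. Since $\preceqb_{\mbbj}$ satisfies \ref{KQ}--\ref{c2dQ} throughout this section, $u^{\dd}$ is the conditionally $2$-diverse pairwise representation of $\ext$, whence $\mathbf r\ge 2$ by \cref{lem-c2dQ}. Combining the two bounds yields $2\le\mathbf r\le\acute{\mathbf r}\le 3$, as required. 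I expect the only delicate point to be the bookkeeping in the Jacobi reduction — verifying that the degenerate instances with some of $a,b,x$ coinciding are covered, so that the three-vector span really is the whole matrix and not merely its off-diagonal part — and this is immediate from the zero-diagonal and skew-symmetry consequences of \threejac\ just noted.
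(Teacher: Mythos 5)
Your overall architecture matches the paper's: the lower bound comes from observing that $u^{\dd}$ is the column-restriction of $\acute{v}^{\dd}$ (so $\acute{\mathbf r}\geq \mathbf r$) together with \cref{lem-c2dQ} giving $\mathbf r \geq 2$, and the upper bound comes from showing that every row lies in the span of three rows anchored at a common element of $Y$ (you anchor at $x$ and span by $\{\acute{v}^{\xy},\acute{v}^{\xz},\acute{v}^{\xw}\}$; the paper anchors at $w$ and spans by $\{u^{\xw},u^{\yw},u^{\zw}\}$). The lower-bound half is fine as written.

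The gap is in your reading of the hypothesis that the matrices ``satisfy \threejac''. In this paper that phrase does not assert the literal identity $\acute{v}^{(a,c)}=\acute{v}^{(a,b)}+\acute{v}^{(b,c)}$ for the given matrix: it asserts that on each three-element subset of $Y$ the Jacobi identity holds for \emph{some} pairwise representation of the corresponding improper extension, i.e.\ only after positive row rescalings which may be chosen separately triple by triple. This is exactly how the paper's proof uses it: equations \eqref{eq-xy}--\eqref{eq-xz} hold with positive coefficients $\alpha,\beta,\hat{\beta},\gamma,\sigma,\tau$, and possibly $\beta\neq\hat{\beta}$. The proposition is invoked precisely in the regime where \fourjac\ may fail, so these rescalings cannot be assumed consistent across triples; under your literal reading the hypothesis would already force $\acute{v}^{\dd}$ to be a Jacobi matrix on all of $Y$ (your $a=b=c$ and $a=c$ specialisations presuppose this), which is what \fourjac\ delivers and would make the proposition vacuous where it is needed. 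Consequently your derivation of the zero diagonal and of skew-symmetry ``from \threejac'' does not follow from the actual hypothesis; those facts come instead from $u^{\dd}$ being a $2$-diverse pairwise representation (part \ref{skew-insep} of \cref{lem-insep} and the convention that $x=y$ implies a zero row) and from the construction of $\acute{\eta}^{\dd}$. The repair is small and lands you on the paper's argument: for each triple containing your base point there are \emph{positive} scalars with $\lambda\acute{v}^{(a,x)}+\mu\acute{v}^{(x,b)}=\nu\acute{v}^{(a,b)}$, and positivity (in particular $\nu\neq 0$) still places $\acute{v}^{(a,b)}$ in $\spann\{\acute{v}^{(x,a)},\acute{v}^{(x,b)}\}$, so $\acute{\mathbf r}\leq 3$; just carry the coefficients explicitly rather than suppressing them.
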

\begin{proof}[Proof of \cref{lem-2r3}]
  Via \cref{prop-c2dQ} and \cref{lem-Y3-r2}, $\mathbf r \geq 2 $. Indeed the
  span of $\{ u^{\xw}, u^{\yw}\}$ is two. Let $S $ denote the span of
  $ \{u^{\xw}, u^{\yw} ,u^{\zw}\}$.  Since $u^{\yw} = - u^{\wy}$ and $ u ^{\dd}$
  satisfies \threejac, equations \eqref{eq-xy}--\eqref{eq-xz} hold for
  $u^{\dd}$.  (If \fourjac\ fails to hold, then $\beta \neq \hat \beta$, but the
  equations still hold.) Thus $u^{\xy}$, $u^{\yz}$ and $u^{\xz}$ all belong to
  $S$ and $\mathbf{r} \leq 3$. Now note that above argument does not depend on
  the cardinality of $\mbbt$, thus take $\acute{\eta}^{\dd}$ to
  satisfy \threejac: indeed with the same parameters that feature in
  equations \eqref{eq-xy}--\eqref{eq-xz} for $u^{\dd}$. The preceding
  argument then extends \emph{mutatis mutandis} to $\acute{v}^{\dd}$ and
  $2 \leq \acute{\mathbf r} \leq 3$.
\end{proof}

\begin{proposition}\label{lem-arrangement-cardinality-rank}
  If $X = \{ x, y , z , w\}$, then $4 \leq \lvert \mc H \rvert \leq 6$ and
  these bounds are tight. If, moreover, \threejac\ holds for $u^{\dd}$ and
  $\lvert \mc H \rvert < 6 $, then $\mathbf{r} = 2$.
\end{proposition}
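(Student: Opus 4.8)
The plan is to treat the four assertions separately, the lower bound $\lvert\mc H\rvert\ge 4$ being the only substantial one. The upper bound is immediate: $Y$ has only $\binom{4}{2}=6$ two‑element subsets, so at most six distinct hyperplanes $H^{\{a,b\}}$. For the structure of the remaining bounds, note first that by \cref{lem-c2dQ} any two pair‑hyperplanes that share an element are distinct (they lie in a common triple on which \ref{c2dQ} holds); the only coincidences that can occur are therefore among the three \emph{complementary} pairs, $H^{\{x,y\}}=H^{\{z,w\}}$, $H^{\{x,z\}}=H^{\{y,w\}}$, $H^{\{x,w\}}=H^{\{y,z\}}$. In particular $\lvert\mc H\rvert\ge 3$, with equality only if all three of these coincidences hold; write $H_P$ for the common hyperplane $H^{\{a,b\}}=H^{\{c,d\}}$ associated with the partition $P=\{\{a,b\},\{c,d\}\}$ of $Y$.

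So the lower bound reduces to contradicting $\lvert\mc H\rvert=3$, i.e.\ $\mc H_{\mdoubleplus}=\{H_{P_1},H_{P_2},H_{P_3}\}$. The combinatorial key fact I would establish is: \emph{each chamber of $\mc H_{\mdoubleplus}$ is adjacent to at most one of the three hyperplanes.} Since chambers are open and positive rational vectors are dense in $\posreal^{\mbbt}$, every chamber $G$ contains a $J\in\mbbj$ avoiding all hyperplanes, so by \ref{KQ} and \ref{TQ} it determines a total order $T_G$ on $Y$. If $G$ has a facet in $H_P$ with $P=\{\{a,b\},\{c,d\}\}$, crossing that facet at a generic point — one off the other two hyperplanes, which is possible because $H_P$ is contained in neither — flips the signs of exactly the functionals with kernel $H_P$, namely the $a$–$b$ and $c$–$d$ comparisons, leaving the rest fixed; hence the neighbouring chamber's order is $T_G$ with the comparisons of $a$ with $b$ and of $c$ with $d$ reversed. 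For both orders to be transitive, $\{a,b\}$ and $\{c,d\}$ must each be pairs of consecutive elements of $T_G$ (else an element strictly between, say, $a$ and $b$ yields a $3$‑cycle after the flip). But a linear order of four elements has only three pairs of consecutive elements, while two distinct partitions of $Y$ into pairs have four distinct parts in all; so $T_G$ contains both parts of at most one partition, and $G$ is adjacent to at most one hyperplane.

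Given this, I would finish as follows. No two of $H_{P_1},H_{P_2},H_{P_3}$ can meet inside $\posreal^{\mbbt}$, since a common point would lie on facets of chambers adjacent to two of them. Each $H_{P_i}$ does meet $\posreal^{\mbbt}$ by part \ref{D-insep} of \cref{lem-insep}, and — being disjoint there from the other two — is adjacent to at least two chambers, one on each side of the cut it makes. Since the pairwise intersections inside $\posreal^{\mbbt}$ are empty, the intersection semilattice of $\mc H_{\mdoubleplus}$ is a least element with three incomparable atoms, exactly as in \cref{fig-hasse-Y3-r2}, so Zaslavsky's theorem gives $\countof\mc G_{\mdoubleplus}=4$. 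Counting incidences now gives the contradiction: every chamber is adjacent to exactly one hyperplane (at most one by the above, at least one because a full‑dimensional chamber has a facet, which lies in some hyperplane), whence $4=\countof\mc G_{\mdoubleplus}=\sum_G\#\{i:H_{P_i}\ \text{adjacent to}\ G\}=\sum_i\#\{G:H_{P_i}\ \text{adjacent to}\ G\}\ge 3\cdot 2=6$. Thus $\lvert\mc H\rvert\ge 4$. For tightness I would give explicit examples: $\lvert\mc H\rvert=6$ is realised by any $\preceq_{\mbbj}$ satisfying \fourdiv\ (by \gsii\ the six row differences of a representation then cut out six distinct hyperplanes), and $\lvert\mc H\rvert=4$ by the affine family with $\mbbt=\{s,t\}$, $\mathbf v(x,\cdot)=0$, $\mathbf v(y,\cdot)=a$, $\mathbf v(z,\cdot)=b$, $\mathbf v(w,\cdot)=a+b$, where $a,b\in\R^{\{s,t\}}$ are chosen with $a$, $b$, $a+b$, $b-a$ pairwise noncollinear and each of mixed sign (e.g.\ $a=(2,-3)$, $b=(-1,1)$); then $v^{\zw}=a=v^{\xy}$ and $v^{\yw}=b=v^{\xz}$ give exactly two coincidences, \ref{TQ}--\ref{c2dQ} hold by inspection, and $\lvert\mc H\rvert=4$.

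For the ``moreover'', assume \threejac\ holds for $u^{\dd}$ and $\lvert\mc H\rvert<6$, and show $\mathbf r=2$ (we already have $\mathbf r\ge 2$). Fixing $x$, the Jacobi identity on each triple $\{x,a,b\}$ with the uniqueness clause of \cref{lem-insep} gives $u^{(a,b)}\in\spann\{u^{(x,a)},u^{(x,b)}\}$ for all $a,b$, so every row of $u^{\dd}$ lies in $\spann\{u^{\xy},u^{\xz},u^{\xw}\}$ and $\mathbf r\le 3$ (also from \cref{lem-2r3}). If $\mathbf r=3$ then $\{u^{\xy},u^{\xz},u^{\xw}\}$ is independent, and all six hyperplanes are distinct: only complementary coincidences are possible, and $H^{\{y,z\}}=H^{\{x,w\}}$ would force $u^{\xw}\in\spann\{u^{\xy},u^{\xz}\}$ (since $u^{\yz}$ already lies there by Jacobi on $\{x,y,z\}$), contradicting independence — and symmetrically for the other two. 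Hence $\lvert\mc H\rvert=6$, contradicting $\lvert\mc H\rvert<6$, so $\mathbf r=2$. The main obstacle is the lower bound — in particular making the ``at most one adjacency'' bookkeeping precise and pairing it with the Zaslavsky chamber count to force $4\ge 6$; the upper bound, tightness, and the ``moreover'' are routine linear algebra and explicit examples.
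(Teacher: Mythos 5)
Your proof is correct, but your route to the lower bound differs from the paper's. The paper starts from \cref{prop-pairwise-extremal}, takes the polar-opposite chambers $G^{(x,y,z,w)}$ and $G^{(w,z,y,x)}$, and walks a convex path between them: \ref{TQ} forces each hyperplane crossed to flip only consecutive pairs of the current total order, and tracking the walk shows at least four distinct hyperplanes must be crossed, so at most two of the three complementary coincidences $H^{\{x,y\}}=H^{\{z,w\}}$, $H^{\{x,z\}}=H^{\{y,w\}}$, $H^{\{x,w\}}=H^{\{y,z\}}$ can occur. You instead reduce to excluding $\lvert \mc H\rvert = 3$ and rule it out locally: the same consecutive-pairs observation shows each chamber can have a facet in at most one doubled hyperplane, pairwise intersections of the three doubled hyperplanes inside $\posreal^{\mbbt}$ are then impossible, Zaslavski (as in \cref{eg-zaslavski}) gives exactly four chambers, and double-counting chamber--hyperplane incidences yields $6 \leq 4$, a contradiction. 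Both arguments hinge on the identical transitivity mechanism, but yours trades the paper's global path bookkeeping for a static incidence count, which is arguably easier to make airtight (the facet-crossing and disjointness claims are routine to formalise), while the paper's walk has the side benefit of implicitly exhibiting the two-coincidence configuration that makes the bound $4$ attainable; you compensate by giving explicit tightness examples, which the paper leaves implicit, and your $\lvert\mc H\rvert = 4$ example (affine rows $0, a, b, a+b$ with mixed-sign, pairwise noncollinear $a,b,a+b,b-a$) checks out against \cref{lem-insep} and \cref{lem-c2dQ}. For the ``moreover'' clause your argument is the contrapositive of the paper's: the paper shows \threejac\ plus one coincidence places all six rows in the two-dimensional span from \cref{eq-xy}, whereas you show $\mathbf r = 3$ plus \threejac\ forbids every coincidence; these are the same linear algebra, so no gap there either.
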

\begin{proof}[Proof of \cref{lem-arrangement-cardinality-rank}]
  The upper bound $\lvert \mc H \rvert \leq 6$ follows from the fact that there
  are $\binom{4}{2} = 6 $ ways to choose distinct pairs of elements from the
  four-element set $X$.  Via \cref{lem-c2dQ} only the following equalities are
  feasible: $H^{\{x,y\}} = H^{\{z,w\}}$, $H^{\{x,z\}} = H^{\{y,w\}}$ and
  $H^{\{x,w\}} = H^{\{y,z\}}$.  Via \cref{prop-pairwise-extremal}, \withoutlog, take
  $ G^{(x,y,z,w)}$ and $G^{(w,z,y,x)}$ to belong to $\mc G$. We consider a
  convex path $\phi$ from (some member of) $G^{(x,y,z,w)}$ to $G^{(w,z,y,x)}$
  such that $\countof \{G \in \mc G: (\image \phi) \cap G \neq \emptyset \} $ is
  maximal. Via \ref{TQ}, one of $H^{\{x,y\}} , H^{\{y,z\}}$ and $H^{\{z,w\}}$
  supports $G^{(x,y,z,w)}$.  If it is $H^{\{y,z\}}$, then so does $H^{\{x,w\}}$
  and, and since $x$ and $w$ lie at opposite ends of the ordering $(x,y,z,w)$,
  this would imply that $\lvert\mc H \rvert = 1$ in violation of
  \cref{lem-c2dQ}. Since we are seeking a greatest lower bound on
  $\lvert \mc H \rvert$, take $ H_{1} = H^{\{x,y\}} = H^{\{z,w\}}$ to be the
  first hyperplane to intersect $\image \phi$ and the adjacent chamber is
  $G^{(y,x,w,z)}$.  Via \ref{TQ}, the second hyperplane to intersect
  $\image \phi$ is $H_{2} \defeq H^{\{x,w\}}$. If $H^{\{x,w\}} = H^{\{y,z\}}$
  then since $y$ and $z$ lie at opposite ends of $(y,x,w,z)$, this would once
  again lead to a violation of \cref{lem-c2dQ}. Then $G^{(y,w,x,z)}$ is next
  chamber to intersect $\image \phi$. Since we are seeking a greatest lower
  bound, via \ref{TQ}, take $ H_{3} \defeq H^{\{y,w\}} = H^{\{x,z\}}$ to be the
  third hyperplane to intersect $\image \phi$. Then $G^{(w,y,z,x)}$ is the next
  chamber to intersect $\image \phi$. The final hyperplane to intersect
  $\image \phi$ is $H_{4} = H^{\{y,z\}}$ which brings us to the destination
  chamber $G^{(w,z,y,x)}$.  We conclude that at most two pairs of the six
  hyperplanes coincide, so that $\lvert \mc H \rvert \geq 4 $.
  
  We now prove that \threejac\ and $ H^{\{x,y\}} = H^{\{z,w\}}$ together imply
  $\mathbf r = 2$. Consider equations \eqref{eq-xy}--\eqref{eq-xz} (so
  \threejac\ holds, but \fourjac\ need not). Via \eqref{eq-xy},
  $S = \{ u^{\xw}, u^{\wy}, u^{\xy}\}$ is 2-dimensional. Since $ u^{\xy} $ and
  $ u^{\zw }$ are collinear, $ u^{\zw} $ belongs to $ S $. Finally, equations
  \eqref{eq-yz} and \eqref{eq-xz} yield $u^{yz} , u^{\xz} \in S $.
\end{proof}
\begin{lemma}\label{lem-r3-Y4}
  If $Y = \{x,y,z,w\}$, $ \mathbf{r} = 3 $ and \threejac\ holds on $Y$, then \fourpru\
  and \fourjac\ hold on $Y$
\end{lemma}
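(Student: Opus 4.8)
The plan is to prove both assertions at once, by pushing everything through one fully understood arrangement. By \cref{lem-induction} it suffices to show that, under the standing hypotheses ($Y=\{x,y,z,w\}$, $\mathbf r=3$, and \threejac\ on $Y$), \fourpru\ and \fourjac\ are both true on $Y$; I will obtain \fourjac\ from a chamber count and then read off \fourpru\ from the very same extension.

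First I would fix $J\in\mbbj$ with $\ext_{J}$ total — such a $J$ exists by \cref{prop-pairwise-extremal}, since \ref{KQ}--\ref{AQ} and \twodiv\ hold — and invoke \cref{prop-central-testworthy} to obtain a testworthy $Y$-extension $\aext$ with $\aextb_{\novel}=\aextb_{J\times 0}^{-1}$, conditionally-$2$-diverse representation $\acute u^{\dd}$, central arrangement $\acute{\mc H}_{\mdoubleplus}$, rank $\acute{\mathbf r}=\mathbf r=3$, and — by the proof of that proposition — the same \threejac\ relations \eqref{eq-xy}--\eqref{eq-xz} as $u^{\dd}$. By \cref{lem-arrangement-cardinality-rank}, $\mathbf r=3$ forces $\countof\acute{\mc H}_{\mdoubleplus}=6$, so the six normal lines of $\acute{\mc H}_{\mdoubleplus}$ are pairwise distinct.

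The crux is to identify $\acute{\mc H}_{\mdoubleplus}$, as an oriented matroid, with the braid arrangement $A_{3}$ on four elements. The four \threejac\ relations are precisely the four ``triangle'' dependences $\{\acute u^{\xy},\acute u^{\yz},\acute u^{\xz}\}$, and, because the coefficients in \eqref{eq-xy}--\eqref{eq-xz} are positive, they carry the braid sign pattern; meanwhile any further dependence among the six vectors would, after rewriting three of them via the triangle relations in terms of a ``star'' $\{\acute u^{\xy},\acute u^{\xz},\acute u^{\xw}\}$, collapse the row rank to $2$, contradicting $\acute{\mathbf r}=3$. Hence the underlying matroid is $M(K_{4})$ with the orientation of $A_{3}$. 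A Zaslavsky computation on this lattice — entirely parallel to \cref{eg-zaslavski}, now summing over the $\binom{6}{k}$ (all central) subarrangements, with a rank drop only at the four triangles when $k=3$ — gives $\countof\acute{\mc G}=1+6+15+12-15+6-1=24=4!$; since $\acute{\mc H}_{\mdoubleplus}$ is central, $\acute{\mc G}_{\mdoubleplus}\equiv\acute{\mc G}$ by \cref{rem-lattice}, so $\countof\acute{\mc G}_{\mdoubleplus}=24$. Because the oriented matroid is that of $A_{3}$, $\aext_{J}$ is transitive for every $J\in\mbbjp$ (so \ref{TQ} holds for $\aext$) and the $24$ chambers of $\acute{\mc G}_{\mdoubleplus}$ realize all $4!$ total orders of $Y$. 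Together with \ref{KQ}--\ref{AQ} (valid since $\acute u^{\dd}$ is a $2$-diverse pairwise representation, \cref{lem-insep}), this says \fourdiv\ holds for $\aext$ on $Y$; the representation argument of \gsii\ then yields a Jacobi representation of $\aext$ on $Y$, i.e. \fourjac\ for $\aext$, and \cref{prop-central-testworthy} transfers it to \fourjac\ on $Y$.

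It remains to deduce \fourpru. Let $\hext$ be an arbitrary novel testworthy $Y$-extension, so $\hextb_{\novel}=\hextb_{J'\times 0}^{-1}$ with $\hextb_{J'\times 0}$ total; by nonrevision (part \ref{item-preservingQ} of \cref{def-extensionQ}), $\hextb_{J'\times 0}=\preceqb_{J'}\cap Y^{2}=\extb_{J'\times 0}$, so $\ext_{J'}$ is total. Rerunning the construction with $J'$ produces a testworthy $\aext$ with $\aextb_{\novel}=\aextb_{J'\times 0}^{-1}=\hextb_{\novel}$, hence a perturbation of $\hext$; it satisfies \ref{TQ}--\ref{AQ} by the previous paragraph, and $\countof\total(\aext)=4!$ is maximal, so $\aext$ is a nondogmatic perturbation. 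Thus \fourpru\ holds on $Y$, which, with \fourjac\ on $Y$, completes the equivalence in this case. The main obstacle is the oriented-matroid identification in the third paragraph: one must use the rank hypothesis $\mathbf r=3$ (to exclude spurious dependences and pin $\countof\mc H=6$) and the positivity of the \threejac\ coefficients (to fix the orientation) simultaneously, and only then do the Zaslavsky count and the ``all $4!$ chambers are total orders'' conclusion go through; the rest is bookkeeping with \cref{prop-central-testworthy} and nonrevision.
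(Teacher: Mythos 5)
Your proof is correct and shares the paper's skeleton for the \fourpru\ half: fix $J$ with $\ext_{J}$ total, invoke \cref{prop-central-testworthy} to obtain a central testworthy $\aext$ with $\acute{\mathbf r}=\mathbf r=3$, use \cref{lem-arrangement-cardinality-rank} to pin $\countof\acute{\mc H}=6$, run the rank form of Zaslavski's theorem with exactly the four triangle subarrangements dropping to rank $2$ (your sum $1+6+15+12-15+6-1=24$ is the paper's \cref{eq-zaslavski-4} with $\acute{\uptau}=4$), and conclude that $\aext$ is a nondogmatic perturbation, via nonrevision, of any novel testworthy extension built from the same total ranking. Where you genuinely diverge is the order of logic for \fourjac\ and the transitivity step. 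The paper disposes of \fourjac\ in two lines at the outset: if \threejac\ holds and \fourjac\ fails ($\hat\beta\neq\beta$), the star $\{u^{\xw},u^{\yw},u^{\zw}\}$ becomes linearly dependent, contradicting $\mathbf r=3$; it then gets transitivity of every chamber by a direct application of the Jacobi identity. You instead establish \fourdiv\ for $\aext$ first --- identifying the arrangement with the graphic oriented matroid $M(K_{4})$ of the braid arrangement, the underlying matroid coming from $\mathbf r=3$ and the orientation from the positive coefficients in the \threejac\ relations --- and only then extract \fourjac\ through \gsii's representation machinery and the transfer clause of \cref{prop-central-testworthy}. Both routes are sound, and your observation that the chamber count and the transitivity of every $\aextb_{K}$ need only \threejac\ (so that \fourjac\ can be harvested at the end) is a genuine, if heavier, alternative to the paper's leaner argument. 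Two small imprecisions are worth repairing: the claim that ``any further dependence among the six vectors would collapse the rank to $2$'' must be restricted to non-triangle three-element subsets and parallel pairs, since any four of the six rows are trivially dependent when $\acute{\mathbf r}=3$ (this restricted statement is exactly what the final step of \cref{lem-arrangement-cardinality-rank} delivers); and the signs of the three four-cycle circuits, which a full oriented-matroid identification nominally requires, are never argued --- though they are not needed, because consistency of each chamber's sign vector with the four positively-signed triangle circuits already forces transitivity on every triple, which is all the conclusion uses.
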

\begin{proof}[Proof of \cref{lem-r3-Y4}]
  To see that, \fourjac\ holds, we appeal to the proof of lemma 3 of \gsii: if
  \threejac\ holds and \fourjac\ does not, then $\{u^{\xw}, u^{\yw}, u^{\zw}\}$
  is linearly dependent. This is in contradiction of $\mathbf{r} = 3$.
  
  It remains for us to verify that \fourpru\ also holds. We will show that, for
  every $J\in \mbbj$ such that $\ext_{J}$ is total, the $Y$-extension $\aext$ of
  \cref{prop-central-testworthy} satisfies
  $\lvert \acute{\mc G}_{\mdoubleplus} \rvert = 4!$. We then confirm that
  $\aext$ satisfies \ref{TQ}, so that every member of $\acute{\mc G}$ is
  associated with a total ordering of $Y$. Then $\total(\acute{\ext})$ is
  maximal and, if $ \cext $ is any novel, testworthy extension that satisfies
  $ \cextb_{\novel }=\extb_{J }^{-1}$, then $\aext$ is a nondogmatic
  perturbation of $\cext$ that satisfies \ref{TQ}-\ref{AQ}.

  Since $\mathbf{r}=3$, the contrapositive of
  \cref{lem-arrangement-cardinality-rank} implies that
  $\lvert\mc H_{\mdoubleplus}\rvert = 6 $.  The proof that now follows is a
  simple extension of \cref{lem-Y3-r2} to allow for the fact that
  $\lvert Y \rvert=4$.  Fix any $J \in \mbbj$ such that $\ext_{J}$ is total and
  \withoutlog\ take $J>0$.  Via \cref{prop-central-testworthy}, there exists a
  testworthy $Y$-extension $\aext$ such that
  $ \aextb_{\novel }= \extb_{J }^{-1}$ is total and
  $L = (1- \lambda)J \times \lambda$ belongs to both $ \mbbjp$ and the center of
  $\acute{\mc H}_{\mdoubleplus}$ and $\acute{v}^{\dd}$ satisfies \fourjac. Via
  \cref{rem-lattice}, since $\acute{\mc H}$ is central,
  $\acute{\mc G}_{\mdoubleplus}$ and $\acute{\mc G}$ are isomorphic, so we work
  with $\acute{\mc G}$.  Since $\mathbf r = 3$,
  \cref{lem-arrangement-cardinality-rank} implies $\lvert \mc H \rvert = 6$, so
  that the same is true of $\lvert \acute{\mc H} \rvert$.  The rank of
  subarrangements with cardinality $4$ or more is $\acute{\mathbf r}$.  Let
  $\acute{\mathbf{\uptau}}$ denote the number of subarrangements $\mc A $ that
  have cardinality $3$ and rank $2$. Each of the other
  $\binom{6}{3}- \acute{\uptau}$ subarrangements with cardinality $3$ have rank
  $ \acute{\mathbf r} $. All other subarrangements have rank equal to their
  cardinality.
\begin{linenomath*}
  \begin {equation}\label{eq-zaslavski-4}
    \begin{aligned}
      \lvert \acute{ \mc G} \rvert =&\binom{6}{6} (-1)^{6 -
        \acute{\mathbf{r}}}+\binom{6}{5}
      (-1)^{5 - \acute{\mathbf{r}}}+  \binom{6}{4}(-1)^{4-\acute{\mathbf{r}}}\\
      +&\binom{6}{3}(-1)^{3-\acute{\mathbf{r}}}-\,\,\,\,\acute{\mathbf{\uptau}}
      \,\,\,\,\,(-1)^{3-\acute{\mathbf{r}}}
      + \,\,\,\,\acute{\mathbf{\uptau}} \,\,\,\,\,(-1)^{3-2}\\
      +&\binom{6}{2}(-1)^{2-2}+\binom{6}{1}(-1)^{1-1} + \binom{6}{0}(-1)^{0-0}
    \end{aligned}
  \end{equation}
\end{linenomath*}  
Via \cref{prop-central-testworthy}, $\acute{\mathbf r} = \mathbf r$, so that
$\acute{\mathbf r} = 3$.  It remains for us to calculate the value of
$\acute{\uptau}$. Each of the $\binom{4}{3} = 4$ subsets of $Y$ that have
cardinality $3$ generates a subarrangement of cardinality $ \binom{3}{2} =
3$. (For instance,
$\mc A^{\{x,y,z\}} =
\{\acute{H}^{\{x,y\}},\acute{H}^{\{y,z\}},\acute{H}^{\{x,z\}}\}$.)  For such
subarrangements, \fourjac\ implies a rank of $2$. Arguments from the final step
in the proof of \cref{lem-arrangement-cardinality-rank} confirm that every other
subarrangement with cardinality $3$ has rank $ 3 $.  Substituting into
\cref{eq-zaslavski-4}, we obtain
\begin{linenomath*}
  \begin{equation*}
    \begin{aligned}
    \lvert  \acute{ \mc G} \rvert =
    -1+6-15+20-4-4+15+6+1=24 = 4!\hskip.5pt.
   \end{aligned}
 \end{equation*}
\end{linenomath*}
 The proof that every member of $\mc G$ is associated with a transitive ordering
 is as follows. Via \cref{prop-central-testworthy}, \fourjac\ holds for $\aext$
 if, and only if, it holds for $\ext$. We have seen that \fourjac\ holds. Then,
 via a straightforward application of the Jacobi identity and, for arbitary
 $K \in \mbbjp$, the definition of transitivity of $\aext_{K}$ yields
 \ref{TQ}. Thus, every member of $\acute{\mc G}$ is total.
  \end{proof}

  \paragraph{In the remaining case, where $Y = \{x,y,z,w\}$ and
    $\mathbf r = 2$,\hskip-8pt}
 the proof is complicated by the fact that maximally-diverse extensions have centerless
 arrangements.
 We begin by choosing $\acute{\eta}^{\dd}$ so as to construct
 $\acute{u}^{\dd} = u^{\dd} \times \acute{\eta}^{\dd}$ with $\acute{\mathbf r} = 3$.

 Since $\mathbf r = 2$, it follows that $u^{\xz}, u^{\yz}$ and $u^{\wz}$ form a
 linearly dependent set. Thus, for some $ \pi, \rho \in \R$,
\begin{linenomath*} 
  \begin{equation}\label{eq-iz}
   \pi  u^{\xz} + \rho u^{\yz} = u^{\wz}.
 \end{equation}
\end{linenomath*}
Fix arbitrary $J \in \mbbj$ such that $\ext_{J}$ is total, and, as in
\cref{prop-central-testworthy}, \withoutlog, we take
$J \in G_{\mdoubleplus}^{(x,y,z,w)}$.  Then, for $\acute{\iota} = \half $ and
$\acute{J}^{\{x,y,z\}} = (1- \acute{\iota}) J \times \acute{\iota} $, let
\begin{linenomath*}
  \begin{equation}\label{eq-xyz}
   \acute{\eta}^{\ij} = - \tfrac{1-\acute{\iota}}{\acute{\iota}} \langle
   u^{\ij}, J \rangle = - \langle u^{\ij}, J \rangle, \quad \text{for every $i,j\in \{x,y,z\}$.}
 \end{equation}
\end{linenomath*}
Note that $\acute{\eta}^{\yz}>0$ is determined by \cref{eq-xyz}. Since
 $G^{(x,y,z,w)}_{\mdoubleplus}$ is open and the inner product is continuous,
 there exists a compact neighbourhood $N_{J} \subset G^{(x,y,z,w)}_{\mdoubleplus}$ of
 $J$ such that, for every $L\in N_{J}$,
 $\langle u^{\yz}, L\rangle < - \langle u^{\wz}, L \rangle$ if, and only if,
 $\langle u^{\yz},J \rangle <-\langle u^{\wz},J \rangle$. For any such
 $L \neq J$, take $\acute{\lambda}$ and $\epsilon$ to be solutions to
\begin{linenomath*}
  \begin{equation}\label{eq-yzw}
    \acute{\eta}^{\yz} = -\tfrac{1 - \acute{\lambda}}{\acute{\lambda}} \langle
    u^{\yz}, L \rangle \quad \text{and} \quad \epsilon = \langle u^{\wz},
    \tfrac{1-\acute{\lambda}}{\acute{\lambda}} L - J  \rangle.
  \end{equation}
\end{linenomath*}
A rearrangement of the first equality in \cref{eq-yzw} and, via \cref{eq-xyz}, a
substitution for $\acute{\eta}^{\yz}$ yields
$\acute{\lambda} = \frac{\langle u^{\yz}, L\rangle}{ \langle u^{\yz}, J \rangle
  + \langle u^{\yz}, L \rangle}$ and
$ \frac{1- \acute{\lambda}}{\acute{\lambda}} = \frac{\langle u^{\yz},J
  \rangle}{\langle u^{\yz}, L \rangle} $. Since $N_{J} \subset G^{\yz}$,
$0 < \acute{\lambda}<1$ and
$\acute{L}^{\{y,z,w\}} \defeq (1 - \acute{\lambda}) L \times \acute{\lambda} \in
\posreal^{\mbbtp}$ for every such $L$. On $N_{J}$, the denominator of the map
$L \mapsto \acute{L}^{\{x,y,z\}}$ is bounded away from zero, and as the quotient
of continuous functions the map continuous, and, furthermore,
$\lim_{L \rightarrow J}\acute{L}^{\{y,z,w\}} =
\acute{J}^{\{y,z,w\}}$. Similarly, on $N_{J}$, the fact that
$\langle u^{\yz}, L \rangle>0$ ensures that the map
$L \mapsto \epsilon$ is well-defined, continuous and $\lim_{L \rightarrow J} \epsilon =
0$.\footnote{Substituting for $\frac{1- \acute{\lambda}}{\acute{\lambda}}$ in the
equation for $\epsilon$ in \eqref{eq-yzw} and rearranging yields a linear
equation
\begin{equation}\label{eq-k-linear}
  k \langle u^{\yz} , L \rangle =\langle u^{\wz} , L \rangle
  \end{equation}
  in $L$, where
  $k = \tfrac{\epsilon+\langle u^{\wz},J \rangle}{\langle
    u^{\yz},J\rangle}$. Note that, since
  $L \in G^{\zw}_{\mdoubleplus}\cap G^{\yz}_{\mdoubleplus}$,
  $\langle u^{\wz}, L \rangle <0 $ and $ \langle u^{\yz}, L \rangle > 0$;
  moreover, the same is true of $J$. Thus, $k<0$ or, equivalently,
  $\epsilon < -\langle u^{\wz},J \rangle$.  If
  $ \langle u^{\yz},L \rangle < - \langle u^{\wz},L \rangle $, then, via
  \cref{eq-k-linear}, $k < -1$. Moreover, our choice of $N_{J}$ is such that
  $\frac{\langle u^{\wz},J \rangle}{\langle u^{\yz},J \rangle} < -1 $. This
  ensures that, on $N_{J}$, $\epsilon$ lies in a neighbourhood of zero, as
  required.}

For every $i,j \in \{y,z,w\}$, let
$\acute{\eta}^{\ij} = - \frac{1 - \acute{\lambda}}{\acute{\lambda}} \langle
u^{\ij}, L \rangle $.  Now note that, via equations
\eqref{eq-iz}--\eqref{eq-yzw}, we obtain
\begin{linenomath*}
  \begin{equation}\label{eq-3d-epsilon}
    \pi  \acute{\eta}^{\xz} + \rho  \acute{\eta}^{\yz} = \acute{\eta}^{\wz} + \epsilon \neq
    \acute{\eta}^{\wz},
  \end{equation}
\end{linenomath*}
so that, for every $\epsilon\neq 0$,
$\{\acute{u}^{\ij} = u^{\ij} \times \acute{\eta}^{\ij}: \ij = \xz, \yz, \wz\}$
forms a linearly independent set. Let
$\{\acute{u}^{\ij}: i,j \in \{x,y,z\}\cup \{y,z,w\}\}$ be defined similarly. To
complete the definition of $\acute{u}^{\dd}$, we appeal to the fact that, via
\threejac, $u^{\dd}$ satisfies equations \eqref{eq-xy}--\eqref{eq-xz}. In
particular, from these equations, take parameters $\alpha$, $\beta$, and $\gamma$
and let $\acute{\eta}^{\xw}$ be the (unique) solution to the Jacobi identity
\begin{linenomath*}
  \begin{equation}\label{eq-xw}
    \alpha  \acute{\eta}^{\xw} =  \gamma  \acute{\eta}^{\xy} + \beta \acute{\eta}^{\yw} =
    -\langle \gamma  u^{\xy} , J \rangle -
    \tfrac{1-\acute{\lambda}}{\acute{\lambda}}\langle\beta u^{\yw}, L  \rangle.
  \end{equation}
\end{linenomath*}
For the same set of parameter values, $\acute{u}^{\dd}$ also satisfies
\eqref{eq-xy}--\eqref{eq-xz}. That is, for $\{x,y,z\}$, via \cref{eq-xyz} and \eqref{eq-xz},
$\gamma \acute{\eta}^{\xy}+\tau \acute{\eta}^{\yz} = \phi \acute{\eta}^{\xz}$,
so that $\acute{u}^{\dd}$ satisfies \eqref{eq-xz}. For $\{y,z,w\}$, Via \cref{eq-yzw} and
\eqref{eq-yz},
$\hat{\beta} \acute{\eta}^{\yw}+\sigma \acute{\eta}^{\wz} = \tau
\acute{\eta}^{\yz}$, so that $\acute{u}^{\dd}$ satisfies \eqref{eq-yz}. For $\{x,y,w\}$,
via \cref{eq-xw}, $\acute{u}^{\dd}$ satisfies \eqref{eq-xy}.  We now demonstrate
that for the final triple $\{x,z,w\}$,  the Jacobi identity holds if
$\hat{\beta} = \beta$, and  $\{\acute{u}^{\xw},\acute{u}^{\wz},\acute{u}^{\xz}\}$
has rank $3$ otherwise.

First extract the parameters from equations \eqref{eq-xy}--\eqref{eq-xz} to
obtain the matrix form
\begin{linenomath*}
  \begin{equation}\label{eq-matrix-6}
   \begin{blockarray}{ccccccc}
     \text{\footnotesize$\xw$} & \text{\footnotesize$\yw$} &
     \text{\footnotesize$\xy$} &\text{\footnotesize $\wz$}& \text{\footnotesize
       $\yz$} & \text{\footnotesize $\xz$}\\
     \begin{block}{[ccc|ccc]c}
       \alpha & -\beta &  -\gamma  & 0 & 0 & 0 & \text{\footnotesize\eqref{eq-xy}} \\
       0 & \hat{\beta} &  0 & \sigma&  -\tau & 0 & \text{\footnotesize\eqref{eq-yz}}\\
       0 & 0 & \gamma  & 0  & \tau & -\phi & \text{\footnotesize\eqref{eq-xz}}\\
     \end{block}
   \end{blockarray}
 \end{equation}
\end{linenomath*}
Since the triple $\{\acute{u}^{(i,z)}: i = x,y,w\}$ provides a basis for
$\spann(\acute{u}^{\dd})$, we will write all vectors in terms of this basis. To
this end, we derive the reduced row echelon form of \cref{eq-matrix-6}. In
particular, letting $r_{i}$ denote the rows of the matrix, we perform the
operation $r_{1} \mapsto r_{1} + \frac{\beta}{\hat{\beta}} r_{2} + r_{3}$ to
obtain
\begin{linenomath*}
  \begin{equation}\label{eq-matrix-6-echelon}
    \begin{blockarray}{ccccccc}
      \text{\footnotesize$\xw$} & \text{\footnotesize$\yw$} &
      \text{\footnotesize $\xy$} & \text{\footnotesize$\wz$} &
      \text{\footnotesize
        $\yz$} & \text{\footnotesize $\xz$}\\
      \begin{block}{[ccc|ccc]c}
        \alpha & 0 & 0 & \frac{\beta}{\hat{\beta}} \sigma&
        (1-\frac{\beta}{\hat{\beta}})
        \tau & -\phi & \text{\footnotesize\eqref{eq-xy}} \\
        0 & \hat{\beta} &  0 & \sigma&  -\tau & 0 & \text{\footnotesize\eqref{eq-yz}}\\
        0 & 0 & \gamma  & 0  & \tau & -\phi & \text{\footnotesize\eqref{eq-xz}}\\
      \end{block}
    \end{blockarray}.
  \end{equation}
\end{linenomath*}
In \cref{eq-matrix-6-echelon}, the fact that $\hat{\beta}$ (instead of $\beta$)
that appears as a pivot in column $2$ indicates that, in this derivation, we
are, \withoutlog, choosing $\acute{v}^{\yw} = \hat{\beta}
\acute{u}^{\yw}$. \emph{Mutatis mutandis}, the conclusions we will draw are the
same if instead we chose to define $\acute{v}^{\yw}$ using $\beta$. The other
(relevant) rows of $\acute{v}^{\dd}: Y^{2} \times \mbbtp \rightarrow \R $ are
$\acute{v}^{\xw} = \alpha \acute{u}^{\xw}$,
$\acute{v}^{\xy} = \gamma \acute{u}^{\xy}$,
$\acute{v}^{\zw}= \sigma\acute{u}^{\zw}$,
$\acute{v}^{\yz} = \tau \acute{u}^{\yz}$ and
$\acute{v}^{\xz} = \phi \acute{u}^{\xz}$.  
\begin{figure}[t]
    \begin{center}
      \begin{tikzpicture}
        
    \node[color=lightgray] (max) at (0,4) {\footnotesize$\{x,y,z,w\}$};
    
    \node (xyz) at (-6.55,2.75){\footnotesize$\{x,y,z\}$};

    \node (xyw) at (-4.66,2.75) {\footnotesize$\{x,y,w\}$};

    \node[color=lightgray] (xzlyw) at (-2.5,2.75) {\footnotesize$\{x,z\}\{y,w\}$};

    \node (xylzw) at (0,2.75) {\footnotesize$\{x,y\}\{z,w\}$};

    \node (xwlyz) at (2.5,2.75) {\footnotesize$\{x,w\}\{y,z\}$};

    \node (xzw) at (4.66,2.75) {\footnotesize$\{x,z,w\}$};

    \node (yzw) at (6.55,2.75) {\footnotesize$\{y,z,w\}$};
    
    \node (xy) at (-5,0.25) {\footnotesize$\{x,y\}$};

    \node (xz) at (-3,0.25) {\footnotesize$\{x,z\}$};

    \node (xw) at (3,0.25) {\footnotesize$\{x,w\}$};
    
    \node (yz) at (-1,0.25) {\footnotesize$\{y,z\}$};
   
    \node (yw) at (1,0.25) {\footnotesize$\{y,w\}$};

    \node (zw) at (5,0.25) {\footnotesize$\{z,w\}$};
    
    \node (min) at (0,-1) {\footnotesize$\emptyset$};

    \draw (min) -- (xy);
    \draw (min) -- (xz);
    \draw (min) -- (xw);
    \draw (min) -- (yz);
    \draw (min) -- (yw);
    \draw (min) -- (zw);

  \draw[color=lightgray] (max) -- (xyz);
    \draw[color=lightgray] (max) -- (xyw);
    \draw[color=lightgray] (max) -- (xzw);
    \draw[color=lightgray] (max) -- (yzw);
    \draw[color=lightgray] (max) -- (xylzw);
    \draw[color=lightgray] (max) -- (xzlyw);
    \draw[color=lightgray] (max) -- (xwlyz);

    \draw[color=blue](xyz)--(xy);
    \draw[preaction={draw=white,-,line width=3pt},color=blue] (xzw)  -- (xz);
    \draw[preaction={draw=white,-,line width=3pt},color=blue] (yzw) -- (yz);
    \draw[preaction={draw=white,-,line width=3pt},color=blue] (xzw)  -- (zw);
    \draw[preaction={draw=white,-,line width=3pt},color=blue] (yzw) -- (yw);
    \draw[preaction={draw=white,-,line width=3pt},color=lightgray] (xzlyw) -- (yw);
    \draw[preaction={draw=white,-,line width=3pt},color=patrickcolor1] (xwlyz) -- (yz);
    \draw[preaction={draw=white,-,line width=3pt},color=patrickcolor1] (xwlyz) -- (xw);
    \draw[preaction={draw=white,-,line width=3pt},color=blue] (xyw) -- (xw);
    \draw[preaction={draw=white,-,line width=3pt},color=blue] (xyw) -- (yw);
    \draw[preaction={draw=white,-,line width=3pt},color=lightgray] (xzlyw) -- (xz);
    \draw[preaction={draw=white,-,line width=3pt},color=blue] (xyz) -- (yz);
    \draw[preaction={draw=white,-,line width=3pt},color=patrickcolor1] (xylzw) -- (xy);
    \draw[preaction={draw=white,-,line width=3pt},color=patrickcolor1] (xylzw) -- (zw);
    \draw[preaction={draw=white,-,line width=3pt},color=blue] (xyw)-- (xy);
    \draw[preaction={draw=white,-,line width=3pt},color=blue] (xzw)  -- (xw);
    \draw[preaction={draw=white,-,line width=3pt},color=blue] (yzw) -- (zw);
    \draw[preaction={draw=white,-,line width=3pt},color=blue] (xyz) -- (xz);
\end{tikzpicture}
\caption{\label{fig-hasse-centerless-Y4} The intersection semilattices
  $\acute{\mc L}$ and
  $\acute{\mc L}_{\mdoubleplus} = \acute{\mc L} \bs \{A^{Y},
  \acute{A}^{\{x,z\}\{y,w\}}\}$ when $\countof \mbbt=2$, $\hat{\beta} = \beta$
  and $\epsilon$ is sufficiently small but distinct from zero.}
\end{center}
\end{figure}
The matrix of the equation that now follows, is invertible if, and only if,
$(1-\frac{\beta}{\hat{\beta}}) \neq 0$.
\begin{linenomath*}
  \begin{equation}\label{eq-matrix-3}
    \begin{blockarray}{[c]}
      \acute{v}^{\xw}\\
      \acute{v}^{\xz}\\
      \acute{v}^{\zw}
    \end{blockarray}
    =
    \begin{blockarray}{[ccc]}
      -\frac{\beta}{\hat{\beta}}  & -(1-\frac{\beta}{\hat{\beta}}) & 1 \\
      0 & 0 & 1\\
      -1 & 0 & 0\\
    \end{blockarray}
    \begin{blockarray}{[c]}
      \acute{v}^{\wz}\\
      \acute{v}^{\yz}\\
      \acute{v}^{\xz}
    \end{blockarray}
  \end{equation}
\end{linenomath*}
Thus, unless $\hat{\beta} = \beta$, we conclude that
$\{\acute{v}^{\xw}, \acute{v}^{\xz},\acute{v}^{\zw}\}$ has the same rank as
$\{\acute{v}^{\wz}, \acute{v}^{\yz}, \acute{v}^{\xz}\}$ which, by construction,
has rank $3$ for every choice of $\epsilon \neq 0$.

Since $\hat{\beta}= \beta$ if, and only if, \fourjac\ holds for $\ext$, we
conclude that \fourjac\ holds for $\ext$ if, and only if, it holds for $\aext$.
It remains for us to show that, for $\epsilon$ sufficiently small
$\acute{\mc G}$ is maximal. For if $\acute{\mc G} $ is maximal, then via
\cref{lem-Y3-r2} and \cref{lem-Y3-r3}, \ref{TQ} holds if, and only if,
$\rank\{\acute{v}^{\xw}, \acute{v}^{\xz},\acute{v}^{\zw}\} = 2$.

By Zaslavski's theorem, it suffices to show that every member of the
intersection lattice $\acute{\mc L}$, other than the center
$\acute{A}^{\{x,y,z,w\}}$, has nonempty intersection with
$\posreal^{\mbbtp}$. We now make explicit the dependence of the extension
$\aext $ on our choice of $L \in N_{J}$, though we do so indirectly via
$\epsilon$.  Let
$d^{\epsilon} = \max \{\diff(\acute{J},\acute{A}): \acute{A} \in \acute{\mc
  L}^{\epsilon}$, where $\diff(\acute{J},\acute{A})$ is the minimum (Euclidean)
distance between $\acute{J}$ and the (closed) linear subspace $\acute{A}$ of
$\R^{\mbbtp}$. Note that for $\epsilon = 0 $, we obtain a central arrangement of
the form of \cref{prop-central-testworthy} with $d^{\epsilon} = 0$. Moreover,
since the Euclidean metric is continuous in its arguments, and, for every
$\epsilon$, $ \acute{\mc L}^{\epsilon}$ is finite, the map
$\epsilon \mapsto d^{\epsilon}$ is continuous and
$\lim_{\epsilon \rightarrow 0}d^{\epsilon} = 0$. Thus, for sufficiently small
$\epsilon \neq 0$, every $\acute{A} \in \acute{\mc L}^{\epsilon}$ intersects
$\posreal^{\mbbtp}$.

\begin{remark*}
  We note that the above arguments apply without modification to the case where
  $\countof \mc H = 4,5$. Consider, for example, the Hasse diagram of
  \cref{fig-hasse-centerless-Y4}. That case arises when $u^{\xz}$ and $u^{\yw}$
  are collinear, so that $A^{\{xz\}\{y,w\}}$ is a hyperplane of dimension
  $\countof \mbbt -1$. Assuming the same construction, with $\epsilon \neq 0$,
  so that $\acute{u}^{\dd}$ has rank $3$ and, via
  \cref{lem-arrangement-cardinality-rank}, $\acute{u}^{\xz}$ and
  $\acute{u}^{\yw}$ are linearly independent. Thus $\acute{A}^{\{xz\}\{y,w\}}$ is
  of dimension $\countof \mbbtp - 2 = \countof \mbbt - 1$. Thus,
  $\acute{A}^{\{x,z\}\{y,w\}} = A^{\{x,z\}\{y,w\}} \times \{0\} $ which belongs
  to the boundary of $\posreal^{\mbbtp}$. At $\epsilon = 0$,
  $\acute{A}^{\{x,z\}\{y,w\}}$ increases by one dimension and the upper two
  levels of the Hasse diagram collapse to equal $A^{\{x,y,z,w\}}$.
\end{remark*}
\newpage
\makeatletter
\def\@seccntformat#1{Online Appendix\,\csname the#1\endcsname.\quad}
\makeatother
\section{Proofs}
\printProofs
\end{appendices}


\end{document}